\documentclass[prl,superscriptaddress,amsmath,amssymb,twocolumn,nofootinbib]{revtex4-2}

\usepackage[T1]{fontenc}
\usepackage{bm,amsmath}
\usepackage[usenames,dvipsnames]{xcolor}
\usepackage[normalem]{ulem}

\usepackage[colorlinks,citecolor=NavyBlue,linkcolor=NavyBlue,urlcolor=NavyBlue,]{hyperref}

\usepackage{amsthm}
\usepackage{enumerate}
\usepackage{orcidlink}

\def\sgn{\operatorname{sgn}}

\newtheorem{theorem}{Theorem}
\newtheorem{corollary}{Corollary}
\newtheorem{lemma}{Lemma}
\newtheorem{proposition}{Proposition}

\newcommand{\titleinfo}{
Williamson majorization theory of fermionic non-Gaussianity}

\newcommand{\Tr}{\text{Tr}}   
\newcommand{\tr}{\operatorname{tr}}
\newcommand{\rank}{\operatorname{rank}}
\newcommand{\Ran}{\operatorname{Image}}
\newcommand{\supp}{\operatorname{supp}}
\newcommand{\id}{\mathbb I}
\newcommand{\Gam}{\Gamma}
\newcommand{\cJ}{\mathfrak J}
\newcommand{\cH}{\mathcal H}
\newcommand{\Faf}{\mathcal F}
\newcommand{\rvec}{\bm r}
\newcommand{\ketbra}[2]{|#1\rangle\!\langle#2|}
\newcommand{\nuG}{\nu_{\rm G}}
\newcommand{\Mocc}{M_{\rm occu}}

\begin{document}
\title{\titleinfo}

\author{ 
Xhek Turkeshi~\orcidlink{0000-0003-1093-3771}}
\email{turkeshi@thp.uni-koeln.de}
\affiliation{Institut f\"ur Theoretische Physik, Universit\"at zu K\"oln, Z\"ulpicher Strasse 77, 50937 K\"oln, Germany}

\author{Piotr Sierant~\orcidlink{0000-0001-9219-7274}}
\affiliation{Barcelona Supercomputing Center, Barcelona 08034, Spain}

\author{ 
Poetri Sonya Tarabunga~\orcidlink{0000-0001-8079-9040}}
\email{poetri.tarabunga@tum.de}
\affiliation{Technical University of Munich, TUM School of Natural Sciences,
Physics Department, 85748 Garching, Germany}
\affiliation{Munich Center for Quantum Science and Technology (MCQST),
Schellingstr. 4, 80799 M\"unchen, Germany}
    
\begin{abstract}
Pure-state entanglement rests on a single algebraic backbone: majorization of the Schmidt spectrum governs state conversion  under local operations and classical communication, and constrains entanglement monotones. 
Here we establish a corresponding majorization law for fermionic non-Gaussianity, the resource that elevates free fermions to universal quantum computation. 
Under any fermionic Gaussian protocol with pure state outcomes, the Williamson spectrum of a pure state's Majorana covariance matrix is weakly majorized by its ensemble average.
This spectral law mirrors that of entanglement theory. It turns computable non-Gaussianity quantifiers such as fermionic antiflatness and occupation entropies into strong monotones for fermionic non-Gaussianity, and delivers necessary conditions and converse bounds on state conversion under Gaussian protocols.
When fermion parity is conserved, no catalyst can remove a majorization obstruction---unless it carries parity coherence---and asymptotic interconversion is irreversible already for pure states. 
All relevant quantities are accessible from two-point Majorana correlators,  turning the theory developed here into experimentally observable properties of quantum matter, testable on present-day quantum devices.
\end{abstract}

\maketitle
\textit{Introduction.---}
Majorization of the Schmidt spectrum provides a unifying framework for manipulating bipartite pure-state entanglement~\cite{ChitambarGour2019,AmicoFazioOsterlohVedral2008,HorodeckiEtAl2009,NielsenVidal2001}.
Nielsen's theorem shows that it completely determines whether one pure state can be converted deterministically into another by local operations and classical communication (LOCC)~\cite{Nielsen1999}. 
When deterministic conversion is impossible, the same spectral structure determines the best achievable success probability as shown by Vidal~\cite{Vidal1999}, and it also underlies more general transformations into ensembles~\cite{JonathanPlenio1999a} and catalytic transformations, in which an auxiliary entangled state enables otherwise impossible conversions without being consumed~\cite{JonathanPlenio1999b,Klimesh2007,Turgut2007,KondraDattaStreltsov2021}.
In the many-copy limit, pure-state entanglement manipulation becomes reversible, with the entropy of entanglement setting the conversion rate~\cite{BennettEtAl1996,PopescuRohrlich1997,DonaldHorodeckiRudolph2002}. 
Thus, much of pure-state entanglement manipulation can ultimately be understood in terms of the ordering of Schmidt spectra. The same majorization order recurs across resource theories: pure-state coherence under incoherent operations~\cite{DuBaiGuo2015,StreltsovAdessoPlenio2017}, athermality under thermal operations~\cite{HorodeckiOppenheim2013,BrandaoEtAl2015}, and nonuniformity under noisy operations~\cite{GourEtAl2015}.

This work establishes the analogous ordering principle for \emph{fermionic non-Gaussianity}. Here, fermionic Gaussian states---the states of non-interacting fermions, generated by matchgate circuits~\cite{Valiant2002, TerhalDiVincenzo2002, Knill2001, JozsaMiyake2008} and completely described by their Majorana covariance matrix~\cite{Bravyi2005,DiVincenzoTerhal2005,PeschelEisler2009,SuraceTagliacozzo2022}---constitute the free sector, and the non-Gaussianity injected by interactions is the resource. The distinction between Gaussian and genuinely interacting states underlies mean-field descriptions of fermionic matter~\cite{FetterWalecka03,Negele18quantum,AltlandSimons10,deGennes99,Schrieffer18}, 
controls the cost of classical simulation~\cite{DiasKoenig2024, ReardonSmithOszmaniecKorzekwa2024}, governs the universality and quantum advantage of fermionic
computation~\cite{bravyi2002fermionic,HebenstreitJozsaKrausStrelchuk2019,
PhysRevA.102.052604,Oszmaniec22}, and sets the complexity of learning and
certifying fermionic states~\cite{PhysRevLett.120.190501,
Bittel2025paclearningoffree,MeleHerasymenko2025}. Computable and measurable quantifiers of this resource, including the fermionic antiflatness~\cite{SierantStornatiTurkeshi2026} and the occupation entropies~\cite{TarabungaEtAl2026}, are built from the Williamson spectrum of the covariance matrix~\cite{GigenaDiTullioRossignoli2020,SierantStornatiTurkeshi2026,TarabungaEtAl2026,HaugTurkeshiSierant2026,Pachos2022quantifying,PhysRevResearch.6.023176,debertolis2025naturalsuperorbitalsrepresentationmanybody,lyu2024fermionicgaussiantestingnongaussian,coffman2025measuringnongaussianmagicfermions}, with strong monotonicity under Gaussian protocols established only in two specific cases~\cite{LeoneBittel2026,TarabungaEtAl2026}. 
We show that, analogously to the Schmidt spectrum in entanglement theory, the Williamson spectrum obeys a fundamental majorization law: under any Gaussian protocol applied to a pure state, the input spectrum is weakly majorized by the ensemble-averaged spectrum of the outputs (Theorem~\ref{thm:main}). 

This single spectral law has several immediate consequences. 
It turns every convex Williamson deficit into a strong monotone, including the full hierarchies of fermionic antiflatness and occupation entropies, and reduces the characterization of spectral monotones to a condition on their generating function. 
It also yields Nielsen- and Vidal-type conversion bounds that are optimal within the class of spectral monotones, and shows that Gaussian nullity remains monotone even under postselection. 
When fermion parity is conserved, the structure becomes even sharper: Williamson spectra concatenate, making the spectral measures additive, and no parity-definite catalyst can enlarge the majorization order. Parity coherence, a genuinely fermionic resource, can restore catalysis, but the resulting gain is universally bounded by one maximally non-Gaussian mode. Meanwhile asymptotic interconversion can be strongly irreversible already among pure states.

\textit{Setting.---}
We consider $N$ fermionic modes, described by $2N$ Majorana operators $\gamma_a=\gamma_a^\dagger$ with $\{\gamma_a,\gamma_b\}=2\delta_{ab}$, and characterize a state $\rho$ by its covariance matrix
\begin{equation}
  (\Gam_\rho)_{ab}:=-\frac{i}{2}\Tr\bigl(\rho[\gamma_a,\gamma_b]\bigr),
  \label{eq:covariance}
\end{equation}
a real antisymmetric matrix that an orthogonal change of Majorana basis brings to the canonical form $\bigoplus_j \left(\begin{smallmatrix}0&r_j\\-r_j&0\end{smallmatrix}\right)$ with $1\ge r_1\ge\cdots\ge r_N\ge0$~\cite{Williamson1936,Bhatia1997}. The Williamson values $r_j(\Gam_\rho)$ are invariant under Gaussian unitaries, which act as special orthogonal rotations of the Majorana basis; positivity of the state gives $r_j\le1$, and a pure state is fermionic Gaussian if and only if every $r_j$ equals one~\cite{Bravyi2005,SuraceTagliacozzo2022}. The deviation of the spectrum $\rvec(\Gam_\rho)=(r_1,\ldots,r_N)$ from the Gaussian spectrum $\rvec_\mathrm{G}=(1,\ldots,1)$ therefore quantifies pure-state non-Gaussianity. The free operations are Gaussian protocols~\cite{TarabungaEtAl2026,tarabunga2026fermionic,LeoneBittel2026}: Gaussian unitaries, pure Gaussian ancillas, occupation-number measurements, mode discards, and classical feedforward, mapping an input $\rho$ to an outcome ensemble $\{(p_s,\rho_s)\}_s$. 
Throughout the main text we restrict to protocols whose outcome ensembles consist of pure states, denoted $\{(p_s,\phi_s)\}_s$; mixed outcomes are discussed in the SM~\cite{SuppMat}.
A nonnegative functional $M$ is a pure-state \emph{strong monotone} if $M(\psi)\ge\sum_sp_sM(\phi_s)$ for every Gaussian protocol mapping a pure input $\psi$ to such an ensemble~\cite{ChitambarGour2019,Nielsen1999}.

\textit{Williamson measures.---}
Every non-Gaussianity measure we consider is a sum over the spectrum of a scalar function
applied to each Williamson value, measured from unity. For
$f\colon [0,1]\to\mathbb R$ we call
\begin{equation}
  \Phi_f(\rho):=N f(1)-\sum_{j=1}^{N}f\!\left(r_j(\Gam_\rho)\right)
  =\sum_{j=1}^{N}\left[f(1)-f\!\left(r_j(\Gam_\rho)\right)\right]
  \label{eq:fw-definition}
\end{equation}
the \emph{$f$-Williamson measure} of $\rho$. We refer to $\Phi_f$, and more generally to any nonnegative functional of the spectrum that vanishes on the Gaussian ceiling $\rvec_\mathrm{G}$, as a \textit{spectral deficit}.
Being a function of the
spectrum alone, $\Phi_f$ is automatically invariant under Gaussian
unitaries, and it is manifestly computable from two-point correlators.
Two choices of $f$ recover the quantifiers already in use: $f(t)=t^{2k}$
gives the fermionic antiflatness $\Faf_k$ of Ref.~\cite{SierantStornatiTurkeshi2026}, and
$f(t)=-h_\alpha\!\left(\frac{1+t}{2}\right)$, with ${h_\alpha(p)=\tfrac{p^\alpha + (1-p)^\alpha-1}{1-\alpha}}$ the binary Tsallis entropy, gives the
occupation-number entropies $\Mocc^{[\alpha]}$ of
Ref.~\cite{TarabungaEtAl2026}.

\textit{Williamson majorization.---}
We now establish the key result of this work, a majorization relation
obeyed by the Williamson spectrum. We write $\widetilde\rvec$ for the Williamson spectrum padded by unit entries to a common length, accounting for Gaussian ancillas.
For a covariance matrix $\Gam$ we write
${S_\ell(\Gam):=\sum_{j\le\ell} \widetilde r_j(\Gam)}$ for the sum of the largest $\ell$ (padded)
Williamson values, which we denote as partial sums. For nonnegative decreasing vectors $x,y$ we write
$x\prec_w y$ when $\sum_{j\le\ell}x_j\le\sum_{j\le\ell}y_j$ for all
$\ell$~\footnote{This is weak majorization, which, unlike ordinary majorization, does
not require equality of the total
sums~\cite{Bhatia1997, MarshallOlkinArnold2011}}.

\begin{theorem}[Williamson majorization]
\label{thm:main}
Let $\psi$ be a pure $N$-mode state and let $\{(p_s,\phi_s)\}_{s}$ be
the ensemble produced by any Gaussian protocol.
Then
\begin{equation}
  S_\ell(\Gam_{{\psi}})\ \le\ \sum_{s}p_s\,S_{\ell}(\Gam_{{\phi_s}})
  \qquad \forall\,\ell=1,\ldots,N,
  \label{eq:partialsum-main}
\end{equation}
or equivalently, in vector form,
\begin{equation}
  \widetilde\rvec(\Gam_{{\psi}})\ \prec_w\
  \Bigl(\textstyle\sum_s p_s \widetilde r_1(\Gam_{{\phi_s}}),\ldots,
  \sum_s p_s \widetilde r_{N}(\Gam_{{\phi_s}})\Bigr).
  \label{eq:vector-majorization}
\end{equation}
\end{theorem}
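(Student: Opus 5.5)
The plan is to reduce a general Gaussian protocol to its elementary steps, check the partial-sum inequality \eqref{eq:partialsum-main} for each step, and then chain the steps together.

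\emph{Composition.} Each outcome branch of a protocol is a sequence of Gaussian unitaries, appended Gaussian ancillas, occupation-number measurements (with classical feedforward choosing later operations), and mode discards. Suppose each elementary step obeys \eqref{eq:partialsum-main} from its input state to its conditional outputs. Then the full statement follows by induction along the protocol tree. The right-hand side is linear in the ensemble, and the inequality holds branch by branch with conditional probabilities, so averaging over earlier outcomes preserves it.

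\emph{Trivial steps.}
\begin{itemize}
\item Gaussian unitaries act as $\Gam\mapsto O\Gam O^T$ with $O\in SO(2N)$, so they leave $\rvec$ unchanged.
\item A pure Gaussian ancilla appends unit Williamson values. These are absorbed by the padding in $\widetilde\rvec$.
\item A mode discard that keeps the output pure must act on a product factor. Such a factor can always be taken Gaussian with $r=1$ after a Gaussian unitary; otherwise the outcome would be mixed. Discarding it removes a block of the form $\bigoplus\left(\begin{smallmatrix}0&1\\-1&0\end{smallmatrix}\right)$ and leaves the padded spectrum fixed.
\end{itemize}
So everything reduces to a single occupation measurement of one mode, say the parity $i\gamma_1\gamma_2$. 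Any Gaussian measurement is of this form after a Gaussian unitary.

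\emph{Key step: one-mode measurement.} Let $P_\pm=\tfrac12(1\pm i\gamma_1\gamma_2)$, $p_\pm=\Tr(P_\pm\psi)$ and $\phi_\pm=P_\pm\psi P_\pm/p_\pm$. First write the output covariance matrices. For indices $a,b\ge3$, $\gamma_a\gamma_b$ commutes with $P_\pm$. It follows that $p_+\Gam_{\phi_+}+p_-\Gam_{\phi_-}$ agrees with $\Gam_\psi$ on the block $B=\{3,\dots,2N\}$. The cross entries between $\{1,2\}$ and $B$ are removed, and the $\{1,2\}$ block of each $\Gam_{\phi_\pm}$ is $\pm\left(\begin{smallmatrix}0&1\\-1&0\end{smallmatrix}\right)$, so mode 1 becomes Gaussian.

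The partial sums $S_\ell$ are Ky Fan norms: $S_\ell(\Gam)=\tfrac12\|\Gam\|_{(2\ell)}$, half the sum of the largest $2\ell$ singular values. Ky Fan norms are convex, which gives $S_\ell$ of the average $\le$ the average of $S_\ell$. This is the right direction only if we can show $S_\ell(\Gam_\psi)\le S_\ell(p_+\Gam_{\phi_+}+p_-\Gam_{\phi_-})$, which is in general false, since pinching decreases Ky Fan norms.

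So the real argument must use purity. For a pure state, I would use the Schmidt-type decomposition of $\psi$ across mode 1 versus the rest, together with the known fact that for pure states the Williamson spectrum of a subsystem is determined by that of its complement. That fact is the fermionic analogue of equal reduced Schmidt spectra, and it follows from $\Gam_\psi^2=-\id$ restricted appropriately only when $\psi$ is Gaussian, so care is needed. A cleaner route is to use the interlacing of singular values under deleting two rows and columns. The spectrum of $\Gam_\psi|_B$ interlaces that of $\Gam_\psi$: $r_{j+1}(\Gam_\psi)\le r_j(\Gam_\psi|_B)\le r_j(\Gam_\psi)$ in Williamson terms. Writing $\Gam_{\phi_\pm}|_B=X\pm Y$ with $X=\Gam_\psi|_B$, we get $\tfrac12(S_{\ell-1}(X+Y)+S_{\ell-1}(X-Y))\ge S_{\ell-1}(X)\ge S_\ell(\Gam_\psi)-1$ by convexity of the Ky Fan norm and interlacing. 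Adding the unit value from mode 1 then gives $S_\ell(\Gam_{\phi_\pm})\ge 1+S_{\ell-1}(\Gam_{\phi_\pm}|_B)$, provided the unit value is among the top $\ell$, which it is since $r\le1$. Concretely, the required inequality is
\[
p_+S_\ell(\Gam_{\phi_+})+p_-S_\ell(\Gam_{\phi_-})\ \ge\ 1+\tfrac12\bigl(S_{\ell-1}(X')+S_{\ell-1}(X'')\bigr),
\]
and it must be bounded below by $S_\ell(\Gam_\psi)$.

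\emph{Main obstacle.} The main obstacle is the mismatch between the equal weights $\tfrac12$ supplied by convexity and the actual weights $p_\pm$. Precisely, one needs $p_+\Gam_{\phi_+}|_B$ and $p_-\Gam_{\phi_-}|_B$ to combine as $X$, while the normalized $\Gam_{\phi_\pm}|_B$ carry factors $1/p_\pm$. I would first try the decomposition $\Gam_{\phi_\pm}|_B=X\pm Y/p_\pm$ with $p_+(Y/p_+)-p_-(Y/p_-)=0$. This gives $p_+\Gam_{\phi_+}|_B+p_-\Gam_{\phi_-}|_B=X$ exactly, and then apply convexity with weights $p_\pm$, so the equal-weight problem disappears. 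The remaining risk is that a pure-state identity is needed to show that the top entry indeed saturates, which is where purity of $\psi$ and $\phi_\pm$ enters.
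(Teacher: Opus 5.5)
Your reduction to a single one-mode occupation measurement matches the paper's. The key step, however, does not go through. Your chain is
\[
\sum_s p_s\, S_{\ell-1}(\Gamma'_{\phi_s})\ \ge\ S_{\ell-1}(X)\ \ge\ S_\ell(\Gamma_\psi)-1,\qquad X=\Gamma_\psi|_B=\sum_s p_s\Gamma'_{\phi_s},
\]
and the second link is false, already for pure Gaussian states. Take $|\psi\rangle=(|00\rangle+|11\rangle)/\sqrt2$ and measure mode 1.

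\begin{itemize}
\item $\Gamma_\psi$ has Williamson planes $(\gamma_1,\gamma_4)$ and $(\gamma_2,\gamma_3)$ with $r=(1,1)$, so $S_2(\Gamma_\psi)=2$.
\item Mode 2 is maximally mixed, so $X=0$ and $S_1(X)=0$. Your inequality then reads $0\ge1$.
\item The branches $|00\rangle$ and $|11\rangle$ have $\Gamma'_{34}=\pm1$, so $\sum_s p_s S_1(\Gamma'_{\phi_s})=1$. The theorem holds here with equality, but only because the branch matrices have opposite signs and cancel in the average.
\end{itemize}

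The interlacing you invoke is also wrong. Deleting a two-dimensional plane is a codimension-two compression of $i\Gamma$, and it only gives $r_{j+2}(\Gamma)\le r_j(\Gamma|_B)\le r_j(\Gamma)$; the example violates $r_2\le r_1(\Gamma|_B)$. With the correct shift you would only get $S_\ell(\Gamma_\psi)\le 2+S_{\ell-2}(X)$, which costs two units instead of one.

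The weight mismatch you flag as the main obstacle is a red herring. With weights $p_\pm$, convexity simply produces the first link. Purity does not rescue the second link, since the counterexample is pure.

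Any argument that passes through the averaged block $X$ throws away the $A$--$B$ cross-correlations of $\psi$, i.e.\ the off-diagonal blocks $\Pi_0\psi\Pi_1$ with $\Pi_s$ the occupation projectors. These are exactly what the inequality has to account for, as the paper stresses after its affine identity.

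What is missing is a way to turn those cross-correlations into branch-dependent structure. The paper does this at the operator level:
\begin{enumerate}
\item It writes $S_\ell(\Gamma_\psi)=\Tr(\psi Q_J)$ for a Ky Fan maximizer $J$.
\item It localizes $J$ so that at most two complex planes touch the measured plane. These are supported on $A$ plus at most two unmeasured modes; the remainder $J_c$ lies entirely in $B$.
\item In the two-plane case it proves the operator inequality $Q_{J_{\rm loc}}\le 2P_{++}\le \id+\sum_s \ketbra{s}{s}\otimes h_s$. Here the one-plane reflections $h_s$ come from rank deficiency of the Gaussian conditional code blocks and can differ between branches.
\item In each branch separately it completes $J_c\oplus H_s$ to an $(\ell-1)$-plane witness.
\end{enumerate}
Because localization traces out spectator modes, this lemma has to hold for mixed states, which is another reason a purity-based argument would not fit.

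Separately, your discard step is incorrect as stated. A discarded product factor of a pure state need not be Gaussian: discard $\chi$ from $\chi\otimes\phi$ with $\chi$ non-Gaussian. Discards can also create mixed intermediate states whose later measurement branches are pure. The paper handles both by refining unrecorded discards into recorded occupation measurements and applying the mixed-state version of the one-mode lemma.
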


\begin{proof}[Proof sketch]
Gaussian unitaries preserve the spectrum, ancillas append unit values, and adaptive protocols compose by induction over the protocol tree, with spectra of different lengths compared after unit padding. Everything therefore condenses into a single question: \textit{What happens when one mode is measured in its occupation basis?} For that step, we prove, for an arbitrary state $\rho$,
\begin{equation}
  S_\ell(\Gam_\rho)\ \le\ 1+\sum_sp_s\,S_{\ell-1}(\Gam_{\rho_s}'),
  \label{eq:one-mode}
\end{equation}
with $\Gam'_{\rho_s}$ the branch covariance of the unmeasured modes and $S_0:=0$---the measured mode carries at most one unit of spectral weight, the Williamson value of its recorded occupation state. 
The proof of~\eqref{eq:one-mode} rests on three ingredients. 
First, the partial sums
are variational, $S_\ell(\Gam)=\max_J\frac12\tr(J^T\Gam)$ over rank-$2\ell$
partial complex structures $J$~\cite{Bhatia1997}. Thus, fixing a maximizer turns $S_\ell$ into the expectation of the quadratic observable
$Q_J=-\frac{i}{2}\sum_{ab}J_{ab}\gamma_a\gamma_b=\sum_{j} g_j$, a sum
of commuting reflections $g_j$ in the appropriate basis. Second, the Majorana plane
of the measured mode meets at most two complex planes of $J$, confining the
nontrivial part of the problem to at most three modes, on which the reduced
state is mixed (the reason the underlying lemma must be, and is, proven
for mixed inputs). Third, a Gaussian-code inequality closes the residual two-plane case: $Q_J\le2P_{++}$, the occupation postselections of the code projector $P_{++}$ are rank-deficient Gaussian operators, and the product normal form converts this deficiency into branch reflections $h_s$ with $Q_J\le\id+\sum_s\ketbra{s}{s}\otimes h_s$ as an operator inequality. Taking expectations yields Eq.~\eqref{eq:one-mode}. 
Complete proofs are provided in the Supplemental Material (SM)~\cite{SuppMat}.
\end{proof}

Notably, Theorem~\ref{thm:main} requires no fermionic parity assumption. It also partially holds for arbitrary mixed states, as discussed in the SM~\cite{SuppMat}.

\textit{Spectral monotones.---}
Because Theorem~\ref{thm:main} constrains the entire Williamson
spectrum, it controls every functional built from that spectrum at once, and
the measures $\Phi_f$ of Eq.~\eqref{eq:fw-definition} that qualify as
monotones can be characterized completely. The characterization is a
condition on the scalar function $f$ alone (proof in SM~\cite{SuppMat}).

\begin{theorem}[Characterization of $f$-Williamson monotones]
\label{thm:admissible}
Let $f:[0,1]\to\mathbb R$ be nondecreasing, so that $\Phi_f\ge0$. Then
\begin{enumerate}[(i)]\itemsep1pt
\item $\Phi_f$ is a strong pure-state non-Gaussianity monotone
\emph{if and only if} $f$ is convex.
\item $\Phi_f$ is additive on any tensor product in which at least one
factor has definite parity.
\item $\Phi_f$ is pure-state faithful \emph{if and only if} $f(r)<f(1)$ for every
$r<1$.
\item $\Phi_f$ is asymptotically continuous whenever $f$ is continuous.
\end{enumerate}
We call $f$ \emph{admissible} when it meets all of these conditions, i.e.\
when it is continuous and convex on $[0,1]$ with $f(r)<f(1)$ for every
$r<1$. An admissible $\Phi_f$ is then simultaneously faithful,
additive, strongly monotone, and asymptotically continuous.
\end{theorem}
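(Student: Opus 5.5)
We prove the four items separately; item (i) carries the substance and rests on Theorem~\ref{thm:main}. Throughout, padding by unit entries does not change $\Phi_f$, since $f(1)-f(1)=0$. This lets us compare spectra of different lengths freely.

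\emph{Sufficiency in (i).} Assume $f$ is convex and nondecreasing. Write $x=\widetilde\rvec(\Gam_\psi)$ and $y=\sum_s p_s\widetilde\rvec(\Gam_{\phi_s})$, the componentwise average of the sorted spectra. Then $y$ is decreasing, and Theorem~\ref{thm:main} gives $x\prec_w y$. The Tomić--Weyl inequality for weak (sub)majorization states that $x\prec_w y$ implies $\sum_j g(x_j)\le\sum_j g(y_j)$ for every convex nondecreasing $g$~\cite{MarshallOlkinArnold2011}. Hence $\sum_j f(x_j)\le\sum_j f(y_j)$. Convexity, through Jensen's inequality applied componentwise, gives $f(y_j)\le\sum_s p_s f(\widetilde r_j(\Gam_{\phi_s}))$. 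Chaining these and subtracting from $Nf(1)$, with $N$ the padded length, yields $\Phi_f(\psi)\ge\sum_sp_s\Phi_f(\phi_s)$.

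\emph{Necessity in (i).} Suppose $f$ is nondecreasing but not convex. Then there exist $a<b$ in $[0,1]$ and $p\in(0,1)$ with $f(pa+(1-p)b)>pf(a)+(1-p)f(b)$. We aim to build a Gaussian protocol on a pure state $\psi$ whose spectrum contains one Williamson value $r=pa+(1-p)b$, with the remaining values equal to 1. The protocol should produce two pure branches with probabilities $p$ and $1-p$ whose nontrivial values are $a$ and $b$. Monotonicity would then require $f(1)-f(r)\ge p[f(1)-f(a)]+(1-p)[f(1)-f(b)]$, which contradicts the assumed inequality.

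The natural construction is a three-mode state $\psi=\sqrt p\,|0\rangle|\chi_a\rangle+\sqrt{1-p}\,|1\rangle|\chi_b\rangle$, with the first mode measured in its occupation basis. Here $\chi_a,\chi_b$ are two-mode states of opposite parity with one-mode Williamson values $a$ and $b$, for instance $\cos\theta|00\rangle+\sin\theta|11\rangle$-type states with $r=|\cos2\theta|$, arranged so that the overall state has definite parity. One must then compute $\Gam_\psi$ and check that the cross terms vanish by parity, so that the Williamson spectrum is exactly $(1,\,pa+(1-p)b,\,\ldots)$ up to unit padding. This step is where I expect the main obstacle, because a parity-definite superposition may instead give $r$ values that are $p$-averages in a different arrangement. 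If this fails, I would relax the goal and use a state whose averaged spectrum majorizes in the needed direction, or invoke local non-convexity to choose $a,b$ close together, which reduces the required construction to a perturbative one.

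\emph{Items (ii)--(iv).} For (ii), if one factor has definite parity, the cross-correlators $\Tr(\rho\,\gamma_a\gamma_b)$ with $a,b$ in different factors vanish. The covariance matrix is then block diagonal, the spectra concatenate, and $\Phi_f$ adds. For (iii), a pure state is Gaussian if and only if all $r_j=1$. Hence $\Phi_f(\psi)=0$ forces all $r_j=1$ exactly when $f(r)<f(1)$ for every $r<1$; if some $r_0<1$ has $f(r_0)=f(1)$, any non-Gaussian pure state with spectrum $(r_0,1,\ldots)$ violates faithfulness. Such a state exists by the construction used in (i). For (iv), we use the bound $\|\Gam_\rho-\Gam_\sigma\|_\infty\le 2\|\rho-\sigma\|_1$ together with Weyl-type perturbation bounds on the Williamson values. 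These give $|r_j(\rho)-r_j(\sigma)|\le 2\|\rho-\sigma\|_1$. Uniform continuity of $f$ on $[0,1]$ then bounds $|\Phi_f(\rho)-\Phi_f(\sigma)|$ by $N\omega_f(2\|\rho-\sigma\|_1)$, where $\omega_f$ is the modulus of continuity of $f$. This is the linear-in-$N$ scaling required for asymptotic continuity.
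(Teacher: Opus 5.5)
Your sufficiency argument and your items (ii) and (iv) match the paper. The sufficiency argument is Theorem~\ref{thm:main}, then transfer of weak majorization to convex nondecreasing $f$, then componentwise Jensen. There is one caveat. The Tomi\'c--Weyl inequality you cite is stated for continuous $g$, but a nondecreasing convex $f$ on $[0,1]$ may jump at $r=1$, as the nullity generator does. The paper handles this by splitting off $c\,\mathbf 1_{\{u=1\}}$ and using that $x\prec_w y$ with entries at most one forces $\#\{j:y_j=1\}\ge\#\{j:x_j=1\}$.

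The real gap is the necessity direction of (i), and the obstacle you anticipate is structural, not computational. Every pure state on at most three modes with definite parity is Gaussian. In particular, $\cos\theta|00\rangle+\sin\theta|11\rangle$ is a BCS-type state with Williamson spectrum $(1,1)$. The value $|\cos2\theta|$ belongs only to its mixed one-mode marginal, which cannot serve as a pure branch. So your $\psi$ and both branches $|s\rangle|\chi_s\rangle$ are Gaussian, and the test is empty. More strongly, use the annihilator count $\#\{j:r_j=1\}=\dim_{\mathbb C}\ker(\id+i\Gam_\psi)$. A definite-parity pure state with exactly one nonunit value would then be, after a Gaussian rotation, $|0\cdots0\rangle\otimes\psi'$, with $\psi'$ a one-mode definite-parity state having $r<1$. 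No such $\psi'$ exists. Hence no parity-definite pure state of any size has your target spectrum $(1,\ldots,1,pa+(1-p)b)$, and a perturbative choice of $a,b$ does not help. Your hope that the cross terms vanish is also self-defeating. The measured mode alone has $\langle Z\rangle=2p-1$, so with a block-diagonal covariance it would contribute $|2p-1|<1$ instead of the unit entry you need.

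The missing idea is to use parity coherence, which the statement allows. Neither Theorem~\ref{thm:main} nor the Gaussian protocols assume superselection, and the paper notes that only sufficiency is used inside the parity-definite sector. Take the two-mode state $\sqrt p\,|0\rangle|v_0\rangle+\sqrt q\,|1\rangle|v_1\rangle$ with real one-mode states $v_s=\cos\theta_s|0\rangle+\sin\theta_s|1\rangle$, where $\cos2\theta_0=a$ and $\cos2\theta_1=-b$. A direct computation of the $4\times4$ covariance gives $r_1^2+r_2^2=1+(pa+qb)^2$ and $r_1r_2=|{\rm Pf}\,\Gam_\psi|=pa+qb$. Hence $\rvec(\Gam_\psi)=(1,pa+qb)$. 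The unit value comes precisely from the nonzero cross block. The sign flip in $\cos2\theta_1$ is essential, since equal signs would give $|pa-qb|$. Measuring the first mode yields the product branches $|s\rangle|v_s\rangle$ with spectra $(1,a)$ and $(1,b)$, and strong monotonicity then forces $f(pa+qb)\le pf(a)+qf(b)$.

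The same family supplies the counterexample for the converse of (iii), which you currently borrow from the failed construction. The state $|0\rangle|v_0\rangle$ with $\cos2\theta_0=r_*$ has spectrum $(1,r_*)$. If you want definite parity there, $\cos t|0000\rangle+\sin t|1111\rangle$ with $\cos2t=r_*$ also works, since all four of its Williamson values equal $r_*$.
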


For entanglement, these four requirements single out one functional, the entropy of entanglement~\cite{PopescuRohrlich1997,DonaldHorodeckiRudolph2002}; here they leave an infinite-dimensional convex cone of mutually nonproportional admissible $f$, so no distinguished measure of fermionic non-Gaussianity emerges.

A notable, discontinuous member of the broader family of $f$-Williamson monotones is the Gaussian nullity
$\nuG=\#\{j:r_j<1\}$, which counts the Williamson values less than unity and thus records the number of modes which carry non-Gaussian correlations~\cite{MeleHerasymenko2025}. It corresponds to the step generator $f=\mathbf 1_{\{r=1\}}$,
which is nondecreasing and convex, so Theorem~\ref{thm:admissible}(i)
implies that it is a strong monotone. For this member, however, a stronger statement holds: the nullity is monotone under postselection.
Analogous postselection monotonicity holds for the stabilizer nullity under Clifford operations~\cite{Beverland2020} and the symplectic rank under bosonic Gaussian protocols~\cite{mele2026symplectic}; the fermionic case was still open and resolved here.

\begin{proposition}[Nullity monotonicity under postselection]
\label{prop:nullity}
Let $\psi$ be pure and let $\psi_s$ be the postselected state of any nonzero
branch of a Gaussian protocol. Then
\begin{equation}
  \nuG(\psi_s)\ \le\ \nuG(\psi) .
  \label{eq:nullity-branchwise}
\end{equation}
\end{proposition}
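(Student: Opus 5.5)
The plan is to replace the spectral definition of $\nuG$ by an algebraic one that is easier to track under a single Gaussian projection.

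\textbf{Step 1: algebraic characterization of the nullity.} For a pure $N$-mode state $\psi$, let $\mathcal A(\psi)$ be the set of operators $c=\sum_a z_a\gamma_a$ ($z\in\mathbb C^{2N}$) with $c\,\psi=0$. Since $\{c,c'\}=2\sum_a z_az'_a$ is a scalar and $c\psi=c'\psi=0$ gives $\{c,c'\}\psi=0$, the set $\mathcal A(\psi)$ is a complex subspace that is isotropic for the bilinear form $z\cdot z'$. I would show
\begin{equation*}
\nuG(\psi)=N-\dim\mathcal A(\psi).
\end{equation*}
In the Williamson basis, a value $r_j=1$ means $\langle i\gamma_{2j-1}\gamma_{2j}\rangle=\pm1$. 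Hence $\psi$ is an eigenstate of this reflection, which is equivalent to $c_j\psi=0$ for $c_j=(\gamma_{2j-1}\pm i\gamma_{2j})/2$. This gives $\dim\mathcal A\ge N-\nuG$.

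Conversely, an isotropic $k$-dimensional annihilating space can be completed to a Bogoliubov mode basis, that is, a Gaussian unitary $U$ with $\psi=U(|0\rangle^{\otimes k}\otimes\chi)$. In this basis the covariance matrix is block diagonal with $k$ unit Williamson values, so $\nuG\le N-k$.

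\textbf{Step 2: reduction to one projection.} Gaussian unitaries act linearly on $\mathcal A$ and preserve its dimension. A pure Gaussian ancilla raises both $N$ and $\dim\mathcal A$ by one. Classical feedforward is handled branch by branch along the protocol tree, because the statement is branchwise.

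A discarded mode in a pure-outcome branch must be in a definite occupation state after a Gaussian rotation; otherwise the outcome would be mixed. Discarding it therefore lowers $N$ and $\dim\mathcal A$ by one each, leaving $\nuG$ unchanged. Everything thus reduces to a single occupation measurement with outcome projector $\Pi=a^\dagger a$, where $a=(\gamma_u+i\gamma_v)/2$ is a mode in an arbitrary rotated frame; the case $\Pi=aa^\dagger$ is symmetric. We need $\dim\mathcal A(\Pi\psi)\ge\dim\mathcal A(\psi)$ whenever $\Pi\psi\neq0$.

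\textbf{Step 3: the counting argument.} Set $L=\mathcal A(\psi)$ and consider the linear map $L\to\mathbb C^2$, $c\mapsto(\{c,a\},\{c,a^\dagger\})$, with kernel $K$ of codimension $0$, $1$, or $2$. Every $k\in K$ anticommutes with $a$ and $a^\dagger$, so it commutes with $\Pi$ and still annihilates $\Pi\psi$. In addition $a^\dagger\Pi\psi=0$, and $a^\dagger$ is isotropic with $K$.

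If the codimension is at most $1$, then $K+\mathrm{span}(a^\dagger)$ already has dimension $\ge\dim L$. The degenerate case $a^\dagger\in K$ forces $a^\dagger\psi=0$, hence $\Pi\psi=0$, which is excluded.

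The main obstacle is codimension $2$, where one new annihilator is missing. Here I would pick $c\in L$ with $\{c,a^\dagger\}=1$ and $\{c,a\}=0$, and compute
\begin{equation*}
c\,a^\dagger a\psi=(1-a^\dagger c)a\psi=a\psi,\qquad a\,a^\dagger a\psi=a\psi .
\end{equation*}
Hence $(c-a)\Pi\psi=0$. One checks that $c-a$ is isotropic, anticommutes with $K$ and with $a^\dagger$, and is linearly independent of them. Therefore $K\oplus\mathrm{span}(a^\dagger,c-a)$ is an isotropic annihilating space of dimension $\dim L$.

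By Step 1, $\nuG(\Pi\psi)\le\nuG(\psi)$ for the normalized branch. Iterating Step 2 over the protocol then yields Eq.~\eqref{eq:nullity-branchwise}.
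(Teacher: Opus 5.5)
Your Steps 1 and 3 are correct. Together they are essentially the annihilator-counting argument that the paper gives as its \emph{alternative} proof in the SM (linear-annihilator and branch-counting lemmas).

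There the measured mode is split off in Jordan--Wigner form, $L=\alpha f+\beta f^\dagger+Z\otimes W$, and the kernel of a single coefficient functional is injected into the annihilators of the conditional state. Your intrinsic version, with the functionals $\{c,a\}$ and $\{c,a^\dagger\}$ and the new annihilator $c-a=c-\{c,a^\dagger\}a$, is the same count. The main-text proof is different and shorter: at $\ell=N-\nuG(\psi)$ the partial sum $S_\ell(\Gam_\psi)=\ell$ is maximal, so Theorem~1 forces every nonzero branch to saturate it.

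One harmless slip in Step 3: the ``degenerate case'' $a^\dagger\in K$ never occurs, because $\{a^\dagger,a\}=1$. Moreover, $a^\dagger\psi=0$ would give $\Pi\psi=\psi$, not $\Pi\psi=0$.

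The genuine gap is your treatment of discards in Step~2.

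First, the statement carries no parity restriction. Without one, a mode discarded to a pure output need not be in a definite occupation state. Take $\psi=|0\rangle\otimes|+\rangle$. The discarded mode is in the pure non-Gaussian state $|+\rangle$, and $\rvec(\Gam_\psi)=(1,0)$. Discarding leaves $|0\rangle$: $\dim\mathcal A$ stays $1$ while $\nuG$ drops from $1$ to $0$. So ``lowers $N$ and $\dim\mathcal A$ by one each, leaving $\nuG$ unchanged'' is false. The inequality you need survives, but your argument does not establish it.

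Second, and more seriously, a pure final branch does not make the state pure right after a discard. Discarding one mode of $(|00\rangle+|11\rangle)/\sqrt2$ gives a mixed state, and only a later occupation measurement restores purity. Along such branches the intermediate states are mixed. Your Steps 1 and 3 are formulated for state vectors and do not apply there. Hence the claim that everything reduces to pure-state occupation measurements is not justified.

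The missing idea is the refinement the paper uses (SM, ``Mixed states and recorded discards''). Replace every unrecorded discard by an occupation measurement of that mode with the outcome retained, followed by discarding it.

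\begin{enumerate}[(i)]
\item A partial trace equals measure-and-forget, so each coarse branch is the sum of its refined branches.
\item Starting from a pure $\psi$, every refined branch stays pure at every step.
\item If the coarse branch $\psi_s$ is pure, every nonzero refined branch contributing to it equals $\psi_s$ up to a phase.
\end{enumerate}

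On a refined branch, your Step 3 covers the inserted measurement. Discarding a mode that is now in definite occupation then removes exactly one mode and one annihilator. This holds because the annihilators of $|0\rangle_a\otimes\chi$ are $\operatorname{span}(a)$ together with the Jordan--Wigner-dressed annihilators of $\chi$, which is the product-branch computation in the paper's branch-counting lemma. With this patch your proof goes through.
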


\begin{proof}
Write $\ell=N-\nuG(\psi)$. The partial sum $S_\ell$  satisfies $S_\ell\le\ell$, with equality exactly when those $\ell$
values all equal unity, that is when $\ell\le N-\nuG$. The choice of $\ell$
gives $S_\ell(\Gam_\psi)=\ell$, so Theorem~\ref{thm:main} forces
$\sum_sp_sS_\ell(\Gam_{\psi_s})\ge\ell$; since every term obeys
$S_\ell(\Gam_{\psi_s})\le\ell$, each nonzero branch attains
$S_\ell(\Gam_{\psi_s})=\ell$, that is $\ell\le N-\nuG(\psi_s)$. 
\end{proof}

An alternative proof, by counting linear Majorana annihilators, is given in the SM~\cite{SuppMat}.

\textit{State conversion.---}
Theorem~\ref{thm:main}
is reminiscent of the condition for converting a pure bipartite state
into an ensemble of pure states by LOCC, namely that the Schmidt coefficients of
the initial state be majorized by the probability-weighted average of those
of the final states~\cite{JonathanPlenio1999a,NielsenVidal2001}. It immediately yields
the following majorization condition for deterministic Gaussian
transformations. 

\begin{corollary}[Deterministic conversion]
\label{cor:deterministic}
Let a deterministic Gaussian protocol map the pure state $\psi$ to the pure state $\phi$. Then
\begin{equation}
  \widetilde\rvec(\Gam_\psi)\ \prec_w\ \widetilde\rvec(\Gam_\phi) .
  \label{eq:deterministic-majorization}
\end{equation}
\end{corollary}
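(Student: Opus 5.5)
The corollary is the special case of Theorem~\ref{thm:main} in which the outcome ensemble has a single element. The plan is therefore to identify the deterministic protocol with a one-element ensemble, invoke the theorem, and read off the partial sums.

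First, I will make precise what ``deterministic'' means here. The protocol maps $\psi$ to an ensemble $\{(p_s,\phi_s)\}_s$ in which every branch with $p_s>0$ carries the same pure state $\phi$, up to relabeling of modes or Gaussian unitaries applied afterwards. Branches that differ only by a Gaussian unitary have the same padded Williamson spectrum, because Gaussian unitaries preserve the spectrum. Padding with unit entries to a common length is harmless, since it is exactly the convention already built into $\widetilde\rvec$ and $S_\ell$.

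Second, I will apply Eq.~\eqref{eq:partialsum-main}. For every $\ell$ it gives $S_\ell(\Gam_\psi)\le\sum_s p_s S_\ell(\Gam_{\phi_s})$. Since every nonzero branch satisfies $S_\ell(\Gam_{\phi_s})=S_\ell(\Gam_\phi)$ and $\sum_s p_s=1$, this becomes $S_\ell(\Gam_\psi)\le S_\ell(\Gam_\phi)$ for all $\ell$. Equivalently, the averaged vector in Eq.~\eqref{eq:vector-majorization} reduces to $\widetilde\rvec(\Gam_\phi)$ itself. By the definition of $\prec_w$ through sums of the largest entries, this is exactly Eq.~\eqref{eq:deterministic-majorization}.

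If the mode number of $\phi$ differs from that of $\psi$, the comparison is made after padding both spectra with ones to a common length $L$. The partial sums for $\ell>N$ remain valid: beyond the original length the padded entries of $\widetilde\rvec(\Gam_\psi)$ are all ones, and any $S_\ell$ is at most its value at $\ell=N$ plus $(\ell-N)$. Hence the inequality at $\ell=N$ propagates, provided $\phi$'s padded entries are handled the same way. This requires checking that ``for all $\ell$'' covers the padded length, which the theorem's padding convention is designed to do.

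The only step needing care is this bookkeeping of padding and of branches that coincide only up to Gaussian unitaries; the inequality itself is immediate from Theorem~\ref{thm:main}.
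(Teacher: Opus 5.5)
Your main argument is the paper's: the corollary is Theorem~\ref{thm:main} specialized to an ensemble whose nonzero branches all carry $\phi$, up to Gaussian unitaries that leave the spectrum unchanged. The averaged vector in Eq.~\eqref{eq:vector-majorization} then collapses to $\widetilde\rvec(\Gam_\phi)$, giving $S_\ell(\Gam_\psi)\le S_\ell(\Gam_\phi)$.

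The one step that fails as written is your padding paragraph. You combine $S_N(\widetilde\rvec(\Gam_\psi))\le S_N(\widetilde\rvec(\Gam_\phi))$ with the upper bound $S_\ell\le S_N+(\ell-N)$. That does not yield the inequality for $\ell>N$. You would also need the lower bound $S_\ell(\widetilde\rvec(\Gam_\phi))\ge S_N(\widetilde\rvec(\Gam_\phi))+(\ell-N)$, which fails whenever $\phi$ has a non-unit entry among positions $N+1,\dots,\ell$. For example, the vectors $(1,1,1)$ and $(1,1,0)$ satisfy the partial-sum inequality for $\ell\le 2$ but violate it at $\ell=3$. Also, after decreasing reordering the padded ones sit among the largest entries, not beyond the original length.

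The inequalities for large $\ell$ have to come from the theorem itself. Suppose $\phi$ has $M>N$ modes. Adjoin all ancillas at the start, so the input becomes the $M$-mode pure state $\psi\otimes|0\rangle^{\otimes(M-N)}$. Its spectrum is exactly $\widetilde\rvec(\Gam_\psi)$ by concatenation, since the ancilla has definite parity. Applying Theorem~\ref{thm:main} to this input gives all $\ell\le M$; this is just the ancilla step of the theorem's inductive proof. If instead $\phi$ has fewer modes than $\psi$, padding $\phi$ keeps the range at $\ell\le N$ and nothing extra is needed. With this replacement, the separate propagation argument can be dropped.
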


Corollary~\ref{cor:deterministic} is the fermionic counterpart of Nielsen's condition for deterministic LOCC transformations~\cite{Nielsen1999}, with the Schmidt spectrum replaced by the Williamson spectrum. 
In contrast to~\cite{Nielsen1999}, however, the majorization is only necessary, not sufficient: satisfying it does not by itself guarantee that a Gaussian conversion exists. The reason is that the Williamson spectrum is not a complete invariant of fermionic states. 
For example,
the GHZ states $|{\rm GHZ}_N\rangle=(|0\rangle^{\otimes N}+|1\rangle^{\otimes N})/\sqrt2$~\cite{Greenberger07} have identically vanishing covariance for every $N\ge3$, so that
\begin{equation}
  \rvec\bigl(\Gam_{{\rm GHZ}_6^{\otimes2}}\bigr)
  =\rvec\bigl(\Gam_{{\rm GHZ}_4^{\otimes3}}\bigr)=(0,\ldots,0).
  \label{eq:ghz-incomplete}
\end{equation}
Thus Eq.~\eqref{eq:deterministic-majorization} holds in both directions, even though ${\rm GHZ}_6^{\otimes2}$ cannot be converted into ${\rm GHZ}_4^{\otimes3}$ by Gaussian protocols~\cite{tarabunga2026fermionic}.

Theorem~\ref{thm:main} also constrains \emph{probabilistic} conversion. To
this end, we define, for $\ell=1,\ldots,N$,
\begin{equation}
  V_\ell(\rho):=\ell-S_\ell(\Gam_\rho)=\sum_{j\le\ell}\bigl(1-r_j(\Gam_\rho)\bigr).
  \label{eq:partialsum-deficit}
\end{equation}
Subtracting Eq.~\eqref{eq:partialsum-main} from $\ell$ shows that each $V_\ell$
is a strong monotone. As in Theorem~\ref{thm:main}, when two
states are compared the deficits are evaluated on the unit-padded spectra,
denoted $\widetilde V_\ell$; padding carries no deficit weight but aligns
the indices~\cite{SuppMat}. Since the success branch carries the deficit of $\phi$ alone (recorded and ancilla modes contribute one, hence no deficit) while failure branches contribute nonnegatively, this yields the following,
\begin{corollary}[Single-shot conversion]
\label{cor:single-shot}
If a Gaussian protocol maps the pure state $\psi$ into the pure state
$\phi$ with probability $p_{\rm succ}$, then
\begin{equation}
  p_{\rm succ}\ \le\ \min_{\ell\,:\, \widetilde V_\ell(\phi)>0}
  \frac{ \widetilde V_\ell(\psi)}{ \widetilde V_\ell(\phi)}.
  \label{eq:vidal-bound}
\end{equation}
\end{corollary}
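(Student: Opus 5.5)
The plan is to derive Corollary~\ref{cor:single-shot} directly from the strong monotonicity of each deficit $\widetilde V_\ell$, which follows from Theorem~\ref{thm:main}. Fix $\ell$ with $\widetilde V_\ell(\phi)>0$; if no such $\ell$ exists the minimum is over an empty set and there is nothing to prove. Subtracting Eq.~\eqref{eq:partialsum-main} from $\ell$, and working with unit-padded spectra of a common length $L\ge N$ (the maximum over all branches), gives
\begin{equation}
  \widetilde V_\ell(\psi)\ \ge\ \sum_s p_s\,\widetilde V_\ell(\phi_s).
\end{equation}
Here we use $\sum_s p_s=1$, so that $\ell=\sum_s p_s\ell$. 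Padding with unit entries leaves every $\widetilde V_\ell$ unchanged whenever $\ell$ does not exceed the original length. For larger $\ell$, the padded deficit equals the total deficit $\sum_j(1-r_j)$, so the inequality holds uniformly in $\ell$ after padding.

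Next I split the ensemble into the success branch $s^\star$, which has probability $p_{\rm succ}$, and the failure branches. On the success branch the output $\phi_{s^\star}$ is, as a state of all remaining modes, $\phi$ tensored with recorded occupation modes and leftover Gaussian ancilla modes. These extra factors are pure Gaussian product states in definite occupation states, so they have definite parity and unit Williamson values. I need that their covariance matrix is block diagonal, $\Gam_\phi\oplus\Gam_{\rm G}$. This holds because the cross terms $\langle\gamma_a\gamma_b\rangle$ factorize into one-point Majorana expectations, and the definite-parity factor kills those. The Williamson spectrum of $\phi_{s^\star}$ is therefore the spectrum of $\phi$ padded by ones, so $\widetilde V_\ell(\phi_{s^\star})=\widetilde V_\ell(\phi)$.

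Every failure branch has $\widetilde V_\ell(\phi_s)\ge0$, because all $r_j\le1$. Dropping these terms gives $\widetilde V_\ell(\psi)\ge p_{\rm succ}\widetilde V_\ell(\phi)$, and dividing by $\widetilde V_\ell(\phi)>0$ and minimizing over $\ell$ yields Eq.~\eqref{eq:vidal-bound}.

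The main point needing care is the success-branch identification. There are two issues. First, "maps $\psi$ into $\phi$ with probability $p_{\rm succ}$" may aggregate several classical outcomes that all yield $\phi$; I would treat these as a collection of branches with total weight $p_{\rm succ}$, each contributing $\widetilde V_\ell(\phi)$, so the bound is unchanged. Second, the block-diagonal decomposition requires the discarded or recorded modes to be in a product with $\phi$. That holds because the outcome is pure and the measured modes are projected onto occupation eigenstates. If modes are discarded without measurement, purity of the outcome forces a product structure, and discarding removes rows and columns of $\Gam$. By Cauchy interlacing for the singular values of antisymmetric compressions, this can only lower the partial sums appropriately, and for the pure product case it gives exactly the spectrum of $\phi$.
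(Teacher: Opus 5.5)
Your argument is the paper's own and is correct: each padded deficit $\widetilde V_\ell$ is a strong monotone by Theorem~\ref{thm:main}; the success branch contributes exactly $p_{\rm succ}\widetilde V_\ell(\phi)$, because recorded and ancilla modes only append unit Williamson values; and the failure branches are dropped as nonnegative. Two side remarks are wrong but dispensable: padding places the unit entries at the top of the decreasingly ordered spectrum, so $k$ extra ones shift the index ($\widetilde V_\ell=V_{\ell-k}$ for $\ell>k$ and $\widetilde V_\ell=0$ for $\ell\le k$) rather than leaving $V_\ell$ unchanged, which is why $\psi$ and $\phi$ must share a common padding length and why the inequality for every $\ell$ should simply be read off Theorem~\ref{thm:main} in padded form; and unrecorded discards need no Cauchy-interlacing detour, which in any case fails because plain interlacing for a codimension-$2k$ compression only gives an index shift of $2k$, too weak to yield $\widetilde V_\ell(\phi\otimes\chi)\ge\widetilde V_\ell(\phi)$ when neither factor has definite parity, since Theorem~\ref{thm:main} already covers such discards for pure outputs by refining each one into a recorded occupation measurement, after which every success branch is $\phi\otimes|n\rangle$.
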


Corollary~\ref{cor:single-shot} mirrors Vidal's single-copy formula, where the analogous expression is an equality~\cite{Vidal1999}. Here the expression can only be a bound, since equality would make the spectrum a complete invariant, which Eq.~\eqref{eq:ghz-incomplete} forbids. Every strong monotone
yields a bound of this form, and we now show that Corollary~\ref{cor:single-shot}  nonetheless gives the strongest bound that can
be obtained from the Williamson spectrum. Call $M$ a \emph{spectral
monotone} if it is a nonnegative function of the Williamson spectrum that
vanishes on pure Gaussian states, is unchanged by unit padding, and is
nondecreasing and jointly concave
in the variables $1-r_j$; in particular, every $f$-Williamson measure is of
this type, being a sum of concave scalar functions. 

\begin{proposition}[Optimality among spectral monotones]
\label{prop:optimal}
Let $c(\psi,\phi)$ denote the right-hand side of Eq.~\eqref{eq:vidal-bound},
and assume $c(\psi,\phi)\le1$. Then, for every spectral monotone $M$ and pure non-Gaussian states $\psi,\phi$,
\begin{equation} 
  c(\psi,\phi)\ \le\ \frac{M(\psi)}{M(\phi)} ,
  \label{eq:optimality}
\end{equation}
so that Corollary~\ref{cor:single-shot} provides the optimal bound
obtainable from a spectral monotone of this class.
\end{proposition}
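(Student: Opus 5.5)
The plan is to translate everything into the \emph{deficit vectors} and reduce Eq.~\eqref{eq:optimality} to a classical weak-majorization inequality for monotone Schur-concave functions. Pad both spectra by unit entries to a common length $n$; by assumption this changes neither $M$ nor the $\widetilde V_\ell$. Set $x_j:=1-\widetilde r_j(\Gam_\psi)$ and $y_j:=1-\widetilde r_j(\Gam_\phi)$. Since the $\widetilde r_j$ are sorted decreasingly, $x$ and $y$ are sorted increasingly. Hence $\widetilde V_\ell(\psi)=\sum_{j\le\ell}x_j$ is the sum of the $\ell$ \emph{smallest} entries of $x$, and likewise for $y$. We regard $M$ as a function $M(x)$ on $[0,1]^n$. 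It is symmetric, since it depends only on the multiset of Williamson values. It is nondecreasing and jointly concave by hypothesis, and $M(0)=0$ because $M$ vanishes on pure Gaussian states.

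\emph{Step 1 (unpacking $c$).} By definition of $c=c(\psi,\phi)$ as a minimum, we have $c\,\widetilde V_\ell(\phi)\le\widetilde V_\ell(\psi)$ for every $\ell$ with $\widetilde V_\ell(\phi)>0$. For the remaining $\ell$, where $\widetilde V_\ell(\phi)=0$, the inequality holds trivially. Thus, for every $\ell$, the sum of the $\ell$ smallest entries of $u:=c\,y$ is at most the sum of the $\ell$ smallest entries of $x$. This is weak supermajorization, $x\prec^{w}u$ in the notation of Marshall--Olkin.

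\emph{Step 2 (scaling).} Since $0\le c\le1$ (here the hypothesis $c\le1$ enters), concavity together with $M(0)=0$ gives
\[
M(c\,y)\ \ge\ c\,M(y)+(1-c)M(0)\ =\ c\,M(y).
\]

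\emph{Step 3 (order preservation; the main point).} We show $M(x)\ge M(u)$. First, a symmetric concave function is Schur-concave. Second, by the standard characterization of weak supermajorization~\cite{MarshallOlkinArnold2011}, $x\prec^{w}u$ holds iff there exists a vector $z$ with $z\le x$ entrywise and $z\prec u$. One can construct $z$ explicitly: lower the largest entries of $x$ until the total sum matches that of $u$, which preserves the smallest-$\ell$ partial-sum inequalities. Monotonicity then gives $M(x)\ge M(z)$, and Schur-concavity gives $M(z)\ge M(u)$. I expect this step to be the only delicate one. It requires checking that the cited characterization applies with the ordering conventions used here, namely smallest-entry sums of deficits, which correspond to largest-entry sums of Williamson values. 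It also requires checking that the intermediate $z$ stays in $[0,1]^n$, i.e.\ in the domain where $M$ is defined. Nonnegativity of $z$ follows because only entries are lowered down to levels bounded below by entries of $u\ge0$.

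Combining the steps gives $M(\psi)=M(x)\ge M(cy)\ge c\,M(y)=c\,M(\phi)$. Since $\phi$ is non-Gaussian, the monotones $\widetilde V_\ell(\phi)$ are not all zero, and I would also argue that $M(\phi)>0$ whenever the ratio $M(\psi)/M(\phi)$ is meaningful. Dividing by $M(\phi)$ then yields Eq.~\eqref{eq:optimality}. Optimality follows because each $\widetilde V_\ell$ is itself a spectral monotone: it is linear, nondecreasing in the deficits, and invariant under padding. So the minimum over $\ell$ in Eq.~\eqref{eq:vidal-bound} is attained within the class, and no spectral monotone can produce a smaller bound.
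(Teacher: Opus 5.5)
Your proposal is correct and follows the paper's proof step for step. You read the definition of $c$ as weak supermajorization of the deficit vector of $\psi$ by $c$ times that of $\phi$, compare via monotonicity plus Schur-concavity of $M$, and finish with $M(cu)\ge c\,M(u)$ from concavity, $M(0)=0$ and $c\le 1$. Beyond the paper, you only make explicit the intermediate vector $z$ and the indices with $\widetilde V_\ell(\phi)=0$.

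One small correction to your closing remark on attainment. $\widetilde V_\ell$ is not linear but concave, being a minimum of linear functions of the deficits. For fixed $\ell$ it is also not padding-invariant, since extra unit entries shift the index. The member of the class that actually realizes $c$ is the total deficit minus the sum of the $n-\ell$ largest deficits, which is padding-invariant and coincides with $\widetilde V_\ell$ at the common length $n$.
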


\begin{proof}
Write $u_j=1-r_j$ for the unit-padded spectra, ordered increasingly, so
that $\widetilde V_\ell=\sum_{j\le\ell}u_j$ is the sum of the $\ell$
smallest of them, and abbreviate $c=c(\psi,\phi)$. By definition
$\sum_{j\le\ell}u_j(\psi)\ge c\sum_{j\le\ell}u_j(\phi)$ for every $\ell$,
that is, $u(\psi)$ is weakly supermajorized by $c\,u(\phi)$. A spectral
monotone is symmetric and concave, hence Schur-concave, and is nondecreasing
in each argument, so $M(u(\psi))\ge M(c\,u(\phi))$. Concavity together with
$M(0)=0$ gives $M(c\,u)\ge c\,M(u)$ for $c\le1$, and combining the two yields
$M(\psi)\ge c\,M(\phi)$.
\end{proof}

We note that the $V_\ell$ are not $f$-Williamson measures, and are generally not additive even for parity-preserving systems. They are therefore not useful for multi-copy state conversion, in particular for asymptotic transformations. That regime is instead governed by the $f$-Williamson measures, which we will return to below.

\textit{Catalysis.---}
Despite not being sufficient, Corollary~\ref{cor:deterministic} provides a useful no-go statement: if $\widetilde\rvec(\Gam_\psi)\not\prec_w\widetilde\rvec(\Gam_\phi)$, no Gaussian protocol maps $\psi$ to $\phi$. In entanglement theory such obstructions can be circumvented by \emph{catalysis}: an auxiliary $\omega$ is adjoined and returned unaltered, chosen so that the majorization relation, failing for $\psi$ and $\phi$, holds for $\psi\otimes\omega$ and $\phi\otimes\omega$, which Nielsen's sufficiency converts into an LOCC protocol~\cite{JonathanPlenio1999b}. \textit{Is the same strategy available here?} The following theorem answers in the negative whenever the catalyst has definite fermionic parity.

\begin{theorem}[No catalyzed
majorization with parity-preserving catalysts]
\label{thm:nocatalysis}
Let $\psi,\phi$ be pure states with
$\widetilde\rvec(\Gam_\psi)\not\prec_w\widetilde\rvec(\Gam_\phi)$. Then no
pure state catalyst $\omega$ of definite parity enables the conversion
$\psi\otimes\omega\to\phi\otimes\omega$.
\end{theorem}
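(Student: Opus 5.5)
The plan is to reduce the statement to Corollary~\ref{cor:deterministic} applied to the catalyzed pair, and then to show that tensoring with a definite-parity catalyst cannot create a weak majorization that was absent. Suppose, for contradiction, that a deterministic Gaussian protocol maps $\psi\otimes\omega$ to $\phi\otimes\omega$. Corollary~\ref{cor:deterministic} then gives
\begin{equation*}
\widetilde\rvec(\Gam_{\psi\otimes\omega})\ \prec_w\ \widetilde\rvec(\Gam_{\phi\otimes\omega}).
\end{equation*}
It therefore suffices to prove a cancellation law: for definite-parity $\omega$, this relation implies $\widetilde\rvec(\Gam_\psi)\prec_w\widetilde\rvec(\Gam_\phi)$, which contradicts the hypothesis.

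The first step is the concatenation property already announced in the introduction. If $\omega$ has definite parity, then every cross correlator $\Tr[(\rho\otimes\omega)\gamma_a\gamma_b]$ with $a$ in the first factor and $b$ in the catalyst factor contains the expectation of a single Majorana operator in $\omega$. That expectation vanishes by parity, since $\gamma_b$ is parity-odd. Hence $\Gam_{\rho\otimes\omega}=\Gam_\rho\oplus\Gam_\omega$, and the Williamson spectrum of the product is the multiset union $\rvec(\Gam_\rho)\cup\rvec(\Gam_\omega)$, sorted. I would verify that this holds for arbitrary $\psi$ and $\phi$, with no parity assumption on them, using the graded tensor product with $\omega$ even or odd. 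If $\omega$ is odd, the sign from the Jordan--Wigner string multiplies a vanishing odd expectation, so the block structure survives.

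The second step is the cancellation. I would work with the deficits $u_j=1-\widetilde r_j\ge0$, sorted increasingly, so that weak majorization of the $r$'s is equivalent to $\sum_{j\le\ell}u_j(\psi\otimes\omega)\ge\sum_{j\le\ell}u_j(\phi\otimes\omega)$ for all $\ell$, i.e., weak supermajorization of the $u$'s. A convenient characterization is via the concave test functions $g_t(u)=\min(u,t)$. For nonnegative vectors, $x$ dominates $y$ in this ordering if and only if $\sum_j\min(x_j,t)\ge\sum_j\min(y_j,t)$ for all $t\ge0$. This is the standard Tomić--Weyl criterion for weak supermajorization with the family of increasing concave functions $\min(\cdot,t)$, provided the vectors are padded by zeros to equal length, and padding contributes $\min(0,t)=0$. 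Since $\sum_j\min(u_j,t)$ is additive over the concatenation, the $\omega$ contributions appear identically on both sides and cancel, for every $t$. This yields the desired relation for $\psi$ versus $\phi$. Equivalently, each $\Phi_f$ with $f(r)=-\min(1-r,t)$ (nondecreasing and convex) is additive by Theorem~\ref{thm:admissible}(ii) and monotone by (i), and these $\Phi_f$ generate the order.

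The main obstacle I expect is the exact form of the weak-majorization characterization in the presence of unit padding, together with the possibly different mode counts of $\psi$ and $\phi$. To handle this, I would check carefully that the criterion via $\min(u,t)$ for all $t\in[0,1]$ is both necessary and sufficient for the padded partial-sum inequalities. This amounts to writing $\sum_{j\le\ell}u_j=\min_t\bigl[\ell t+\sum_j(u_j-t)_+\bigr]$-type duality and noting $\sum_j\min(u_j,t)=\sum_j u_j-\sum_j(u_j-t)_+$. Some care is also needed because the sums over $\psi$ and $\phi$ include all entries rather than just the first $\ell$. I would rephrase the criterion in terms of the $k$ largest $r$'s through the equivalent Karamata-type statement for weak majorization from below, using increasing convex functions of $r$ such as $(r-s)_+$. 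This is exactly $\Phi_f$ with $f(r)=(r-s)_+$, which is additive, so the catalyst terms cancel. Finally, I would remark why parity is needed: without it, the cross block of $\Gam$ is nonzero and the spectrum need not concatenate, which leaves room for the parity-coherent catalysts discussed afterward.
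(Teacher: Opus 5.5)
Your proposal is correct and follows the paper's own proof. You apply Corollary~\ref{cor:deterministic} to the catalyzed pair, use the definite parity of $\omega$ to make the covariance block diagonal so that the Williamson spectra concatenate, and then cancel the catalyst's contribution through the convex-function (Tomi\'c--Weyl) characterization of weak majorization, which is additive over disjoint unions; your hinge test functions $(r-s)_+$, equivalently $\min(1-r,t)$, are exactly the generating family behind that characterization, as in Lemma~\ref{lem:sm-transfer}.
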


\begin{proof}
By Corollary~\ref{cor:deterministic} the catalyzed conversion would require
$\widetilde\rvec(\Gam_{\psi\otimes\omega})\prec_w
\widetilde\rvec(\Gam_{\phi\otimes\omega})$. Since $\omega$ has definite
parity, its odd Majorana expectations vanish, the covariance of each product
is block diagonal, and the Williamson spectra concatenate, $\widetilde\rvec(\Gam_{\psi\otimes\omega})
  =\widetilde\rvec(\Gam_\psi)\sqcup\rvec(\Gam_\omega)$,
and likewise for $\phi$. For equal-length vectors---which unit padding
ensures, contributing equal $f(1)$ terms to both sides---weak majorization
holds precisely when
$\sum_jf(x_j)\le\sum_jf(y_j)$ for every nondecreasing convex $f$, and such a
sum splits over a disjoint union. The contribution of $\rvec(\Gam_\omega)$
is therefore common to the two sides and cancels, so the catalyzed
majorization holds if and only if
$\widetilde\rvec(\Gam_\psi)\prec_w\widetilde\rvec(\Gam_\phi)$ does, contrary
to hypothesis.
\end{proof}

The mechanism is precisely where entanglement and non-Gaussianity theories deviate.  Schmidt coefficients multiply under tensor products, so $\sum_jf(\lambda_j)$ is not additive and the additive entropies characterize the weaker catalytic order known as trumping~\cite{Klimesh2007,Turgut2007}.  Williamson values instead concatenate, so $\sum_jf(r_j)$ is additive and no
catalyst can alter the majorization relation.

Parity coherence, however, reopens the possibility, since concatenation of
the Williamson spectra ceases to hold. The $H$-gadget of
Ref.~\cite{BrodGalvao2012} is a concrete example: the matchgate $G(H,H)$,
which acts as $H$ on both parity sectors, satisfies
\begin{equation}
  G(H,H)\,\bigl(|\chi\rangle\otimes|+\rangle\bigr)
  =\bigl(H|\chi\rangle\bigr)\otimes|+\rangle
  \label{eq:catalysis}
\end{equation}
for every single-mode state $\chi$. The catalyst is returned unaltered while
the first mode is transformed by a Hadamard, which is not Gaussian, so a catalyst
promotes a matchgate to a non-Gaussian gate, deterministically and
reversibly, since $G(H,H)^2=\id$. In particular, the catalytic conversion $|0\rangle\otimes|+\rangle \to |+\rangle\otimes|+\rangle$ is allowed by deterministic Gaussian protocol, despite $|0\rangle \to |+\rangle$ being forbidden by $\widetilde\rvec(\Gam_0)\not\prec_w\widetilde\rvec(\Gam_+)$. However, even without parity constraint, the gain from catalysis is sharply bounded, as shown in the following theorem.

\begin{theorem}[Bounded catalytic gain]
\label{thm:catalytic-bound}
Let $\psi,\phi$ be pure states and let $\omega$ be any pure catalyst. If a Gaussian protocol maps
$\psi\otimes\omega\to\phi\otimes\omega$, then
\begin{equation}
  S_{\ell+1}(\Gam_\psi)\ \le\ S_{\ell}(\Gam_\phi)+1
  \qquad\forall\,\ell ,
  \label{eq:catalytic-bound}
\end{equation}
equivalently
$\widetilde\rvec(\Gam_\psi)\sqcup(0)\prec_w\widetilde\rvec(\Gam_\phi)\sqcup(1)$.
\end{theorem}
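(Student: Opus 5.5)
The plan is to reduce the catalytic statement to Corollary~\ref{cor:deterministic} applied to the products. That corollary gives $\widetilde\rvec(\Gam_{\psi\otimes\omega})\prec_w\widetilde\rvec(\Gam_{\phi\otimes\omega})$. The remaining work is to control how far the Williamson spectrum of a product $\chi\otimes\omega$ can deviate from the concatenation $\rvec(\Gam_\chi)\sqcup\rvec(\Gam_\omega)$ when $\omega$ has no definite parity. In the parity-definite case this deviation is zero (the proof of Theorem~\ref{thm:nocatalysis}). In general I will show it is a controlled low-rank perturbation, which costs exactly one unit of spectral weight.

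\textbf{Step 1 (structure of the product covariance).} Order the Majorana operators so that those of $\chi$ come first and those of $\omega$ follow with the usual Jordan--Wigner parity string. Then
\[
\Gam_{\chi\otimes\omega}=\begin{pmatrix}\Gam_\chi & X\\ -X^T & \Gam_\omega\end{pmatrix}.
\]
The diagonal blocks are unchanged. The off-diagonal block has entries $X_{ab}\propto\langle i\gamma_a P\rangle_\chi\,\langle\gamma'_b\rangle_\omega$, so it factorizes into the odd ("parity-coherent") expectation vectors of the two factors, and $X=u v^T$ has rank at most one. Hence $\Gam_{\chi\otimes\omega}=(\Gam_\chi\oplus\Gam_\omega)+B$ with $B$ antisymmetric of rank at most two.

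\textbf{Step 2 (one-step interlacing).} Williamson values are the singular values of an antisymmetric matrix, each counted once per pair. A rank-two antisymmetric perturbation therefore shifts the Williamson index by at most one. Through the variational formula $S_\ell(\Gam)=\max_J\tfrac12\tr(J^T\Gam)$ this gives
\[
S_{\ell+1}(\Gam_\chi\oplus\Gam_\omega)-1\ \le\ S_\ell(\Gam_{\chi\otimes\omega})\ \le\ S_{\ell}(\Gam_\chi\oplus\Gam_\omega)
\]
up to the correct placement of the lost unit. The upper bound needs the pairing of $B$ with the block structure: since $B$ lives purely in the off-diagonal block, $\tfrac12\tr(J^TB)$ can be made to vanish for block-diagonal maximizers.

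\textbf{Step 3 (cancellation of $\omega$).} Combine the two sides through the convex-function test for weak majorization used in Theorem~\ref{thm:nocatalysis}. On the $\psi$ side the product spectrum dominates $\rvec(\Gam_\psi)\sqcup\rvec(\Gam_\omega)$ with its largest entry replaced by $0$. On the $\phi$ side it is dominated by $\rvec(\Gam_\phi)\sqcup\rvec(\Gam_\omega)$ with an extra $1$. Summing any nondecreasing convex $f$, the $\rvec(\Gam_\omega)$ contributions cancel exactly as before, leaving
\[
\sum_j f(r_j(\psi))+f(0)\ \le\ \sum_j f(r_j(\phi))+f(1).
\]
This is $\widetilde\rvec(\Gam_\psi)\sqcup(0)\prec_w\widetilde\rvec(\Gam_\phi)\sqcup(1)$. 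Reading off partial sums gives $S_{\ell+1}(\Gam_\psi)\le S_\ell(\Gam_\phi)+1$.

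\textbf{Main obstacle.} The delicate point is Step 2: ensuring the loss is a single unit in total rather than one unit on each side. Naive Weyl interlacing applied to both products would yield a bound weaker by one unit. What I would try first is to exploit that $B$ is supported only on the off-diagonal block. On the $\phi\otimes\omega$ side one should then get the monotone inequality $S_\ell(\Gam_{\phi\otimes\omega})\le S_\ell(\Gam_\phi\oplus\Gam_\omega)$ with no loss, for example by pinching: the block-diagonal part is an average of $\Gam$ and its conjugate by the reflection $\id\oplus(-\id)$, and $S_\ell$ is convex. The full unit of loss is then placed on the $\psi\otimes\omega$ side only.
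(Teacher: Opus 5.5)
\textbf{The gap.} Your Step~2 puts the lossless inequality on the wrong side, and the mechanism you propose for it proves the reverse. A block-diagonal maximizer $J$ of $\Gam_\chi\oplus\Gam_\omega$ does kill the cross term. But it yields $S_\ell(\Gam_\chi\oplus\Gam_\omega)=\tfrac12\tr(J^T\Gam_{\chi\otimes\omega})\le S_\ell(\Gam_{\chi\otimes\omega})$, because a trial witness bounds a maximum from below, never from above. Pinching says the same thing. $\Gam_\chi\oplus\Gam_\omega$ is the average of $\Gam_{\chi\otimes\omega}$ and its conjugate by $\id\oplus(-\id)$, so convexity and orthogonal invariance of $S_\ell$ give $S_\ell(\Gam_\chi\oplus\Gam_\omega)\le S_\ell(\Gam_{\chi\otimes\omega})$.

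The inequality you actually need, $S_\ell(\Gam_{\phi\otimes\omega})\le S_\ell(\Gam_\phi\oplus\Gam_\omega)$, is false. Take $\phi=\omega=|+\rangle$: both single-mode covariances vanish. In the Jordan--Wigner convention of the SM, however, $(\Gam_{\phi\otimes\omega})_{23}=\langle X\rangle\langle X\rangle=1$, so $\rvec(\Gam_{\phi\otimes\omega})=(1,0)$ and $S_1=1>0$. It cannot hold in general either. Combined with the true lossless bound on the input side, it would give $\widetilde\rvec(\Gam_\psi)\prec_w\widetilde\rvec(\Gam_\phi)$ for every catalyzed conversion. That contradicts the $H$-gadget conversion $|0\rangle|+\rangle\to|+\rangle|+\rangle$ of the main text. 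Separately, Step~3 as written charges a unit on both sides, so the stated cancellation does not follow from it, as your last paragraph concedes.

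\textbf{The fix.} Swap the roles. Keep the lossless bound on the input, $S_\ell\bigl(\widetilde\rvec(\Gam_\psi)\sqcup\rvec(\Gam_\omega)\bigr)\le S_\ell(\Gam_{\psi\otimes\omega})$, and put the unit on the output. The cross term is $B=\hat u\hat v^T-\hat v\hat u^T$ with $\hat u=u\oplus0$ orthogonal to $\hat v=0\oplus v$. Hence $iB$ has a single positive eigenvalue $\beta=\|u\|\|v\|$, and $i\Gam_{\phi\otimes\omega}\le i(\Gam_\phi\oplus\Gam_\omega)+\beta\,ee^\dagger$. Rank-one Weyl interlacing then gives $\widetilde r_{j+1}(\Gam_{\phi\otimes\omega})\le\widetilde r_j(\Gam_\phi\oplus\Gam_\omega)$. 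With $\widetilde r_1\le1$ this yields $S_\ell(\Gam_{\phi\otimes\omega})\le1+S_{\ell-1}(\Gam_\phi\oplus\Gam_\omega)$. This is just your own index-shift statement used on the output side. A mere norm bound $S_\ell(\Gam_{\phi\otimes\omega})\le S_\ell(\Gam_\phi\oplus\Gam_\omega)+1$ would not permit the cancellation.

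Chaining through Corollary~\ref{cor:deterministic} gives $\widetilde\rvec(\Gam_\psi)\sqcup\rvec(\Gam_\omega)\sqcup(0)\prec_w(1)\sqcup\widetilde\rvec(\Gam_\phi)\sqcup\rvec(\Gam_\omega)$. Then $\rvec(\Gam_\omega)$ cancels by the convex-function test exactly as you planned.

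\textbf{Comparison with the paper.} Repaired this way, your route is a valid alternative. The paper instead compresses both products onto $V_S\oplus v_\omega^\perp$, where the cross block vanishes for every $\rho$ at once. It then extracts both the lossless input inequality and the one-unit output inequality from a single Cauchy interlacing. There the compressed catalyst spectrum $\bm d_\omega$, rather than $\rvec(\Gam_\omega)$, is the common part that cancels.
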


\begin{proof}[Proof sketch]
The
covariance matrix of a product has the block form
\begin{equation}
  \Gam_{\rho\otimes\omega}
  =\begin{pmatrix}
      \Gam_\rho & u_\rho v_\omega^T\\
      -v_\omega u_\rho^T & \Gam_\omega
    \end{pmatrix},
  \label{eq:product-cross-block}
\end{equation}
where $u_\rho$ and $v_\omega$ collect odd Majorana expectations.
Let $W=v_\omega^\perp$ in the catalyst Majorana space and let $B_W$ be an
isometry onto $W$. With $R=\id\oplus B_W$, the codimension-one compression
$C_\rho:=R^T\Gam_{\rho\otimes\omega}R$ has vanishing cross block and hence
$C_\rho=\Gam_\rho\oplus\Gam_{\omega,W}$. If $\bm d_\omega$ denotes the
paired singular values of the odd-dimensional matrix $\Gam_{\omega,W}$,
write $z_\rho=\widetilde\rvec(\Gam_\rho)\sqcup\bm d_\omega$. Cauchy interlacing theorem~\cite{Bhatia1997}
for the corresponding Hermitian matrices therefore gives
$\widetilde r_j(\Gam_{\rho\otimes\omega})\ge(z_\rho)_j\ge
\widetilde r_{j+1}(\Gam_{\rho\otimes\omega})$. Combining the two interlacing relations
with the necessary majorization condition for the catalyzed conversion
yields $z_\psi\sqcup(0)\prec_w z_\phi \sqcup(1)$. The common contribution
$\bm d_\omega$ cancels by the convex-function characterization of weak
majorization, giving the claim. See SM~\cite{SuppMat} for the complete
proof.
\end{proof}

In other words, a catalyst is worth at most
one maximally non-Gaussian mode adjoined to the input, regardless of the
catalyst and with no parity assumption on any of the three states. In particular, if
$\widetilde\rvec(\Gam_\psi)\sqcup(0)\not\prec_w
\widetilde\rvec(\Gam_\phi)\sqcup(1)$,
then no catalyst can enable the conversion $\psi\to\phi$. 
Catalytic gain is therefore sharply bounded at the level of the Williamson spectrum.

\textit{Asymptotic Gaussian transformations.---}
We now turn to the asymptotic regime, in which Gaussian protocols map
$\psi^{\otimes n}$ into $\phi^{\otimes m_n}$ \emph{approximately}, with an
error that vanishes as the number of copies grows~\cite{ChitambarGour2019}.
Concretely, the protocol outputs a pure state $\widetilde\phi_n$ with
$\|\widetilde\phi_n-\phi^{\otimes m_n}\|_1\le\epsilon_n\to0$, along a sequence of deterministic Gaussian protocols, and the
optimal rate of the conversion is the supremum over all such sequences,
\begin{equation}
  R(\psi\to\phi):=\limsup_{n\to\infty}\frac{m_n}{n} .
  \label{eq:rate-def}
\end{equation}
Any monotone that is additive, strongly monotone and asymptotically
continuous upper bounds such a rate~\cite{ChitambarGour2019}. These three
properties are established for the $f$-Williamson monotones in
Theorem~\ref{thm:admissible}; the parity restriction in the additivity
statement, needed for $\Phi_f(\psi^{\otimes n})=n\Phi_f(\psi)$, is
automatic under 
parity-conserving dynamics, which we assume below.

\begin{corollary}[Asymptotic conversion rate]
\label{cor:rate}
For pure non-Gaussian states $\psi$ and $\phi$ of definite fermionic parity,
\begin{equation}
  R(\psi\to\phi)\le\inf_{f}\frac{\Phi_f(\psi)}{\Phi_f(\phi)},
  \label{eq:general-rate}
\end{equation}
where the infimum runs over all admissible $f$ with $\Phi_f(\phi)>0$.
\end{corollary}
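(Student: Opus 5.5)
The plan is the standard monotone argument for asymptotic rates, applied to a fixed admissible $f$ with $\Phi_f(\phi)>0$ and then optimized over $f$. Fix such an $f$ and a sequence of deterministic Gaussian protocols taking $\psi^{\otimes n}$ to pure states $\widetilde\phi_n$ with $\|\widetilde\phi_n-\phi^{\otimes m_n}\|_1\le\epsilon_n\to0$.

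The chain of inequalities is as follows.
\begin{enumerate}[(a)]
\item By additivity, Theorem~\ref{thm:admissible}(ii), applied iteratively (each factor of $\psi$ has definite parity), $\Phi_f(\psi^{\otimes n})=n\Phi_f(\psi)$. Likewise $\Phi_f(\phi^{\otimes m_n})=m_n\Phi_f(\phi)$.
\item By strong monotonicity, Theorem~\ref{thm:admissible}(i), applied to a deterministic protocol with a single pure outcome (so monotonicity reduces to plain monotonicity), $n\Phi_f(\psi)\ge\Phi_f(\widetilde\phi_n)$. Unit padding from ancillas and recorded modes contributes $f(1)-f(1)=0$, so the output may be compared on the common mode number $N_n$.
\item By asymptotic continuity, Theorem~\ref{thm:admissible}(iv), we get $|\Phi_f(\widetilde\phi_n)-\Phi_f(\phi^{\otimes m_n})|\le N_n\,\delta(\epsilon_n)$ for some $\delta(\epsilon)\to0$. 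This should follow from a Lipschitz-type perturbation of Williamson values: $|r_j(\Gam)-r_j(\Gam')|\le\|\Gam-\Gam'\|_{\rm op}\lesssim\|\rho-\rho'\|_1$, combined with uniform continuity of $f$ on $[0,1]$.
\end{enumerate}
Combining these gives
\begin{equation*}
m_n\Phi_f(\phi)\le n\Phi_f(\psi)+N_n\,\delta(\epsilon_n).
\end{equation*}

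Dividing by $n\Phi_f(\phi)$ and taking $\limsup$ yields $R\le\Phi_f(\psi)/\Phi_f(\phi)$. This requires that $N_n/n$ stays bounded along any sequence relevant to the rate. I will handle this by noting that we may restrict to $m_n=O(n)$: if $m_n/n\to\infty$ the inequality would already be violated, since $N_n\ge m_n N_\phi$ appears only multiplied by $\delta(\epsilon_n)\to0$. Then we may discard surplus output modes beyond those of $\phi^{\otimes m_n}$. Discarding modes is itself a Gaussian operation, though it could mix the output; to stay within pure outcomes I will compare on the $m_nN_\phi$ modes of the target, measuring the extra modes in the occupation basis and using strong monotonicity for the resulting ensemble. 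Since $f$ was arbitrary among admissible functions with $\Phi_f(\phi)>0$, taking the infimum gives Eq.~\eqref{eq:general-rate}.

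The main obstacle is step (c): making asymptotic continuity uniform in the number of modes, with the error scaling linearly in $N_n$ and not worse. I expect to take this from Theorem~\ref{thm:admissible}(iv) as stated. If needed, I would fall back on the majorization-type bound $\sum_j|r_j-r_j'|\le\|\Gam-\Gam'\|_1$ (Lidskii/Mirsky for the Hermitian matrices $i\Gam$) together with a modulus of continuity of $f$.

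A second delicate point is ensuring that the auxiliary measurement used to shed extra modes keeps the target branch close to $\phi^{\otimes m_n}$. Here I would use gentle-measurement-type trace-distance bounds averaged over branches, again controlled by $\epsilon_n$.
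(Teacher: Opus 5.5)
Your proof is correct and follows the paper's route: additivity from definite parity, monotonicity under the deterministic pure-output protocol, and the continuity estimate $|\Phi_f(\rho)-\Phi_f(\sigma)|\le N\,\omega_f(\epsilon)$ (covariance stability, Weyl perturbation of the Williamson values, uniform continuity of $f$), followed by the infimum over admissible $f$. Your surplus-mode and gentle-measurement detour is unnecessary, because the rate definition already places $\widetilde\phi_n$ on the $m_nN_\phi$ modes of the target, so the continuity error is $m_nN_\phi\,\omega_f(\epsilon_n)$; rearranging as the paper does, $m_n\bigl[\Phi_f(\phi)-N_\phi\,\omega_f(\epsilon_n)\bigr]\le n\Phi_f(\psi)$, then bounds $m_n/n$ for all large $n$ without any separate $m_n=O(n)$ argument.
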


The optimization in Eq.~\eqref{eq:general-rate} can in fact be reduced to a finite one. For $c\in[0,1)$, define the truncated affine generator $f_c(r):=\max\{r,c\}$, which is admissible, with the corresponding Williamson monotone
\begin{equation}
W_c(\rho):=\Phi_{f_c}(\rho)
=\sum_j\min\bigl\{1-r_j(\Gamma_\rho),\,1-c\bigr\}.
\label{eq:capped-Williamson}
\end{equation}

\begin{proposition}[Finite optimization of the asymptotic Williamson bound] \label{prop:finite_opt}
Let $\psi$ and $\phi$ be pure non-Gaussian states of definite fermionic parity. Then
\begin{equation}
R(\psi\to\phi)
\leq
\min_{c\in\mathcal C_{\psi,\phi}}
\frac{W_c(\psi)}{W_c(\phi)},
\label{eq:finite-Williamson-bound}
\end{equation}
where
\begin{equation}
\mathcal C_{\psi,\phi}
:=
{0}
\cup
\{r_j(\Gamma_\psi):r_j(\Gamma_\psi)<1 \}
\cup
\{r_j(\Gamma_\phi):r_j(\Gamma_\phi)<1\}.
\label{eq:threshold-set}
\end{equation}
\end{proposition}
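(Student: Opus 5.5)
The bound itself follows from Corollary~\ref{cor:rate} once we check that each truncated generator $f_c$ is admissible. We then argue separately that restricting $c$ to the finite set $\mathcal C_{\psi,\phi}$ loses nothing relative to the full one-parameter family $\{W_c\}_{c\in[0,1)}$.

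\emph{Step 1 (admissibility).} Fix $c\in[0,1)$. The function $f_c(r)=\max\{r,c\}$ is a maximum of two affine functions, so it is continuous and convex on $[0,1]$. It is also nondecreasing. For $r<1$ we have $f_c(r)=\max\{r,c\}<1=f_c(1)$ because $c<1$. Hence $f_c$ meets every condition of Theorem~\ref{thm:admissible}. The identity $f_c(1)-f_c(r)=1-\max\{r,c\}=\min\{1-r,1-c\}$ gives the form \eqref{eq:capped-Williamson} of $W_c$.

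\emph{Step 2 (the bound for each $c$).} Since $\phi$ is non-Gaussian, some $r_j(\Gam_\phi)<1$, so the corresponding term $\min\{1-r_j,1-c\}$ is strictly positive. Thus $W_c(\phi)>0$ for every $c\in[0,1)$. By Corollary~\ref{cor:rate}, which applies because both states have definite parity, $R(\psi\to\phi)\le W_c(\psi)/W_c(\phi)$ for every $c\in[0,1)$. In particular this holds for every $c\in\mathcal C_{\psi,\phi}\subset[0,1)$. Since $\mathcal C_{\psi,\phi}$ is finite, the minimum exists, and \eqref{eq:finite-Williamson-bound} follows.

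\emph{Step 3 (no loss from the finite restriction).} We show that $\min_{c\in\mathcal C_{\psi,\phi}}W_c(\psi)/W_c(\phi)=\inf_{c\in[0,1)}W_c(\psi)/W_c(\phi)$. Order the elements of $\mathcal C_{\psi,\phi}$ as $0=c_0<c_1<\dots<c_K<1$.
\begin{itemize}
\item On each interval $[c_i,c_{i+1}]$, and on the last interval $[c_K,1)$, the set $\{j: r_j<c\}$ does not change as $c$ moves across the open interval, for both $\psi$ and $\phi$. Each $W_c$ is therefore affine in $c$ there: $W_c=\sum_{r_j\ge c}(1-r_j)+(1-c)\,\#\{j:r_j<c\}$, and at the left endpoint the two sides agree by continuity.
\item The ratio of two affine functions with positive denominator is monotone on each such interval, since its derivative has constant sign. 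Hence the infimum over the interval is approached at an endpoint.
\item On the last interval $[c_K,1)$, every $r_j<1$ satisfies $r_j\le c_K\le c$. So $W_c(\rho)=(1-c)\,\nuG(\rho)$ for both states, and the ratio is the constant $\nuG(\psi)/\nuG(\phi)$. This equals its value at $c_K$, so the open endpoint $c\to1$ produces nothing new.
\end{itemize}
Thus every value of the ratio on $[0,1)$ is at least the minimum over $\mathcal C_{\psi,\phi}$.

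The only step needing care is this last-interval analysis near $c\to1$, where both numerator and denominator vanish. The explicit factorization through $\nuG$ removes the degeneracy. The rest is routine bookkeeping of breakpoints.
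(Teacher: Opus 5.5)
Your proof is correct. For the inequality as literally stated, Steps 1--2 already suffice: each $f_c$ is admissible and $W_c(\phi)>0$, so Corollary~\ref{cor:rate} gives $R\le W_c(\psi)/W_c(\phi)$ at every $c\in\mathcal C_{\psi,\phi}$. Your Step 3 is essentially the paper's Lemma~\ref{lem:threshold}: $W_c$ is piecewise affine with breakpoints in $\mathcal C_{\psi,\phi}$, the ratio is monotone on each piece, and the tail is constant at $\nuG(\psi)/\nuG(\phi)$.

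Where the routes differ is in what the proof is aimed at. The paper's proof is built around the hinge representation of an arbitrary admissible generator,
\[
f(r)=f(0)+br+\int_{(0,1)}(r-c)_+\,d\mu(c).
\]
This yields $\Phi_f=\int_{[0,1)}W_c\,d\nu(c)$ with $\nu=b\,\delta_0+\mu$. Hence $\Phi_f(\psi)/\Phi_f(\phi)$ is an average of the ratios $W_c(\psi)/W_c(\phi)$ under the probability measure $W_c(\phi)\,d\nu(c)/\Phi_f(\phi)$. It is therefore at least $\min_{c\in\mathcal C_{\psi,\phi}}W_c(\psi)/W_c(\phi)$.

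That extra step shows the finite minimum coincides with the full infimum over all admissible $f$ in Corollary~\ref{cor:rate}. No admissible generator (antiflatness, occupation entropies, or any other) can give a tighter rate bound. This is the sense in which the main text says the optimization ``reduces'' to a finite one.

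Your Step 3 only shows no loss relative to the one-parameter family $\{W_c\}_{c\in[0,1)}$, so it does not by itself support that stronger reading. Your route is the more direct proof of the displayed inequality. The paper's route additionally establishes that the $W_c$ family is optimal among all $f$-Williamson monotones.
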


The proof is given in the SM~\cite{SuppMat}. It is instructive to consider the two endpoints of this optimization. At $c=0$, one has
$W_0(\rho)
=
\sum_j\bigl[1-r_j(\Gamma_\rho)\bigr]$.
On the other hand, if $c<1$ is at least as large as every nonunit Williamson value carried by $\psi$ and $\phi$, then $W_c(\rho)=(1-c)\nu_{\rm G}(\rho)$.
Eq.~\eqref{eq:finite-Williamson-bound} therefore gives
\begin{equation}
R(\psi\to\phi)
\leq
\frac{\nu_{\rm G}(\psi)}
{\nu_{\rm G}(\phi)}.
\label{eq:nullity-rate-bound}
\end{equation}
Although the nullity generator itself is discontinuous, for every fixed pair of states the bound in Eq.~\eqref{eq:nullity-rate-bound} is obtained from a continuous admissible generator $f_c$.

The two endpoints become optimal whenever one of the states has vanishing covariance matrix. First, suppose that $\Gamma_\psi=0$ on $N$ modes. Then
$W_c(\psi)=N(1-c)$ and $W_c(\phi)
\leq
(1-c)\nu_{\rm G}(\phi)$,
and therefore
\begin{equation}
\frac{W_c(\psi)}{W_c(\phi)}
\geq
\frac{N}{\nu_{\rm G}(\phi)}.
\end{equation}
If instead $\Gamma_\phi=0$ on $N$ modes, then $W_c(\phi)=N(1-c)$, while $\frac{\min\{1-r,1-c\}}{1-c}
\geq 1-r$
for every $r,c\in[0,1)$. Consequently,
\begin{equation}
\frac{W_c(\psi)}{W_c(\phi)}
\geq
\frac{W_0(\psi)}{N},
\end{equation}
with equality at $c=0$. We thus obtain the closed-form optimized bounds
\begin{equation}
  \Gam_\psi=0\Rightarrow R\le\frac{N}{\nuG(\phi)},
  \qquad
  \Gam_\phi=0\Rightarrow R\le\frac{W_0(\psi)}{N},
  \label{eq:two-directions}
\end{equation}
with $R:= R(\psi\to\phi)$.

The bounds of Eq.~\eqref{eq:two-directions} have implications for the
reversibility of state conversion in the approximate setting. Let $\omega$
be a pure state on $N$ modes with $\Gam_\omega=0$, and let $\chi$ be any
non-Gaussian pure state. Multiplying the two directions,
\begin{equation}
  R(\omega\to\chi)\,R(\chi\to\omega)\ \le\
  \frac{W_0(\chi)}{\nuG(\chi)}\ \le\ 1 ,
  \label{eq:round-trip}
\end{equation}
the final inequality holding because each nonunit Williamson value
contributes at most one to $W_0(\chi)$ and exactly one to $\nuG(\chi)$.

\begin{theorem}[Irreversibility]
\label{thm:irrev}
The resource theory of fermionic non-Gaussianity is irreversible under
approximate asymptotic Gaussian protocols: there exist pure states
$\omega,\chi$ for which the product
$R(\omega\to\chi)\,R(\chi\to\omega)$ is strictly smaller than one. That
product can moreover be made arbitrarily small.
\end{theorem}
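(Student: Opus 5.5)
The plan is to instantiate the round-trip bound, Eq.~\eqref{eq:round-trip}, with explicit states. By Eq.~\eqref{eq:round-trip}, for any $N$-mode pure state $\omega$ with $\Gam_\omega=0$ and any non-Gaussian pure $\chi$ of definite parity,
\begin{equation*}
R(\omega\to\chi)\,R(\chi\to\omega)\le\frac{W_0(\chi)}{\nuG(\chi)}
=\frac{1}{\nuG(\chi)}\sum_{j:\,r_j<1}\bigl(1-r_j(\Gam_\chi)\bigr).
\end{equation*}
It therefore suffices to exhibit $\chi$ whose nonunit Williamson values are all strictly positive. Then each term $1-r_j$ is strictly smaller than $1$, and the ratio is $<1$. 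To make the product arbitrarily small, I would take the nonunit values close to $1$.

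\textbf{Choice of $\omega$.} Both $\omega$ and $\chi$ must have definite parity so that Corollary~\ref{cor:rate} and Eq.~\eqref{eq:two-directions} apply. For $\omega$ I would take $|{\rm GHZ}_4\rangle$, or more simply a parity-even state with vanishing covariance. A GHZ state on $N\ge3$ modes has $\Gam=0$, as noted around Eq.~\eqref{eq:ghz-incomplete}. For even $N$, $|0\rangle^{\otimes N}+|1\rangle^{\otimes N}$ has definite (even) parity, so $\omega={\rm GHZ}_4$ works.

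\textbf{Choice of $\chi$.} For $\chi$ I would take a two-mode even-parity state $\chi_\theta=\cos\theta|00\rangle+\sin\theta|11\rangle$. This is actually Gaussian (it is a BCS pair), so it is excluded. A genuinely non-Gaussian family is needed instead, for instance four modes with
\[
\chi_\epsilon=\sqrt{1-\epsilon}\,|0000\rangle+\sqrt{\epsilon}\,|1111\rangle .
\]
This state has even parity. Its covariance is block diagonal in the occupation basis, with $\langle -i\gamma_{2j-1}\gamma_{2j}\rangle=1-2\epsilon$ for each mode and no off-diagonal two-point terms, since the quartic term does not contribute to quadratic correlators. Hence $r_j=1-2\epsilon$ for all $j$, so $\nuG=4$ and $W_0=8\epsilon$ for $0<\epsilon<1/2$. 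The product is then bounded by $2\epsilon$, which is strictly less than one for $\epsilon<1/2$ and tends to zero as $\epsilon\to0$. For $\epsilon\neq0$, $\chi_\epsilon$ is non-Gaussian because $r_j<1$, which is consistent with the Gaussianity criterion.

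\textbf{Main obstacle.} I expect the main difficulty to be confirming that the rates are well defined, that is, that the bound is not vacuous through a degenerate $0\cdot\infty$ product. Since both rates are bounded by Eq.~\eqref{eq:two-directions}, each is finite: $R(\omega\to\chi)\le N/\nuG(\chi)=1$ and $R(\chi\to\omega)\le W_0(\chi)/N=2\epsilon$. The product is therefore at most $2\epsilon$, whatever the actual rates are. If a rate is zero, the product is trivially $0<1$. If one prefers nonzero rates, one can note that $\chi_\epsilon$ is obtainable from GHZ-type resources by Gaussian operations; this is optional, since the theorem only asks that the product be strictly smaller than one.
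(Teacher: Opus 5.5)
Your proposal is correct and follows the paper's argument. You instantiate the round-trip bound $R(\omega\to\chi)\,R(\chi\to\omega)\le W_0(\chi)/\nuG(\chi)$ with the four-mode even-parity family $\sqrt{1-\epsilon}\,|0000\rangle+\sqrt{\epsilon}\,|1111\rangle$. This is the paper's $\cos t\,|0000\rangle+\sin t\,|1111\rangle$ with $\epsilon=\sin^2 t$, giving the bound $2\epsilon=1-\cos 2t\to0$; your explicit choice $\omega={\rm GHZ}_4$ simply makes concrete the definite-parity, vanishing-covariance state that the paper leaves generic.
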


Both claims follow from Eq.~\eqref{eq:round-trip}. The inequality
$W_0(\chi)\le\nuG(\chi)$ is strict whenever $\chi$ carries a Williamson
value in the open interval $(0,1)$, which gives the first statement; and
letting the nonunit values of $\chi$ approach one drives
$W_0(\chi)/\nuG(\chi)$ to zero, which gives the second. The irreversibility
is therefore arbitrarily strong, as the product of the two rates admits no positive lower bound. 
An analogous statement for exact postselected
conversion was obtained in Ref.~\cite{tarabunga2026fermionic}.

The contrast with entanglement is stark. Bipartite pure-state entanglement
is asymptotically \emph{reversible}, the rate of conversion between two pure
states being the ratio of their entanglement
entropies~\cite{BennettEtAl1996}, and irreversibility sets in only in the
mixed-state setting~\cite{VidalCirac2001}.
Theorem~\ref{thm:irrev} shows that fermionic non-Gaussianity is irreversible
already at the pure-state level.

\textit{Conclusions.---}
Williamson majorization organizes the manipulation theory of fermionic
non-Gaussianity around a single spectral order: deterministic-conversion
conditions, single-shot bounds, complete monotone families, catalytic constraints, and
asymptotic rates all descend from Theorem~\ref{thm:main}, in close formal
parallel to pure-state entanglement yet with important differences. 
Concatenation makes the spectral monotones additive and therefore abundant,
so that no distinguished measure of fermionic non-Gaussianity exists; it
forbids catalytic majorization in the parity-superselected theory; and it renders
asymptotic interconversion arbitrarily irreversible already for pure
states. Parity coherence---a genuinely fermionic phenomenon with no
entanglement counterpart---reopens catalysis, but only to a sharply bounded extent: any catalyst can supply at most one maximally non-Gaussian mode.

Our results open several avenues for exploration. States that no spectral monotone distinguishes may still be non-interconvertible, their obstruction certified only by monotones of higher order in the Majorana correlators, such as the bridge degree of Ref.~\cite{tarabunga2026fermionic}: is there a complete set of monotones for pure-state Gaussian convertibility, furnished perhaps by the commutant of the Gaussian group on several copies~\cite{sierant2026theorymatchgatecommutant,braccia2026commutant,lastres2026geometry}? Which conversion rates are achievable, i.e., how far are the converse bounds from optimal, both in the single-shot and asymptotic setting? Does the majorization order survive for bosonic Gaussian resources~\cite{TakagiZhuang2018,AlbarelliGenoniSerafiniFerraro2018,LamiEtAl2018}, where the compactness $r_j\le1$ of the fermionic spectrum has no counterpart? Finally, every quantity in this Letter is built from two-point Majorana correlators and is measurable for pure states with presently available matchgate control~\cite{SierantStornatiTurkeshi2026,HaugTurkeshiSierant2026} (on mixed states, spectral deficits additionally reflect classical mixedness): the conversion bounds derived here are experimentally accessible tools for tracking how interactions generate fermionic magic in many-body dynamics~\cite{TirritoTurkeshiSierant2024,PaviglianitiEtAl2026,trigueros2026unitarydesigns,santra2025quantumresourcesnonabelianlattice,falcao2026fermionicmagicresourcesdisordered,ares2026nongaussianityrandomquantumstates,79vj-nx6r,aditya2025mpembaeffectsquantumcomplexity,ares2026asymmetrylowerboundfermionic,matsuda2026quantumcomputationalresourcesconformal,cqfh-mbwk,zavatti2026quantummagicstronglycorrelated, Swietek26one, bhakuni2026quantumresourcesdisorderfreelocalization,bera2015mbl}.

\begin{acknowledgments}
\textit{Acknowledgments.---}
P.S.T. thanks Ra{\'u}l Morral-Yepes and Marc Langer for useful discussions.
X.T. acknowledges support from DFG Emmy Noether Programme proposal
``Digital Quantum Matter Out-of-Equilibrium'' No. 560726973, DFG under
Germany's Excellence Strategy -- Cluster of Excellence Matter and Light for
Quantum Computing (ML4Q) EXC 2004/2 -- 390534769, and DFG Collaborative
Research Center (CRC) 183 Project No. 277101999 -- project B01.
P.S. acknowledges fellowship within the ``Generaci\'on D'' initiative,
Red.es, Ministerio para la Transformaci\'on Digital y de la Funci\'on
P\'ublica, for talent attraction (C005/24-ED CV1), funded by the European
Union NextGenerationEU funds, through PRTR. P.S.T. acknowledges funding from the European Research Council (ERC) under the European Union (ERC, DynaQuant, No. 101169765).

ChatGPT 5.6 Sol was used in the development of the proof of Theorem~\ref{thm:main}. The resulting proof was independently verified by the authors. Claude Fable 5 was used for assistance with manuscript writing.

\end{acknowledgments}

\bibliography{FAF_strong_monotone_refs}

@misc{bhakuni2026quantumresourcesdisorderfreelocalization,
      title={Quantum Resources in Disorder-Free Localization Dynamics of Gauge Theories}, 
      author={Devendra Singh Bhakuni and Giovanni Cataldi and Jad C. Halimeh and Emanuele Tirrito},
      year={2026},
      eprint={2607.28730},
      archivePrefix={arXiv},
      primaryClass={quant-ph},
      url={https://arxiv.org/abs/2607.28730}, 
}

@misc{coffman2025measuringnongaussianmagicfermions,
      title={Measuring Non-Gaussian Magic in Fermions: Convolution, Entropy, and the Violation of Wick's Theorem and the Matchgate Identity}, 
      author={Luke Coffman and Graeme Smith and Xun Gao},
      year={2025},
      eprint={2501.06179},
      archivePrefix={arXiv},
      primaryClass={quant-ph},
      url={https://arxiv.org/abs/2501.06179}, 
}

@article{Pachos2022quantifying,
  doi = {10.22331/q-2022-10-13-840},
  url = {https://doi.org/10.22331/q-2022-10-13-840},
  title = {Quantifying fermionic interactions from the violation of {W}ick's theorem},
  author = {Pachos, Jiannis K. and Vlachou, Chrysoula},
  journal = {{Quantum}},
  issn = {2521-327X},
  publisher = {{Verein zur F{\"{o}}rderung des Open Access Publizierens in den Quantenwissenschaften}},
  volume = {6},
  pages = {840},
  month = oct,
  year = {2022}
}

@article{cqfh-mbwk,
  title = {Local classical correlations between physical electrons in Hubbard systems},
  author = {Bellomia, Gabriele and Amaricci, Adriano and Capone, Massimo},
  journal = {Phys. Rev. B},
  volume = {113},
  issue = {15},
  pages = {155158},
  numpages = {10},
  year = {2026},
  month = {Apr},
  publisher = {American Physical Society},
  doi = {10.1103/cqfh-mbwk},
  url = {https://link.aps.org/doi/10.1103/cqfh-mbwk}
}

@misc{zavatti2026quantummagicstronglycorrelated,
      title={Quantum magic of strongly correlated fermions $-$ the Hubbard dimer}, 
      author={Edoardo Zavatti and Gabriele Bellomia and Massimo Capone},
      year={2026},
      eprint={2605.18494},
      archivePrefix={arXiv},
      primaryClass={quant-ph},
      url={https://arxiv.org/abs/2605.18494}, 
}

@article{79vj-nx6r,
  title = {Growth and spreading of quantum resources under random circuit dynamics},
  author = {Aditya, Sreemayee and Turkeshi, Xhek and Sierant, Piotr},
  journal = {Phys. Rev. Res.},
  volume = {8},
  issue = {3},
  pages = {033062},
  numpages = {13},
  year = {2026},
  month = {Jul},
  publisher = {American Physical Society},
  doi = {10.1103/79vj-nx6r},
  url = {https://link.aps.org/doi/10.1103/79vj-nx6r}
}

@misc{matsuda2026quantumcomputationalresourcesconformal,
      title={Quantum Computational Resources and Conformal Field Theory: Unifying Spins, Bosons, and Fermions}, 
      author={Ryota Matsuda and Masahiro Hoshino and Yuto Ashida},
      year={2026},
      eprint={2607.05343},
      archivePrefix={arXiv},
      primaryClass={quant-ph},
      url={https://arxiv.org/abs/2607.05343}, 
}

@misc{ares2026asymmetrylowerboundfermionic,
      title={An asymmetry lower bound on fermionic non-Gaussianity}, 
      author={Filiberto Ares and Michele Mazzoni and Sara Murciano and Dávid Szász-Schagrin and Pasquale Calabrese and Lorenzo Piroli},
      year={2026},
      eprint={2603.16762},
      archivePrefix={arXiv},
      primaryClass={quant-ph},
      url={https://arxiv.org/abs/2603.16762}, 
}

@misc{aditya2025mpembaeffectsquantumcomplexity,
      title={Mpemba Effects in Quantum Complexity}, 
      author={Sreemayee Aditya and Alessandro Summer and Piotr Sierant and Xhek Turkeshi},
      year={2025},
      eprint={2509.22176},
      archivePrefix={arXiv},
      primaryClass={quant-ph},
      url={https://arxiv.org/abs/2509.22176}, 
}

@misc{ares2026nongaussianityrandomquantumstates,
      title={Non-Gaussianity of random quantum states}, 
      author={Filiberto Ares and Sara Murciano and Pasquale Calabrese},
      year={2026},
      eprint={2605.18986},
      archivePrefix={arXiv},
      primaryClass={cond-mat.stat-mech},
      url={https://arxiv.org/abs/2605.18986}, 
}

@misc{falcao2026fermionicmagicresourcesdisordered,
      title={Fermionic magic resources in disordered quantum spin chains}, 
      author={Pedro R. Nicácio Falcão and Jakub Zakrzewski and Piotr Sierant},
      year={2026},
      eprint={2602.00245},
      archivePrefix={arXiv},
      primaryClass={quant-ph},
      url={https://arxiv.org/abs/2602.00245}, 
}

@article{PhysRevResearch.6.023176,
  title = {Measurement-induced transitions beyond Gaussianity: A single particle description},
  author = {Lumia, Luca and Tirrito, Emanuele and Fazio, Rosario and Collura, Mario},
  journal = {Phys. Rev. Res.},
  volume = {6},
  issue = {2},
  pages = {023176},
  numpages = {10},
  year = {2024},
  month = {May},
  publisher = {American Physical Society},
  doi = {10.1103/PhysRevResearch.6.023176},
  url = {https://link.aps.org/doi/10.1103/PhysRevResearch.6.023176}
}

@misc{debertolis2025naturalsuperorbitalsrepresentationmanybody,
      title={Natural super-orbitals representation of many-body operators}, 
      author={Maxime Debertolis},
      year={2025},
      eprint={2507.10690},
      archivePrefix={arXiv},
      primaryClass={cond-mat.str-el},
      url={https://arxiv.org/abs/2507.10690}, 
}

@article{Bittel2025paclearningoffree,
  doi = {10.22331/q-2025-03-20-1665},
  url = {https://doi.org/10.22331/q-2025-03-20-1665},
  title = {{PAC}-learning of free-fermionic states is {NP}-hard},
  author = {Bittel, Lennart and Mele, Antonio A. and Eisert, Jens and Leone, Lorenzo},
  journal = {{Quantum}},
  issn = {2521-327X},
  publisher = {{Verein zur F{\"{o}}rderung des Open Access Publizierens in den Quantenwissenschaften}},
  volume = {9},
  pages = {1665},
  month = mar,
  year = {2025}
}

@article{PhysRevA.102.052604,
  title = {Computational power of matchgates with supplementary resources},
  author = {Hebenstreit, M. and Jozsa, R. and Kraus, B. and Strelchuk, S.},
  journal = {Phys. Rev. A},
  volume = {102},
  issue = {5},
  pages = {052604},
  numpages = {15},
  year = {2020},
  month = {Nov},
  publisher = {American Physical Society},
  doi = {10.1103/PhysRevA.102.052604},
  url = {https://link.aps.org/doi/10.1103/PhysRevA.102.052604}
}

@article{PhysRevLett.120.190501,
  title = {Fidelity Witnesses for Fermionic Quantum Simulations},
  author = {Gluza, M. and Kliesch, M. and Eisert, J. and Aolita, L.},
  journal = {Phys. Rev. Lett.},
  volume = {120},
  issue = {19},
  pages = {190501},
  numpages = {7},
  year = {2018},
  month = {May},
  publisher = {American Physical Society},
  doi = {10.1103/PhysRevLett.120.190501},
  url = {https://link.aps.org/doi/10.1103/PhysRevLett.120.190501}
}

@misc{santra2025quantumresourcesnonabelianlattice,
      title={Quantum Resources in Non-Abelian Lattice Gauge Theories: Nonstabilizerness, Multipartite Entanglement, and Fermionic Non-Gaussianity}, 
      author={Gopal Chandra Santra and Julius Mildenberger and Edoardo Ballini and Alberto Bottarelli and Matteo M. Wauters and Philipp Hauke},
      year={2025},
      eprint={2510.07385},
      archivePrefix={arXiv},
      primaryClass={quant-ph},
      url={https://arxiv.org/abs/2510.07385}, 
}

@misc{lyu2024fermionicgaussiantestingnongaussian,
      title={Fermionic Gaussian Testing and Non-Gaussian Measures via Convolution}, 
      author={Xingjian Lyu and Kaifeng Bu},
      year={2024},
      eprint={2409.08180},
      archivePrefix={arXiv},
      primaryClass={quant-ph},
      url={https://arxiv.org/abs/2409.08180}, 
}

@book{FetterWalecka03,
  title     = {Quantum Theory of Many-Particle Systems},
  author    = {Fetter, Alexander L. and Walecka, John Dirk},
  publisher = {Dover Publications},
  address   = {Mineola, NY},
  year      = {2003},
  note      = {Original edition: McGraw-Hill, 1971}
}

@book{Negele18quantum,
  title={Quantum many-particle systems},
  author={Negele, John W and Orland, Henri},
  year={2018},
  series={Frontiers in Physics},
  publisher={CRC Press}
}

@book{AltlandSimons10,
  title     = {Condensed Matter Field Theory},
  author    = {Altland, Alexander and Simons, Ben D.},
  edition   = {2},
  publisher = {Cambridge University Press},
  address   = {Cambridge},
  year      = {2010},
  doi       = {10.1017/CBO9780511789984}
}

@book{deGennes99,
  title     = {Superconductivity of Metals and Alloys},
  author    = {de Gennes, Pierre-Gilles},
  publisher = {Westview Press},
  address   = {Boulder, CO},
  year      = {1999},
  note      = {Original edition: W. A. Benjamin, 1966}
}

@article{Schrieffer18,
  title = {Theory of Superconductivity},
  author = {Bardeen, J. and Cooper, L. N. and Schrieffer, J. R.},
  journal = {Phys. Rev.},
  volume = {108},
  issue = {5},
  pages = {1175--1204},
  numpages = {0},
  year = {1957},
  month = {Dec},
  publisher = {American Physical Society},
  doi = {10.1103/PhysRev.108.1175},
  url = {https://link.aps.org/doi/10.1103/PhysRev.108.1175}
}

@article{ChitambarGour2019,
  author  = {Chitambar, Eric and Gour, Gilad},
  title   = {Quantum resource theories},
  journal = {Rev. Mod. Phys.},
  volume  = {91},
  pages   = {025001},
  year    = {2019},
  doi     = {10.1103/RevModPhys.91.025001},
}

@article{Nielsen1999,
  author  = {Nielsen, Michael A.},
  title   = {Conditions for a class of entanglement transformations},
  journal = {Phys. Rev. Lett.},
  volume  = {83},
  pages   = {436},
  year    = {1999},
  doi     = {10.1103/PhysRevLett.83.436},
}

@article{NielsenVidal2001,
  author  = {Nielsen, Michael A. and Vidal, Guifr{\'e}},
  title   = {Majorization and the interconversion of bipartite states},
  journal = {Quantum Inf. Comput.},
  volume  = {1},
  pages   = {76},
  year    = {2001},
}

@article{JonathanPlenio1999a,
  author  = {Jonathan, Daniel and Plenio, Martin B.},
  title   = {Minimal conditions for local pure-state entanglement manipulation},
  journal = {Phys. Rev. Lett.},
  volume  = {83},
  pages   = {1455},
  year    = {1999},
  doi     = {10.1103/PhysRevLett.83.1455},
}

@article{Vidal1999,
  author  = {Vidal, Guifr{\'e}},
  title   = {Entanglement of pure states for a single copy},
  journal = {Phys. Rev. Lett.},
  volume  = {83},
  pages   = {1046},
  year    = {1999},
  doi     = {10.1103/PhysRevLett.83.1046},
}

@article{PopescuRohrlich1997,
  author  = {Popescu, Sandu and Rohrlich, Daniel},
  title   = {Thermodynamics and the measure of entanglement},
  journal = {Phys. Rev. A},
  volume  = {56},
  pages   = {R3319},
  year    = {1997},
  doi     = {10.1103/PhysRevA.56.R3319},
}

@article{DonaldHorodeckiRudolph2002,
  author  = {Donald, Matthew J. and Horodecki, Micha{\l} and Rudolph, Oliver},
  title   = {The uniqueness theorem for entanglement measures},
  journal = {J. Math. Phys.},
  volume  = {43},
  pages   = {4252},
  year    = {2002},
  doi     = {10.1063/1.1495917},
}

@article{JonathanPlenio1999b,
  author  = {Jonathan, Daniel and Plenio, Martin B.},
  title   = {Entanglement-assisted local manipulation of pure quantum states},
  journal = {Phys. Rev. Lett.},
  volume  = {83},
  pages   = {3566},
  year    = {1999},
  doi     = {10.1103/PhysRevLett.83.3566},
}

@article{Turgut2007,
  author  = {Turgut, S.},
  title   = {Catalytic transformations for bipartite pure states},
  journal = {J. Phys. A: Math. Theor.},
  volume  = {40},
  pages   = {12185},
  year    = {2007},
  doi     = {10.1088/1751-8113/40/40/012},
}

@article{Klimesh2007,
  author  = {Klimesh, M.},
  title   = {Inequalities that collectively completely characterize the catalytic majorization relation},
  journal = {arXiv:0709.3680},
  year    = {2007},
}

@article{BrodGalvao2012,
  author  = {Brod, Daniel J. and Galv\~{a}o, Ernesto F.},
  title   = {Geometries for universal quantum computation with matchgates},
  journal = {Phys. Rev. A},
  volume  = {86},
  pages   = {052307},
  year    = {2012},
  doi     = {10.1103/PhysRevA.86.052307},
}

@article{BennettEtAl1996,
  author  = {Bennett, Charles H. and Bernstein, Herbert J. and Popescu, Sandu and Schumacher, Benjamin},
  title   = {Concentrating partial entanglement by local operations},
  journal = {Phys. Rev. A},
  volume  = {53},
  pages   = {2046},
  year    = {1996},
  doi     = {10.1103/PhysRevA.53.2046},
}

@article{VidalCirac2001,
  author  = {Vidal, Guifr{\'e} and Cirac, J. Ignacio},
  title   = {Irreversibility in asymptotic manipulations of entanglement},
  journal = {Phys. Rev. Lett.},
  volume  = {86},
  pages   = {5803},
  year    = {2001},
  doi     = {10.1103/PhysRevLett.86.5803},
}

@article{bera2015mbl,
  title = {Many-Body Localization Characterized from a One-Particle Perspective},
  author = {Bera, Soumya and Schomerus, Henning and Heidrich-Meisner, Fabian and Bardarson, Jens H.},
  journal = {Physical Review Letters},
  volume = {115},
  issue = {4},
  pages = {046603},
  numpages = {5},
  year = {2015},
  month = jul,
  publisher = {American Physical Society},
  doi = {10.1103/PhysRevLett.115.046603},
  url = {https://link.aps.org/doi/10.1103/PhysRevLett.115.046603}
}

@article{AmicoFazioOsterlohVedral2008,
  author  = {Amico, Luigi and Fazio, Rosario and Osterloh, Andreas and Vedral, Vlatko},
  title   = {Entanglement in many-body systems},
  journal = {Rev. Mod. Phys.},
  volume  = {80},
  pages   = {517},
  year    = {2008},
  doi     = {10.1103/RevModPhys.80.517},
}

@article{HorodeckiEtAl2009,
  author  = {Horodecki, Ryszard and Horodecki, Pawe{\l} and Horodecki, Micha{\l} and Horodecki, Karol},
  title   = {Quantum entanglement},
  journal = {Rev. Mod. Phys.},
  volume  = {81},
  pages   = {865},
  year    = {2009},
  doi     = {10.1103/RevModPhys.81.865},
}

@misc{sierant2026theorymatchgatecommutant,
      title={Theory of the Matchgate Commutant}, 
      author={Piotr Sierant and Xhek Turkeshi and Poetri Sonya Tarabunga},
      year={2026},
      eprint={2603.12392},
      archivePrefix={arXiv},
      url={https://arxiv.org/abs/2603.12392}, 
}

@misc{braccia2026commutant,
      title={The commutant of fermionic Gaussian unitaries}, 
      author={Paolo Braccia and N. L. Diaz and Martin Larocca and M. Cerezo and Diego García-Martín},
      year={2026},
      eprint={2603.19210},
      archivePrefix={arXiv},
      primaryClass={quant-ph},
      url={https://arxiv.org/abs/2603.19210}, 
}

@article{Swietek26one,
  title={One-Body Purity, Non-Gaussianity, and Entanglement in Interacting Integrable Models},
  author={{\'S}wi{\k{e}}tek, Rafa{\l} and Kliczkowski, Maksymilian and Vidmar, Lev and Rigol, Marcos},
  journal={arXiv preprint arXiv:2607.01326},
  year={2026}
}

@misc{lastres2026geometry,
      title={Geometry of Free Fermion Commutants}, 
      author={Marco Lastres and Sanjay Moudgalya},
      year={2026},
      eprint={2604.05031},
      archivePrefix={arXiv},
      primaryClass={quant-ph},
      url={https://arxiv.org/abs/2604.05031}, 
}

@article{TirritoTurkeshiSierant2024,
  title = {Anticoncentration and Nonstabilizerness Spreading under Ergodic Quantum Dynamics},
  author = {Tirrito, Emanuele and Turkeshi, Xhek and Sierant, Piotr},
  journal = {Phys. Rev. Lett.},
  volume = {135},
  issue = {22},
  pages = {220401},
  numpages = {9},
  year = {2025},
  month = {Nov},
  publisher = {American Physical Society},
  doi = {10.1103/1jzy-sk9r},
  url = {https://link.aps.org/doi/10.1103/1jzy-sk9r}
}

@misc{Greenberger07,
      title={Going Beyond Bell's Theorem}, 
      author={Daniel M. Greenberger and Michael A. Horne and Anton Zeilinger},
      year={2007},
      eprint={0712.0921},
      archivePrefix={arXiv},
      primaryClass={quant-ph},
      url={https://arxiv.org/abs/0712.0921}, 
}

@article{StreltsovAdessoPlenio2017,
  author  = {Streltsov, Alexander and Adesso, Gerardo and Plenio, Martin B.},
  title   = {Colloquium: Quantum coherence as a resource},
  journal = {Rev. Mod. Phys.},
  volume  = {89},
  pages   = {041003},
  year    = {2017},
  doi     = {10.1103/RevModPhys.89.041003},
}

@article{DuBaiGuo2015,
  author  = {Du, Shuanping and Bai, Zhaofang and Guo, Yu},
  title   = {Conditions for coherence transformations under incoherent operations},
  journal = {Phys. Rev. A},
  volume  = {91},
  pages   = {052120},
  year    = {2015},
  doi     = {10.1103/PhysRevA.91.052120},
  note    = {Erratum: Phys. Rev. A \textbf{95}, 029901 (2017)},
}

@article{HorodeckiOppenheim2013,
  author  = {Horodecki, Micha{\l} and Oppenheim, Jonathan},
  title   = {Fundamental limitations for quantum and nanoscale thermodynamics},
  journal = {Nat. Commun.},
  volume  = {4},
  pages   = {2059},
  year    = {2013},
  doi     = {10.1038/ncomms3059},
}

@article{BrandaoEtAl2015,
  author  = {Brand{\~a}o, Fernando and Horodecki, Micha{\l} and Ng, Nelly and
             Oppenheim, Jonathan and Wehner, Stephanie},
  title   = {The second laws of quantum thermodynamics},
  journal = {Proc. Natl. Acad. Sci. U.S.A.},
  volume  = {112},
  pages   = {3275},
  year    = {2015},
  doi     = {10.1073/pnas.1411728112},
}

@article{GourEtAl2015,
  author  = {Gour, Gilad and M{\"u}ller, Markus P. and Narasimhachar, Varun and
             Spekkens, Robert W. and Yunger Halpern, Nicole},
  title   = {The resource theory of informational nonequilibrium in thermodynamics},
  journal = {Phys. Rep.},
  volume  = {583},
  pages   = {1},
  year    = {2015},
  doi     = {10.1016/j.physrep.2015.04.003},
  eprint  = {1309.6586},
}

@article{Valiant2002,
  author  = {Valiant, Leslie G.},
  title   = {Quantum circuits that can be simulated classically in polynomial time},
  journal = {SIAM J. Comput.},
  volume  = {31},
  pages   = {1229},
  year    = {2002},
  doi     = {10.1137/S0097539700377025},
}

@article{TerhalDiVincenzo2002,
  author  = {Terhal, Barbara M. and DiVincenzo, David P.},
  title   = {Classical simulation of noninteracting-fermion quantum circuits},
  journal = {Phys. Rev. A},
  volume  = {65},
  pages   = {032325},
  year    = {2002},
  doi     = {10.1103/PhysRevA.65.032325},
}

@article{Knill2001,
  author = {Knill, Emanuel},
  title  = {Fermionic linear optics and matchgates},
  year   = {2001},
  eprint = {quant-ph/0108033},
  journal = {arXiv:quant-ph/0108033},
}

@article{JozsaMiyake2008,
  author  = {Jozsa, Richard and Miyake, Akimasa},
  title   = {Matchgates and classical simulation of quantum circuits},
  journal = {Proc. R. Soc. A},
  volume  = {464},
  pages   = {3089},
  year    = {2008},
  doi     = {10.1098/rspa.2008.0189},
}

@article{Bravyi2005,
  author  = {Bravyi, Sergey},
  title   = {Lagrangian representation for fermionic linear optics},
  journal = {Quantum Inf. Comput.},
  volume  = {5},
  pages   = {216},
  year    = {2005},
  eprint  = {quant-ph/0404180},
}

@article{bravyi2002fermionic,
  title = {Fermionic Quantum Computation},
  volume = {298},
  ISSN = {0003-4916},
  url = {http://dx.doi.org/10.1006/aphy.2002.6254},
  DOI = {10.1006/aphy.2002.6254},
  number = {1},
  journal = {Annals of Physics},
  publisher = {Elsevier BV},
  author = {Bravyi,  Sergey B. and Kitaev,  Alexei Yu.},
  year = {2002},
  month = May,
  pages = {210–226}
}

@article{DiVincenzoTerhal2005,
  author  = {DiVincenzo, David P. and Terhal, Barbara M.},
  title   = {Fermionic linear optics revisited},
  journal = {Found. Phys.},
  volume  = {35},
  pages   = {1967},
  year    = {2005},
  doi     = {10.1007/s10701-005-8657-0},
}

@misc{trigueros2026unitarydesigns,
      title={Unitary Designs from Doped Matchgate Circuits}, 
      author={Fabian Ballar Trigueros and Zheng-Hang Sun and Xhek Turkeshi and Piotr Sierant and Poetri Sonya Tarabunga},
      year={2026},
      eprint={2606.23800},
      archivePrefix={arXiv},
      primaryClass={quant-ph},
      url={https://arxiv.org/abs/2606.23800}, 
}

@article{PeschelEisler2009,
  author  = {Peschel, Ingo and Eisler, Viktor},
  title   = {Reduced density matrices and entanglement entropy in free lattice models},
  journal = {J. Phys. A},
  volume  = {42},
  pages   = {504003},
  year    = {2009},
  doi     = {10.1088/1751-8113/42/50/504003},
}

@article{SuraceTagliacozzo2022,
  author  = {Surace, Jacopo and Tagliacozzo, Luca},
  title   = {Fermionic {G}aussian states: an introduction to numerical approaches},
  journal = {SciPost Phys. Lect. Notes},
  volume  = {54},
  pages   = {1},
  year    = {2022},
  doi     = {10.21468/SciPostPhysLectNotes.54},
  eprint  = {2111.08343},
}

@article{Oszmaniec22,
  title = {Fermion Sampling: A Robust Quantum Computational Advantage Scheme Using Fermionic Linear Optics and Magic Input States},
  author = {Oszmaniec, Micha\l{} and Dangniam, Ninnat and Morales, Mauro E.S. and Zimbor\'as, Zolt\'an},
  journal = {PRX Quantum},
  volume = {3},
  issue = {2},
  pages = {020328},
  numpages = {54},
  year = {2022},
  month = {May},
  publisher = {American Physical Society},
  doi = {10.1103/PRXQuantum.3.020328},
  url = {https://link.aps.org/doi/10.1103/PRXQuantum.3.020328}
}

@article{HebenstreitJozsaKrausStrelchuk2019,
  author  = {Hebenstreit, Martin and Jozsa, Richard and Kraus, Barbara and Strelchuk, Sergii and Yoganathan, Mithuna},
  title   = {All pure fermionic non-{G}aussian states are magic states for matchgate computations},
  journal = {Phys. Rev. Lett.},
  volume  = {123},
  pages   = {080503},
  year    = {2019},
  doi     = {10.1103/PhysRevLett.123.080503},
}

@article{DiasKoenig2024,
  author  = {Dias, Beatriz and K{\"o}nig, Robert},
  title   = {Classical simulation of non-{G}aussian fermionic circuits},
  journal = {Quantum},
  volume  = {8},
  pages   = {1350},
  year    = {2024},
  doi     = {10.22331/q-2024-05-21-1350},
}

@article{ReardonSmithOszmaniecKorzekwa2024,
  author  = {Reardon-Smith, Oliver and Oszmaniec, Micha{\l} and Korzekwa, Kamil},
  title   = {Improved simulation of quantum circuits dominated by free fermionic operations},
  journal = {Quantum},
  volume  = {8},
  pages   = {1549},
  year    = {2024},
  doi     = {10.22331/q-2024-12-04-1549},
}

@article{GigenaDiTullioRossignoli2020,
  author  = {Gigena, Nicol{\'a}s and Di Tullio, Marco and Rossignoli, Ra{\'u}l},
  title   = {One-body entanglement as a quantum resource in fermionic systems},
  journal = {Phys. Rev. A},
  volume  = {102},
  pages   = {042410},
  year    = {2020},
  doi     = {10.1103/PhysRevA.102.042410},
}

@article{SierantStornatiTurkeshi2026,
  author  = {Sierant, Piotr and Stornati, Paolo and Turkeshi, Xhek},
  title   = {Fermionic magic resources of quantum many-body systems},
  journal = {PRX Quantum},
  volume  = {7},
  pages   = {010302},
  year    = {2026},
  doi     = {10.1103/3yx4-1j27},
}

@article{TarabungaEtAl2026,
  author = {Tarabunga, Poetri S. and Jobst, Bernhard and Morral-Yepes, Rafael and Langer, M. and Kraus, Barbara and Pollmann, Frank and Lin, Sheng-Hsuan},
  title  = {Computable fermionic non-{G}aussianity from the covariance matrix},
  year   = {2026},
  eprint = {2607.02242},
  journal = {arXiv:2607.02242},
}

@article{HaugTurkeshiSierant2026,
  author = {Haug, Tobias and Turkeshi, Xhek and Sierant, Piotr},
  title  = {Practical tests and witnesses of fermionic non-{G}aussianity},
  year   = {2026},
  eprint = {2605.26218},
  journal = {arXiv:2605.26218},
}

@article{LeoneBittel2026,
  author = {Leone, Lorenzo and Bittel, Lennart},
  title  = {Fermionic entropy: an efficiently measurable strong monotone for non-{G}aussianity},
  year   = {2026},
  eprint = {2607.29670},
  journal = {arXiv:2607.29670},
}

@book{Bhatia1997,
  author    = {Bhatia, Rajendra},
  title     = {Matrix Analysis},
  publisher = {Springer},
  address   = {New York},
  year      = {1997},
  doi       = {10.1007/978-1-4612-0653-8},
  isbn      = {978-0-387-94846-1},
}

@book{MarshallOlkinArnold2011,
  author    = {Marshall, Albert W. and Olkin, Ingram and Arnold, Barry C.},
  title     = {Inequalities: Theory of Majorization and Its Applications},
  edition   = {2nd},
  publisher = {Springer},
  address   = {New York},
  year      = {2011},
  doi       = {10.1007/978-0-387-68276-1},
  isbn      = {978-0-387-40087-7},
}

@article{Williamson1936,
  author  = {Williamson, John},
  title   = {On the algebraic problem concerning the normal forms of linear dynamical systems},
  journal = {Am. J. Math.},
  volume  = {58},
  pages   = {141},
  year    = {1936},
  doi     = {10.2307/2371062},
}

@article{MeleHerasymenko2025,
  author  = {Mele, Antonio A. and Herasymenko, Yaroslav},
  title   = {Efficient learning of quantum states prepared with few fermionic non-{G}aussian gates},
  journal = {PRX Quantum},
  volume  = {6},
  pages   = {010319},
  year    = {2025},
  doi     = {10.1103/PRXQuantum.6.010319},
  eprint  = {2402.18665},
}

@article{TakagiZhuang2018,
  author  = {Takagi, Ryuji and Zhuang, Quntao},
  title   = {Convex resource theory of non-{G}aussianity},
  journal = {Phys. Rev. A},
  volume  = {97},
  pages   = {062337},
  year    = {2018},
  doi     = {10.1103/PhysRevA.97.062337},
}

@article{AlbarelliGenoniSerafiniFerraro2018,
  title = {Resource theory of quantum non-Gaussianity and Wigner negativity},
  author = {Albarelli, Francesco and Genoni, Marco G. and Paris, Matteo G. A. and Ferraro, Alessandro},
  journal = {Phys. Rev. A},
  volume = {98},
  issue = {5},
  pages = {052350},
  numpages = {17},
  year = {2018},
  month = {Nov},
  publisher = {American Physical Society},
  doi = {10.1103/PhysRevA.98.052350},
  url = {https://link.aps.org/doi/10.1103/PhysRevA.98.052350}
}

@book{RockafellarConvex,
  author    = {Rockafellar, R. Tyrrell},
  title     = {Convex Analysis},
  series    = {Princeton Mathematical Series},
  volume    = {28},
  publisher = {Princeton University Press},
  address   = {Princeton, NJ},
  year      = {1970}
}

@article{LamiEtAl2018,
  author  = {Lami, Ludovico and Regula, Bartosz and Wang, Xin and Nichols, Rosanna and Winter, Andreas and Adesso, Gerardo},
  title   = {Gaussian quantum resource theories},
  journal = {Phys. Rev. A},
  volume  = {98},
  pages   = {022335},
  year    = {2018},
  doi     = {10.1103/PhysRevA.98.022335},
}

@article{PaviglianitiEtAl2026,
  author  = {Paviglianiti, Alessio and Lumia, Luca and Tirrito, Emanuele and Silva, Alessandro and Collura, Mario and Turkeshi, Xhek and Lami, Guglielmo},
  title   = {Emergence of generic entanglement structure in doped matchgate circuits},
  journal = {Phys. Rev. Lett.},
  volume  = {136},
  pages   = {020403},
  year    = {2026},
  doi     = {10.1103/w97w-7zny},
}

@article{tarabunga2026fermionic,
      title={Fermionic non-Gaussianity via Bell sampling: monotones and efficient quantum algorithms}, 
      author={Poetri Sonya Tarabunga},
      year={2026},
      journal={arXiv:2606.05066},
      url={https://arxiv.org/abs/2606.05066}, 
}

@article{bittel2025optimal,
      title = {Optimal Trace-Distance Bounds for Free-Fermionic States: Testing and Improved Tomography},
  volume = {6},
  pages = {030341},
  ISSN = {2691-3399},
  url = {http://dx.doi.org/10.1103/pzx6-nkfb},
  DOI = {10.1103/pzx6-nkfb},
  number = {3},
  journal = {PRX Quantum},
  publisher = {American Physical Society (APS)},
  author = {Bittel,  Lennart and Mele,  Antonio Anna and Eisert,  Jens and Leone,  Lorenzo},
  year = {2025},
  month = sep 
}

@misc{SuppMat,
  note = {See Supplemental Material appended to this manuscript for the complete proofs of all statements},
}

@article{mele2026symplectic,
  title = {Symplectic Rank of Non-Gaussian Quantum States},
  author = {Mele, Francesco A. and Oliviero, Salvatore F.E. and Upreti, Varun and Chabaud, Ulysse},
  journal = {PRX Quantum},
  volume = {7},
  issue = {2},
  pages = {020366},
  numpages = {58},
  year = {2026},
  month = {Jun},
  publisher = {American Physical Society},
  doi = {10.1103/1rtk-1jsn},
  url = {https://link.aps.org/doi/10.1103/1rtk-1jsn}
}

@article{Beverland2020,
  author  = {Beverland, Michael and Campbell, Earl and Howard, Mark and Kliuchnikov, Vadym},
  title   = {Lower bounds on the non-Clifford resources for quantum computations},
  journal = {Quantum Sci. Technol.},
  volume  = {5},
  pages   = {035009},
  year    = {2020},
  doi     = {10.1088/2058-9565/ab8963},
}

@article{KondraDattaStreltsov2021,
  title = {Catalytic Transformations of Pure Entangled States},
  author = {Kondra, Tulja Varun and Datta, Chandan and Streltsov, Alexander},
  journal = {Phys. Rev. Lett.},
  volume = {127},
  issue = {15},
  pages = {150503},
  numpages = {6},
  year = {2021},
  month = {Oct},
  publisher = {American Physical Society},
  doi = {10.1103/PhysRevLett.127.150503},
  url = {https://link.aps.org/doi/10.1103/PhysRevLett.127.150503}
}

\clearpage
\onecolumngrid
\setcounter{secnumdepth}{4}
\begin{center}
{\large\bfseries Supplemental Material for\\[4pt] \titleinfo}\\[10pt]
Xhek Turkeshi, Piotr Sierant, and Poetri Sonya Tarabunga
\end{center}
\vspace{6pt}

\setcounter{equation}{0}
\setcounter{section}{0}
\setcounter{lemma}{0}
\renewcommand{\theequation}{S\arabic{equation}}
\renewcommand{\thesection}{S\arabic{section}}
\renewcommand{\thelemma}{S\arabic{lemma}}

\noindent
This Supplemental Material contains the complete proofs of all statements
of the main text. Section~\ref{sm:conventions} collects the covariance
conventions and elementary identities; Sec.~\ref{sm:witness} develops the
variational theory of Williamson partial sums; Sec.~\ref{sm:localization}
proves the localization of an optimal witness around the measured plane;
Sec.~\ref{sm:code} establishes the Gaussian-code inequality;
Sec.~\ref{sm:proof-thm1} assembles the proof of Theorem~\ref{thm:main},
including the lift from a single occupation measurement to arbitrary
Gaussian protocols; Sec.~\ref{sm:mixed} presents the mixed-state generalization of Theorem~\ref{thm:main} and the role of recorded versus unrecorded mode discards;
Sec.~\ref{sm:thm2} proves Theorem~\ref{thm:admissible}, including the
converse construction showing that convexity of the generator is
necessary; Sec.~\ref{sm:nullity} gives a direct, self-contained proof that
the Gaussian nullity is a strong monotone; and Sec.~\ref{sm:asymptotic}
derives the asymptotic conversion bounds, Corollary~\ref{cor:rate} and
Eq.~\eqref{eq:two-directions}, and completes the proof of
Theorem~\ref{thm:irrev}.

\section{Conventions and elementary covariance identities}
\label{sm:conventions}

\subsection{Physicality, canonical form, and faithfulness}

Throughout, $\Tr$ denotes the Hilbert-space trace and $\tr$ the trace of
matrices on the real Majorana space $V\simeq\mathbb R^{2N}$. For
$z\in\mathbb C^{2N}$, define $A_z=\sum_az_a\gamma_a$. Positivity of
$\Tr(\rho A_z^\dagger A_z)$ gives
\begin{equation}
  z^\dagger(\id+i\Gam_\rho)z\ge0,
  \qquad\text{hence}\qquad
  \id+i\Gam_\rho\ge0 .
  \label{eq:sm-physicality}
\end{equation}
Since $i\Gam_\rho$ has eigenvalues $\pm r_j$, this implies $0\le r_j\le1$
for every $j$. Every real antisymmetric matrix admits the orthogonal
canonical form
\begin{equation}
  O\Gam O^T=\bigoplus_{j=1}^{N}
  \begin{pmatrix}0&r_j\\-r_j&0\end{pmatrix},
  \qquad O\in O(2N),
  \label{eq:sm-canonical}
\end{equation}
with the singular values counted once per two-plane~\cite{Williamson1936,Bhatia1997}.
For a pure state, $\Gam^T\Gam\le\id$ with equality if
and only if $\Gam^2=-\id$; in that case 
the state is Gaussian~\cite{Bravyi2005,SuraceTagliacozzo2022}. Hence, for
pure $\psi$,
\begin{equation}
  r_1=\cdots=r_N=1
  \quad\Longleftrightarrow\quad
  \psi\ \text{is pure Gaussian},
  \label{eq:sm-faithfulness}
\end{equation}
the faithfulness fact used in Theorem~\ref{thm:admissible}(iii). On the other hand, no such
statement holds on mixed states: the maximally mixed state is Gaussian
with $\Gam=0$, so raw spectral deficits do not vanish on mixed Gaussian
states.
A generalization of Theorem~\ref{thm:main} for mixed state is then possible only with some restrictions, cf.~ Sec.~\ref{sm:mixed}.

\subsection{The affine identity for the unmeasured block}
\label{sm:affine}

Let us order the measured mode first and split the Majorana space as
$V=A\oplus B$ with $\dim A=2$. Let $\Pi_s=\ketbra{s}{s}_A$, $s\in\{0,1\}$,
be the occupation projectors of the measured mode, and
\begin{equation}
  p_s=\Tr(\Pi_s\rho),
  \qquad
  \rho_s=\frac{\langle s|\rho|s\rangle}{p_s}
  \quad (p_s>0),
  \label{eq:sm-conditional}
\end{equation}
with $p_s=0$ terms omitted throughout. In the Jordan--Wigner form
$\gamma_1=X\otimes\id$, $\gamma_2=Y\otimes\id$,
$\gamma_{a+2}=Z\otimes\mu_a$, with $\mu_a$ the intrinsic Majoranas of $B$,
every unmeasured bilinear has its two parity strings cancel,
\begin{equation}
  X_{ab}:=-\frac{i}{2}[\gamma_{a+2},\gamma_{b+2}]
  =\id_A\otimes\Bigl(-\frac{i}{2}[\mu_a,\mu_b]\Bigr)
  =:\id_A\otimes X^{(B)}_{ab},
  \label{eq:sm-strings}
\end{equation}
so $\Pi_0X_{ab}\Pi_1=\Pi_1X_{ab}\Pi_0=0$. Using
$\Pi_s\rho\Pi_s=\ketbra{s}{s}\otimes p_s\rho_s$,
\begin{equation}
  (\Gam_\rho|_B)_{ab}
  =\Tr(\rho X_{ab})
  =\sum_{s}\Tr(\Pi_s\rho\Pi_sX_{ab})
  =\sum_sp_s(\Gam_{\rho_s}')_{ab},
  \label{eq:sm-affine}
\end{equation}
where $\Gam_{\rho_s}'$ denotes the branch covariance on $B$. Two remarks
are in order. First, Eq.~\eqref{eq:sm-affine} is an identity for matrix
entries, not for spectra: in general
$r_j(\sum_sp_s\Gam'_{\rho_s})\neq\sum_sp_sr_j(\Gam'_{\rho_s})$, and
controlling this nonlinearity is the purpose of the majorization argument.
Second, no condition is placed on the off-diagonal blocks
$\Pi_0\rho\Pi_1$; they generate the parent $A$--$B$ correlations and are
the reason the affine identity alone proves nothing.

For a product $\rho\otimes\sigma$, cross-covariance entries factor into
odd Majorana expectations of the factors. If either factor has definite
fermionic parity, its odd expectations vanish and
\begin{equation}
  \Gam_{\rho\otimes\sigma}=\Gam_\rho\oplus\Gam_\sigma,
  \label{eq:sm-concatenation}
\end{equation}
so the Williamson spectra concatenate,
$\rvec(\Gam_{\rho\otimes\sigma})=\rvec(\Gam_\rho)\sqcup\rvec(\Gam_\sigma)$.
Pure Gaussian ancillas and recorded occupation states have definite
parity. Spectra of different
lengths are compared after padding by unit entries to a common length; this corresponds to appending with a pure Gaussian ancilla on the deficit modes.

\section{Williamson partial sums as quadratic witnesses}
\label{sm:witness}

An $\ell$-plane \emph{partial complex structure} on $V$ is a real matrix
$J$ with
\begin{equation}
  J^T=-J,\qquad -J^2=P_W,\qquad \dim W=2\ell,
  \label{eq:sm-pcs}
\end{equation}
where $P_W$ projects onto $W=\Ran J$; we denote the set of such matrices
by $\cJ_\ell(V)$. To each $J$ we associate the quadratic observable
\begin{equation}
  Q_J:=-\frac{i}{2}\sum_{a,b=1}^{2N}J_{ab}\gamma_a\gamma_b,
  \qquad
  \Tr(\rho Q_J)=\frac12\tr(J^T\Gam_\rho),
  \label{eq:sm-QJ}
\end{equation}
where the expectation identity follows by inserting the definition of
$\Gam_\rho$ and antisymmetry of $J$.

\begin{lemma}[Variational formula]
\label{lem:sm-kyfan}
For every real antisymmetric $\Gam$,
\begin{equation}
  S_\ell(\Gam)=\max_{J\in\cJ_\ell(V)}\frac12\tr(J^T\Gam).
  \label{eq:sm-kyfan}
\end{equation}
\end{lemma}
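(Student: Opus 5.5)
We prove the two inequalities separately: an upper bound on $\frac12\tr(J^T\Gam)$ valid for every admissible $J$, and an explicit $J$ attaining $S_\ell(\Gam)$.

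\emph{Attainment.} Bring $\Gam$ to the canonical form \eqref{eq:sm-canonical}, $\Gam=O^T\bigl(\bigoplus_j r_j\epsilon\bigr)O$ with $\epsilon=\left(\begin{smallmatrix}0&1\\-1&0\end{smallmatrix}\right)$ and $r_1\ge\cdots\ge r_N$. Define
\begin{equation}
  J_\star:=O^T\Bigl(\bigoplus_{j\le\ell}\epsilon\oplus 0_{2(N-\ell)}\Bigr)O .
\end{equation}
Then $J_\star$ is antisymmetric, and $-J_\star^2$ is the orthogonal projector onto the span of the first $\ell$ canonical planes. Hence $J_\star\in\cJ_\ell(V)$. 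Since $\tr(\epsilon^T\epsilon)=2$, we get
\begin{equation}
  \tfrac12\tr(J_\star^T\Gam)=\tfrac12\sum_{j\le\ell}2r_j=S_\ell(\Gam),
\end{equation}
which shows the maximum is at least $S_\ell(\Gam)$.

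\emph{Upper bound.} Fix $J\in\cJ_\ell(V)$. Because $J$ is antisymmetric with $-J^2=P_W$, it acts on $W$ as an orthogonal complex structure. Its singular values are therefore exactly $1$ with multiplicity $2\ell$ and $0$ with multiplicity $2N-2\ell$.

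The singular values of $\Gam$ are $r_1,r_1,r_2,r_2,\ldots$, each counted twice. Apply von Neumann's trace inequality:
\begin{equation}
  \tr(J^T\Gam)\le\sum_i\sigma_i(J)\,\sigma_i(\Gam)=\sum_{i\le 2\ell}\sigma_i(\Gam)=2\sum_{j\le\ell}r_j .
\end{equation}
Halving gives $\frac12\tr(J^T\Gam)\le S_\ell(\Gam)$. The constraint set $\cJ_\ell(V)$ is compact, so combined with attainment the supremum is a maximum.

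An alternative route avoids von Neumann's inequality. Write $J=P_WJP_W$ and pass to the Hermitian matrices $iJ$ and $i\Gam$; the eigenvalues of $iJ$ are $\pm1$ ($\ell$ times each) and $0$. The Ky Fan maximum principle then bounds $\tr(iJ\cdot i\Gam)^{*}$, the trace against the compression, by the sum of the $2\ell$ largest singular values of $\Gam$.

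\emph{Main obstacle.} The only delicate point is the bookkeeping of the doubled singular values: each $r_j$ appears twice in the singular spectrum of $\Gam$, and this double count is what cancels the factor $\tfrac12$. The padded spectrum $\widetilde\rvec$ introduces no subtlety here, since the lemma concerns a single fixed $\Gam$ on $V$.
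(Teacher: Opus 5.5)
Your proof is correct and matches the paper's argument: von Neumann's trace inequality gives the upper bound, since $J$ has singular value $1$ with multiplicity $2\ell$ and $\Gam$ has the doubled Williamson values as singular values; the canonical complex structure on the first $\ell$ Williamson planes, with matching orientation, attains the bound. The compactness remark and the loosely worded Ky Fan alternative are not needed, because exhibiting the explicit maximizer $J_\star$ already shows that the supremum is attained.
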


\begin{proof}
Von Neumann's trace inequality bounds $|\tr(A^TB)|$ by the sum of products
of ordered singular values~\cite{Bhatia1997}. The singular values of $J$
are $1$, repeated $2\ell$ times, and $0$ otherwise; those of $\Gam$ are
the doubled Williamson values $r_1,r_1,\ldots,r_N,r_N$. Therefore
$\frac12\tr(J^T\Gam)\le\frac12\sum_{a\le2\ell}s_a(\Gam)=\sum_{j\le\ell}r_j$.
Equality is attained by taking $J$ to coincide, in the canonical basis of
Eq.~\eqref{eq:sm-canonical}, with the canonical complex structure on the
first $\ell$ planes, with matching orientations, and to vanish elsewhere.
\end{proof}

In an orthonormal basis adapted to $J$, the witness decomposes as
\begin{equation}
  Q_J=g_1+\cdots+g_\ell,
  \qquad
  g_j=-i\widetilde\gamma_{2j-1}\widetilde\gamma_{2j},
  \label{eq:sm-reflections}
\end{equation}
where the $g_j$ are commuting Hermitian reflections, $g_j^2=\id$, since
their Majorana supports are disjoint. This is the conceptual pivot of the
proof: once an optimizer is fixed, the nonlinear spectral quantity
$S_\ell$ has become the expectation of a \emph{linear} observable.

We stress what the variational formula does \emph{not} do: the witness may
be expressed in any convenient Majorana basis, but the physical
measurement is not rotated along with it---a general Bogoliubov rotation
would turn an occupation measurement into a pairing measurement. The
localization below keeps the physical measured plane $A$ fixed and
decomposes the witness relative to the split $A\oplus B$.

\section{Localization of an optimal witness}
\label{sm:localization}

\begin{lemma}[Localization]
\label{lem:sm-localization}
Let $V=A\oplus B$ with $\dim A=2$, and let $J\in\cJ_\ell(V)$. There exist
$r\in\{0,1,2\}$ and an orthogonal decomposition
$J=J_{\mathrm{loc}}\oplus J_c$ such that $J_c$ contains $\ell-r$ complex
planes supported entirely in $B$, while $J_{\mathrm{loc}}$ contains the
remaining $r$ planes and is supported on $A\oplus B_{\mathrm{loc}}$ with
$B_{\mathrm{loc}}\subseteq B$, $\dim B_{\mathrm{loc}}\le4$.
\end{lemma}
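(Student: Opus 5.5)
The plan is to build the localized block as the smallest $J$-invariant subspace of $W=\Ran J$ that "sees" the measured plane $A$, and then check that everything orthogonal to it inside $W$ lies automatically in $B$. Concretely, I set $U_0:=P_W A\subseteq W$, which has $\dim U_0\le 2$, and then
\begin{equation}
  U:=U_0+JU_0\subseteq W .
\end{equation}
Since $-J^2=P_W$ acts as the identity on $W$, we have $J^2=-\id$ on $W$, so $JU=JU_0+J^2U_0=JU_0+U_0=U$. Thus $U$ is $J$-invariant with $\dim U\le4$. Moreover $J|_U$ is a complex structure on $U$, because $J$ is antisymmetric with $J^2=-\id$ on $U$. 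Hence $\dim U=2r$ is even with $r\in\{0,1,2\}$, and $U$ admits an orthonormal basis in which $J|_U$ is the canonical form on $r$ planes.

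Next I define $W_c:=W\ominus U$, the orthogonal complement of $U$ inside $W$. Antisymmetry gives $J^T=-J$, so $J$ maps $U^\perp\cap W$ into itself. Therefore $W_c$ is $J$-invariant, $J|_{W_c}$ is again a complex structure, and $J=J_{\mathrm{loc}}\oplus J_c$ with $J_{\mathrm{loc}}:=JP_U$ and $J_c:=JP_{W_c}$. These are orthogonal partial complex structures with $r$ and $\ell-r$ planes respectively.

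The key check is that $W_c\subseteq B$. Take $w\in W_c$ and $a\in A$. Then
\begin{equation}
  \langle w,a\rangle=\langle P_Ww,a\rangle=\langle w,P_Wa\rangle=0,
\end{equation}
because $P_Wa\in U_0\subseteq U\perp w$. So $w\perp A$, i.e. $w\in B$. Consequently every plane of $J_c$ lies entirely in $B$.

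For the localized part I set $B_{\mathrm{loc}}:=P_BU$, the orthogonal projection of $U$ onto $B$. Then $\dim B_{\mathrm{loc}}\le\dim U\le4$. Every $u\in U$ decomposes as $u=P_Au+P_Bu\in A\oplus B_{\mathrm{loc}}$, so $J_{\mathrm{loc}}$, whose range and support are $U$, is supported on $A\oplus B_{\mathrm{loc}}$.

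I do not expect a genuine obstacle. The only point requiring care is the claim that $J$-invariance and orthogonal complementation inside $W$ are compatible. This rests on $J$ being antisymmetric and squaring to $-P_W$, so that $J$ is normal, preserves $W$, and its restriction to any invariant subspace of $W$ has an invariant orthogonal complement in $W$. The bound $r\le2$ comes from $\dim A=2$, which is the geometric content of the statement that "the measured plane meets at most two complex planes of $J$."
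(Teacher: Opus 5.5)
Your proof is correct and is essentially the paper's argument: your $U=P_WA+JP_WA$ is the paper's $L$, your $W_c=W\cap U^\perp$ and the check $\langle w,a\rangle=\langle w,P_Wa\rangle=0$ are the same, and your $B_{\mathrm{loc}}=P_BU$ is the paper's $D$. The paper adds one step: when $P_BU$ is odd-dimensional, it enlarges it by one direction of $B$ orthogonal to $W_c$, so that $B_{\mathrm{loc}}$ can later be identified with whole fermionic modes. The statement does not require this, and your $P_BU$ is already orthogonal to $W_c$, which is the property used afterwards.
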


\begin{proof}
Let $W=\Ran J$ and $P_W=-J^2$, and set
\begin{equation}
  L=\operatorname{span}\{P_Wa,\ JP_Wa:\ a\in A\}.
  \label{eq:sm-L}
\end{equation}
The space $L$ is $J$-invariant, because on $W$ one has $J(JP_Wa)=-P_Wa$,
and a $J$-invariant real subspace has even dimension; since
$\dim P_WA\le2$, we get $\dim L=2r$ with $r\in\{0,1,2\}$. Define
$W_c=W\cap L^\perp$. It is $J$-invariant---for $w\in W_c$ and
$\ell_0\in L$, $\langle Jw,\ell_0\rangle=-\langle w,J\ell_0\rangle=0$---and
it lies in $B$: for $a\in A$,
$\langle w,a\rangle=\langle w,P_Wa\rangle=0$ because $w\in W$ and
$P_Wa\in L$. Thus $J_c:=J|_{W_c}$ never touches the measured plane. Let
$D=P_BL\subset B$; then $D\perp W_c$ and $\dim D\le2r\le4$. If $D$ is odd
dimensional, then $\dim D<2r$ and one direction from
$(B\cap W_c^\perp)\cap D^\perp$ may be added; the resulting
even-dimensional $B_{\mathrm{loc}}$ contains $D$, is orthogonal to $W_c$,
and satisfies $\dim B_{\mathrm{loc}}\le4$. Since $L\subset A\oplus D$, the
restriction $J_{\mathrm{loc}}:=J|_L$ is supported on
$A\oplus B_{\mathrm{loc}}$.
\end{proof}

The subspace $B_{\mathrm{loc}}$ is compatible with a physical mode
structure: choose $O_B\in SO(B)$ mapping $B_{\mathrm{loc}}$ to a canonical
pair of fermionic modes and let $U_B$ be the corresponding Gaussian
unitary on the unmeasured modes. Since $U_B$ commutes with the measurement
of $A$, the reduced state
\begin{equation}
  \sigma=\Tr_{B\ominus B_{\mathrm{loc}}}
  \bigl[(\id_A\otimes U_B)\rho(\id_A\otimes U_B^\dagger)\bigr]
  \label{eq:sm-reduced}
\end{equation}
has branch states
$\sigma_s=\Tr_{B\ominus B_{\mathrm{loc}}}[U_B\rho_sU_B^\dagger]$. We emphasize that even for a pure
global input, $\sigma$ is generally \emph{mixed}, because spectator modes
are traced out; every local inequality below must therefore hold---and
does hold---for arbitrary density matrices.

\section{The Gaussian-code inequality}
\label{sm:code}

Only the case $r=2$ of the localization is nontrivial, and it is here that
the fermionic structure enters. Let the local system carry $m$ fermionic
modes and let $J\in\cJ_2$, so that $Q_J=g_1+g_2$ with commuting quadratic
reflections. Define the joint $+1$ code projector
\begin{equation}
  P_{++}:=\frac14(\id+g_1)(\id+g_2),
  \qquad
  \rank P_{++}=2^{m-2}.
  \label{eq:sm-Ppp}
\end{equation}
Comparing the joint eigenvalues in the four sectors $++,+-,-+,--$, where
$Q_J$ takes the values $\{2,0,0,-2\}$ and $2P_{++}$ the values $\{2,0,0,0\}$,
gives the operator inequality
\begin{equation}
  Q_J=g_1+g_2\le2P_{++}.
  \label{eq:sm-QleP}
\end{equation}
The normalized projector is a fermionic Gaussian state: for a Gaussian
unitary $U$ and occupations $n_1,n_2\in\{0,1\}$,
\begin{equation}
  \frac{P_{++}}{\Tr P_{++}}
  =U\Bigl[\ketbra{n_1n_2}{n_1n_2}\otimes\frac{\id}{2^{m-2}}\Bigr]U^\dagger.
  \label{eq:sm-code-canonical}
\end{equation}

Let $V_{\rm iso}:\mathbb C^{2^{m-2}}\to\cH_m$ be an isometry with
$P_{++}=V_{\rm iso}V_{\rm iso}^\dagger$, and define the conditional blocks
\begin{equation}
  V_s=(\langle s|\otimes\id)V_{\rm iso},
  \qquad
  \tau_s=V_sV_s^\dagger
  =\langle s|P_{++}|s\rangle,
  \label{eq:sm-tau}
\end{equation}
so that $\Ran V_s=\supp\tau_s$ and
$\rank\tau_s\le\rank P_{++}=2^{m-2}$. Occupation postselection preserves
Gaussianity~\cite{Bravyi2005}, so every nonzero normalized block
$\widehat\tau_s=\tau_s/\Tr(\tau_s)$ is a Gaussian state on $m-1$ modes, with
product normal form
\begin{equation}
  \widehat\tau_s
  =U_s\Bigl[2^{-(m-1)}\prod_{j=1}^{m-1}(\id+\nu_jz_j)\Bigr]U_s^\dagger,
  \qquad|\nu_j|\le1,
  \label{eq:sm-product-form}
\end{equation}
where the $z_j$ are commuting quadratic reflections. The factor associated
with $j$ has rank two if $|\nu_j|<1$ and rank one if $|\nu_j|=1$, so
$\rank\widehat\tau_s=2^{\#\{j:|\nu_j|<1\}}$. Since
$\rank\tau_s\le2^{m-2}<2^{m-1}$, at least one $|\nu_j|=1$, and
\begin{equation}
  h_s=\sgn(\nu_j)\,U_sz_jU_s^\dagger
  \label{eq:sm-hs}
\end{equation}
is a quadratic reflection equal to $+1$ on $\supp\tau_s$; if $\tau_s=0$,
we can choose $h_s$ arbitrarily.

\begin{lemma}[Rank-two measurement bound]
\label{lem:sm-ranktwo}
Let $J$ contain two complex planes and be supported on the measured mode
$A$ together with an unmeasured subsystem $B_{\mathrm{loc}}$. For every
local state $\sigma$,
\begin{equation}
  \Tr(\sigma Q_J)
  \le1+\sum_sp_s\,S_1\!\bigl(\Gam_{\sigma_s}|_{B_{\mathrm{loc}}}\bigr).
  \label{eq:sm-ranktwo}
\end{equation}
\end{lemma}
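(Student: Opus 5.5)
The plan is to chain the Gaussian-code inequality of Eq.~\eqref{eq:sm-QleP} with the branch reflections $h_s$ of Eq.~\eqref{eq:sm-hs}, and to prove the operator inequality
\begin{equation*}
  Q_J\ \le\ \id+\sum_s\ketbra{s}{s}\otimes h_s ,
\end{equation*}
where $\ketbra{s}{s}$ acts on the measured mode $A$ and $h_s$ is a quadratic reflection on the unmeasured local modes. Given this inequality, the lemma follows by taking expectations in $\sigma$. The off-diagonal blocks $\Pi_0\sigma\Pi_1$ never contribute, because the right-hand side is block diagonal in the occupation basis of $A$. One is therefore left with $\Tr(\sigma Q_J)\le1+\sum_sp_s\Tr(\sigma_sh_s)$. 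Each $h_s$ is a single quadratic reflection, so it equals $Q_{J_s}$ for a one-plane partial complex structure $J_s$ on $B_{\mathrm{loc}}$. Lemma~\ref{lem:sm-kyfan} then gives $\Tr(\sigma_sh_s)=\frac12\tr(J_s^T\Gam_{\sigma_s}|_{B_{\rm loc}})\le S_1(\Gam_{\sigma_s}|_{B_{\rm loc}})$, which is Eq.~\eqref{eq:sm-ranktwo}.

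To obtain the operator inequality, I would start from $Q_J\le2P_{++}$ and aim to show $2P_{++}\le\id+\sum_s\ketbra{s}{s}\otimes h_s$. The right-hand side is $\sum_s\ketbra{s}{s}\otimes(\id+h_s)=2\sum_s\ketbra{s}{s}\otimes\Pi^{h_s}_+$, where $\Pi^{h_s}_+$ is the $+1$ projector of $h_s$. By construction $h_s=+1$ on $\supp\tau_s=\Ran V_s$, so $\Pi^{h_s}_+V_s=V_s$. Define the projector $\Pi:=\sum_s\ketbra{s}{s}\otimes\Pi^{h_s}_+$. For any vector of the form $V_{\rm iso}x$ one computes $\Pi V_{\rm iso}x=\sum_s|s\rangle\otimes\Pi^{h_s}_+V_sx=\sum_s|s\rangle\otimes V_sx=V_{\rm iso}x$. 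Hence $\Ran P_{++}\subseteq\Ran\Pi$, and for projectors this gives $P_{++}\le\Pi$. Multiplying by two yields the desired inequality.

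Before this works, two technical points must be checked. The first is that $h_s$, built from $\widehat\tau_s$ on $m-1$ modes, is indeed a quadratic reflection on $B_{\rm loc}$, i.e.\ a single $-i\widetilde\gamma\widetilde\gamma'$ in some orthonormal Majorana basis of $B_{\rm loc}$. This follows from the product normal form~\eqref{eq:sm-product-form}, with $U_s$ Gaussian, so that $U_sz_jU_s^\dagger$ is quadratic with unit-norm coefficient matrix. The second is that the Jordan--Wigner tensor split $\ketbra{s}{s}\otimes h_s$ is compatible with Eq.~\eqref{eq:sm-strings}: the unmeasured bilinears act as $\id_A\otimes X^{(B)}$, so $\Tr(\Pi_s\sigma\Pi_s(\id\otimes h_s))=p_s\Tr(\sigma_sh_s)$. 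Degenerate cases are harmless. If $\tau_s=0$, any choice of $h_s$ works, since that branch contributes nothing to $\Ran P_{++}$. If $p_s=0$, the corresponding term drops out.

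The step I expect to be the main obstacle is the rank count behind the existence of $h_s$, which forces some $|\nu_j|=1$. This requires that occupation postselection of the normalized Gaussian code state $P_{++}/\Tr P_{++}$ stays Gaussian, and that the rank bound $\rank\tau_s\le2^{m-2}<2^{m-1}$ translates into a pure factor. I would verify this carefully via the canonical form~\eqref{eq:sm-code-canonical} and the rank formula $\rank\widehat\tau_s=2^{\#\{j:|\nu_j|<1\}}$. The remaining steps are bookkeeping.
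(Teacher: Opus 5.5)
Your proposal is correct and is essentially the paper's own proof. You chain $Q_J\le 2P_{++}$ with $P_{++}\le\sum_s\ketbra{s}{s}\otimes\frac{1}{2}(\id+h_s)$, obtained as in the paper from the range inclusion $\Ran V_s=\supp\tau_s\subseteq\Ran\frac{1}{2}(\id+h_s)$, and then bound each block-diagonal term $p_s\Tr(\sigma_s h_s)$ by Lemma~\ref{lem:sm-kyfan}, since $h_s$ is a one-plane witness. The rank count you flag as the main obstacle is not part of the paper's proof of this lemma; it is settled beforehand in the construction of $h_s$ (Eqs.~\eqref{eq:sm-tau}--\eqref{eq:sm-hs}), which both you and the paper take as given.
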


\begin{proof}
Set $R_s=(\id+h_s)/2$ and $R=\sum_s\ketbra{s}{s}\otimes R_s$. Every code
vector $|v\rangle=V_{\rm iso}|u\rangle$ has $s$-component
$V_s|u\rangle\in\Ran V_s=\supp\tau_s\subseteq\Ran R_s$, hence
$\Ran P_{++}\subseteq\Ran R$, and since both are orthogonal projectors,
$P_{++}\le R$. Combining with Eq.~\eqref{eq:sm-QleP},
\begin{equation}
  Q_J\le2P_{++}\le2R=\id+\sum_s\ketbra{s}{s}\otimes h_s,
  \label{eq:sm-code-final}
\end{equation}
valid as an operator inequality, i.e., in every state. Taking the
expectation in $\sigma$, the $s$-th term equals
$p_s\Tr(\sigma_sh_s)\le p_sS_1(\Gam_{\sigma_s}|_{B_{\mathrm{loc}}})$ by
Lemma~\ref{lem:sm-kyfan}, each $h_s$ being a one-plane witness.
\end{proof}

We conclude with a few remarks. Gaussianity is required only of the
auxiliary code projector $P_{++}$, which is determined by the witness; the
physical state in which Eq.~\eqref{eq:sm-code-final} is evaluated may be
completely non-Gaussian. Conversely, the rank bound alone would not
suffice: a generic low-dimensional subspace need not be the eigenspace of
any Majorana bilinear, and it is the conjunction of Gaussianity and rank
deficiency that produces the branch reflections $h_s$. Gaussian closure
under occupation postselection is the only imported structural fact.

\section{Proof of Theorem~\ref{thm:main}}
\label{sm:proof-thm1}

\subsection{The one-mode partial sum inequality}
\label{sm:onemode}

\begin{lemma}[One-mode measurement]
\label{lem:sm-onemode}
Let $\rho$ be an arbitrary $N$-mode state and let $\{(p_s,\rho_s)\}_{s=0,1}$
be the ensemble obtained by measuring one occupation number,
Eq.~\eqref{eq:sm-conditional}. Then, for every $\ell=1,\ldots,N$,
\begin{equation}
  S_\ell(\Gam_\rho)\le1+\sum_{s}p_s\,S_{\ell-1}(\Gam'_{\rho_s}),
  \label{eq:sm-partialsum}
\end{equation}
with $\Gam'_{\rho_s}$ the branch covariance of the unmeasured modes.
\end{lemma}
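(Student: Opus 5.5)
We assemble the previously established ingredients. First fix an optimal witness: by Lemma~\ref{lem:sm-kyfan}, choose $J\in\cJ_\ell(V)$ with $S_\ell(\Gam_\rho)=\Tr(\rho Q_J)$. Next apply Lemma~\ref{lem:sm-localization} to split $J=J_{\mathrm{loc}}\oplus J_c$. Here $J_c$ carries $\ell-r$ planes inside $B$, and $J_{\mathrm{loc}}$ carries $r\in\{0,1,2\}$ planes on $A\oplus B_{\mathrm{loc}}$. The witness then splits as $Q_J=Q_{J_{\mathrm{loc}}}+Q_{J_c}$.

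The spectator part is handled with the affine identity. By Eq.~\eqref{eq:sm-strings}, $Q_{J_c}=\id_A\otimes Q^{(B)}_{J_c}$, so Eq.~\eqref{eq:sm-affine} gives $\Tr(\rho Q_{J_c})=\sum_sp_s\Tr(\rho_sQ^{(B)}_{J_c})$. In the same way the local part becomes an expectation in the reduced state $\sigma$ of Eq.~\eqref{eq:sm-reduced}.

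We then bound the local part by cases.

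For $r=0$ there is nothing local. We use $S_\ell\le S_{\ell-1}+1$ (since $r_\ell\le1$) and bound each branch term by $S_\ell(\Gam'_{\rho_s})$. Alternatively, drop one plane of $J_c$: that plane's expectation is at most $1$, and the remaining $\ell-1$ planes give at most $S_{\ell-1}(\Gam'_{\rho_s})$ via Lemma~\ref{lem:sm-kyfan} applied branchwise.

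For $r=1$, the single local reflection satisfies $g\le\id$. The remaining $\ell-1$ planes of $J_c$, an $(\ell-1)$-plane witness on $B$, give $\sum_sp_sS_{\ell-1}(\Gam'_{\rho_s})$.

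For $r=2$, apply Lemma~\ref{lem:sm-ranktwo} to $\sigma$. This bounds $\Tr(\sigma Q_{J_{\mathrm{loc}}})\le1+\sum_sp_s\Tr(\sigma_sh_s)$, where each $h_s$ is a one-plane reflection supported on $B_{\mathrm{loc}}$.

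The key step is recombination. In each branch $s$, $h_s$ (lifted via $U_B$ to a reflection on $B$ supported in $B_{\mathrm{loc}}$) and the $\ell-2$ planes of $J_c$ have orthogonal supports, because $B_{\mathrm{loc}}\perp W_c$. Their sum is therefore a single $(\ell-1)$-plane partial complex structure $K_s$ on $B$. Applying Lemma~\ref{lem:sm-kyfan} to $\Gam'_{\rho_s}$ gives $\Tr(\rho_sh_s)+\Tr(\rho_sQ^{(B)}_{J_c})=\tfrac12\tr(K_s^T\Gam'_{\rho_s})\le S_{\ell-1}(\Gam'_{\rho_s})$. Summing over $s$ with weights $p_s$ yields Eq.~\eqref{eq:sm-partialsum}.

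I expect the main obstacle to be this bookkeeping of branch-dependent witnesses. The reflections $h_s$ are chosen separately for each outcome, which is permitted since the bound is applied branchwise. What must be checked carefully is the following: $h_s$ is a genuine quadratic reflection in the Majoranas of $B_{\mathrm{loc}}$ alone (this follows because $\widehat\tau_s$ lives on the $m-1$ unmeasured local modes); its expectation in $\sigma_s$ equals its expectation in $U_B\rho_sU_B^\dagger$ (partial trace over spectators); and conjugating back by $U_B$ preserves orthogonality to $W_c$. A secondary point is the case $\ell=1$ with $r=2$, which cannot occur since $J$ has only one plane. The case $S_0=0$ with $r\le1$ then reduces to $g\le\id$.
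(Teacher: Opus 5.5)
Your proposal is correct and follows the paper's proof essentially step for step. The ingredients and their order match: maximizing witness, localization, the affine identity for the spectator part, and the case split $r=0,1,2$ with the Gaussian-code inequality for $r=2$. The only cosmetic difference is in the case $r=2$. You merge the branch reflection $h_s$ itself with $J_c$ into an $(\ell-1)$-plane structure. The paper instead first bounds $\Tr(\sigma_s h_s)\le S_1(\Gam_{\sigma_s})$ and then completes $J_c$ with a one-plane maximizer $H_s$. The two routes are equivalent.
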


\begin{proof}
Choose $J\in\cJ_\ell(V)$ maximizing Eq.~\eqref{eq:sm-kyfan} for
$\Gam_\rho$ and decompose it as in Lemma~\ref{lem:sm-localization}. Define
the complementary branch pairings $c_s=\frac12\tr(J_c^T\Gam'_{\rho_s})$.
Because $J_c$ is supported entirely in $B$, the affine identity
\eqref{eq:sm-affine} splits the maximal partial sum as
\begin{equation}
  S_\ell(\Gam_\rho)
  =\Tr(\sigma Q_{J_{\mathrm{loc}}})+\sum_sp_sc_s,
  \label{eq:sm-split}
\end{equation}
with $\sigma$ the reduced state of Eq.~\eqref{eq:sm-reduced}. Three cases
exhaust the local rank $r$.

If $r=0$, the first term vanishes and, since every Williamson value is at
most one, $c_s\le S_\ell(\Gam'_{\rho_s})\le1+S_{\ell-1}(\Gam'_{\rho_s})$;
averaging proves Eq.~\eqref{eq:sm-partialsum}.

If $r=1$, the local observable is a single Hermitian reflection, so
$\Tr(\sigma Q_{J_{\mathrm{loc}}})\le1$, while $J_c$ already contains
$\ell-1$ planes, so $c_s\le S_{\ell-1}(\Gam'_{\rho_s})$ by
Lemma~\ref{lem:sm-kyfan}.

If $r=2$, choose for each branch a one-plane structure $H_s$ on
$B_{\mathrm{loc}}$ attaining $S_1(\Gam_{\sigma_s})$. The direct sum
$J_c\oplus H_s$ is an $(\ell-1)$-plane partial complex structure on $B$,
whence
\begin{equation}
  c_s+S_1(\Gam_{\sigma_s})\le S_{\ell-1}(\Gam'_{\rho_s}),
  \label{eq:sm-completion}
\end{equation}
while Lemma~\ref{lem:sm-ranktwo}, evaluated on $\sigma$---whose branch
covariances are precisely the $\Gam_{\sigma_s}$---gives
$\Tr(\sigma Q_{J_{\mathrm{loc}}})\le1+\sum_sp_sS_1(\Gam_{\sigma_s})$.
Substituting the latter into Eq.~\eqref{eq:sm-split} and using
Eq.~\eqref{eq:sm-completion} proves Eq.~\eqref{eq:sm-partialsum} in the last
and hardest case.
\end{proof}

Boundary situations cause no difficulty: if $p_s=0$, the branch is
omitted; at $\ell=1$ the lemma reads $S_1(\Gam_\rho)\le1$, which is
covariance physicality, Eq.~\eqref{eq:sm-physicality}; and rank-deficient
conditional code blocks are harmless, cf.\ below
Eq.~\eqref{eq:sm-product-form}. We also note why a \emph{one}-mode
measurement is the natural elementary step: the finite-dimensional
reduction is tied to $\dim A=2$, which forces $r\le2$; a multimode
measurement is implemented sequentially, one mode at a time, with the
lemma applied conditionally after every outcome.

Since the recorded occupation state of the measured mode has definite
parity and a unit Williamson value, the branch spectrum with the record
retained is $\{1\}\sqcup\rvec(\Gam'_{\rho_s})$, and
Eq.~\eqref{eq:sm-partialsum} is equivalent to
\begin{equation}
  S_\ell(\Gam_\rho)\le\sum_sp_s\,S_\ell(\Gam_{\rho_s}),
  \label{eq:sm-partialsum-retained}
\end{equation}
which is the normalization used in Theorem~\ref{thm:main} of the main
text; Eq.~\eqref{eq:sm-partialsum} itself is Eq.~\eqref{eq:one-mode} there.

\subsection{From partial sum inequalities to weak majorization}
\label{sm:convex}

Let $x_j=r_j(\Gam_\rho)$ and $y_j=\sum_sp_sr_j(\Gam_{\rho_s})$, all lists
decreasing. The vector $y$ is decreasing because each branch spectrum is,
and its partial sums are $\sum_{j\le\ell}y_j=\sum_sp_sS_\ell(\Gam_{\rho_s})$,
so Eq.~\eqref{eq:sm-partialsum-retained} is exactly the statement
$x\prec_wy$, i.e., the vector form \eqref{eq:vector-majorization} of the
main text.

\begin{lemma}\label{lem:sm-transfer}
If $x\prec_wy$ for decreasing vectors with entries in $[0,1]$, then
$\sum_jf(x_j)\le\sum_jf(y_j)$ for every nondecreasing convex
$f:[0,1]\to\mathbb R$.
\end{lemma}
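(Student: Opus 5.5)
We will prove Lemma~\ref{lem:sm-transfer} by Abel summation, representing the difference $\sum_j f(y_j)-\sum_j f(x_j)$ as a nonnegative combination of the partial-sum gaps. These gaps are nonnegative by $x\prec_w y$.

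First, we linearize $f$ along the pairs. For each $j$ we choose a slope $c_j$ so that
\begin{equation*}
  f(y_j)-f(x_j)\ge c_j\,(y_j-x_j).
\end{equation*}
If $x_j\neq y_j$, we take $c_j=\frac{f(y_j)-f(x_j)}{y_j-x_j}$, the chord slope, and the inequality is an equality. If $x_j=y_j$, we take $c_j$ to be any value in the subdifferential of $f$ at that point; at the endpoints $0,1$ we use the one-sided derivative, or simply any admissible value, since the term vanishes anyway.

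The slopes have two properties. They are nonnegative because $f$ is nondecreasing. They are nonincreasing in $j$ because $f$ is convex, so chord slopes increase as both endpoints move right. Both $x$ and $y$ are decreasing, so the interval between $x_j$ and $y_j$ moves weakly left as $j$ increases. The required statement is the standard three-chord fact that the slope of the chord on $[\min(x_j,y_j),\max(x_j,y_j)]$ is monotone in both endpoints.

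This monotonicity is the one step that needs care. The intervals for consecutive $j$ may overlap or be degenerate. In the degenerate case we must pick the subgradient consistently, namely between the neighbouring chord slopes, which is possible precisely because convex slopes are monotone. I expect this bookkeeping to be the main, though mild, obstacle. If the endpoint derivative is infinite, we handle it by a limiting argument: approximate $f$ by convex nondecreasing functions with finite slopes and pass to the limit.

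Next we sum by parts. Write $D_\ell=\sum_{j\le\ell}(y_j-x_j)\ge0$ and set $D_0=0$. Then
\begin{equation*}
  \sum_{j=1}^{n} c_j\,(y_j-x_j)=\sum_{\ell=1}^{n-1}(c_\ell-c_{\ell+1})\,D_\ell+c_n D_n .
\end{equation*}
Every coefficient on the right is nonnegative: $c_\ell-c_{\ell+1}\ge0$ by monotonicity of the slopes, and $c_n\ge0$ by monotonicity of $f$. Every $D_\ell$ is nonnegative by the weak majorization hypothesis. Hence the sum is nonnegative, and combining with the linearization gives $\sum_j f(x_j)\le\sum_j f(y_j)$.

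The hypothesis is used only through $D_\ell\ge0$ for all $\ell$. Weak majorization, as opposed to full majorization, is what forces the requirement that $f$ be nondecreasing, since it is needed for $c_n\ge0$.

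As an alternative, we could invoke Tomić–Weyl: $x\prec_w y$ implies there is $u$ with $x\le u$ entrywise and $u\prec y$. Then monotonicity of $f$ gives $\sum f(x_j)\le\sum f(u_j)$, and Karamata's inequality gives $\sum f(u_j)\le\sum f(y_j)$. The Abel-summation route above is self-contained, so we use it.
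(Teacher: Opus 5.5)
Your argument is correct, but it takes a different route from the paper.

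The paper works with hinge functions. It invokes the standard equivalence of $x\prec_w y$ with $H_x(t)=\sum_j(x_j-t)_+\le H_y(t)$ for all $t\ge0$. It then represents a continuous nondecreasing convex $f$ as $f(0)+au+\int_0^1(u-t)_+\,d\mu(t)$ with $a\ge0$, $\mu\ge0$, and sums. The only possible discontinuity, an upward jump at $u=1$, is split off as $c\,\mathbf 1_{\{u=1\}}$ and handled by a separate counting argument ($k$ unit entries of $x$ force $y_1=\dots=y_k=1$).

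Your Abel summation against chord slopes is the classical Hardy--Littlewood--P\'olya/Karamata proof. It is self-contained and needs neither the $H_x\le H_y$ characterization nor the integral representation. Because chord slopes of any real-valued convex function are finite and monotone in both endpoints, it also absorbs the jump at $u=1$ with no separate treatment.

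For the same reason you can drop both the subdifferential language and the limiting argument. At $u=1$ the subdifferential is empty when $f$ jumps or has infinite left slope, as for the step generator $\mathbf 1_{\{u=1\}}$ behind the Gaussian nullity. But for a degenerate index $x_j=y_j$ you only need some $c_j$ between the neighbouring nondegenerate chord slopes (or $0$ in the tail), and such a value always exists.

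What the paper's route buys is structural. It exhibits the hinges $(u-t)_+$ as generators of the cone of nondecreasing convex generators. The same decomposition reappears in the proof of Proposition~\ref{prop:finite_opt} through the capped monotones $W_c$. It also makes the converse direction, which is used when common blocks are cancelled in the catalysis arguments, essentially immediate, since the hinges are themselves admissible test functions.
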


\begin{proof}
For a decreasing vector $x$, define $H_x(t)=\sum_j(x_j-t)_+$, where $(x)_+:=\max(x,0)$.
Largest partial sum dominance $x\prec_wy$ is equivalent to $H_x(t)\le H_y(t)$
for every $t\ge0$~\cite{Bhatia1997,MarshallOlkinArnold2011}. Any continuous convex
nondecreasing $f$ on $[0,1]$ admits the representation
\begin{equation}
  f(u)=f(0)+au+\int_0^1(u-t)_+\,{\rm d}\mu(t),
  \qquad a\ge0,\ \mu\ge0.
  \label{eq:sm-representation}
\end{equation}
Summing over components and using $H_x\le H_y$, together with
$\sum_jx_j\le\sum_jy_j$ for the linear term, proves the claim for continuous $f$. A nondecreasing convex $f$ on $[0,1]$ is continuous on $[0,1)$ and can be discontinuous only at the right endpoint: write $f=f_c+c\,\mathbf 1_{\{u=1\}}$ with $c=f(1)-f(1^-)\ge0$ and $f_c$ continuous, nondecreasing, and convex. The continuous part transfers as above. For the indicator, let $k=\#\{j:x_j=1\}$; then $\sum_{j\le k}x_j=k$, so $x\prec_wy$ forces $\sum_{j\le k}y_j\ge k$, and since every entry is at most one, $y_1=\cdots=y_k=1$. Hence $\#\{j:y_j=1\}\ge k$, which is the transfer inequality for $\mathbf 1_{\{u=1\}}$, completing the proof for arbitrary nondecreasing convex $f$.
\end{proof}

Applying scalar Jensen to each component,
$f(y_j)\le\sum_sp_sf(r_j(\Gam_{\rho_s}))$, and subtracting from $Nf(1)$
yields, for every nondecreasing convex $f$,
\begin{equation}
  \Phi_f(\rho)\ge\sum_sp_s\,\Phi_f(\rho_s)
  \label{eq:sm-deficit-monotone}
\end{equation}
under a one-mode occupation measurement with the record retained. This is
the workhorse used in Theorem~\ref{thm:admissible}(i).

\subsection{Lift to arbitrary Gaussian protocols}
\label{sm:lift}

Let us comment on how the result generalizes to arbitrary Gaussian operations, dispatched in the following one by one. (i)~Gaussian
unitaries act as $\Gam\mapsto O\Gam O^T$ with $O\in SO(2N)$ and preserve
every Williamson value. (ii)~Appending a pure Gaussian ancilla appends
unit Williamson values by Eq.~\eqref{eq:sm-concatenation}; unit entries
only increase partial sums and leave every spectral deficit unchanged.
(iii)~Occupation measurements obey Lemma~\ref{lem:sm-onemode}; adaptive
multimode measurements are sequences of one-mode measurements, with the
lemma applied conditionally at each node of the protocol tree, and the
nested conditional averages compose into the final ensemble average;
classical feedforward merely selects the branch and does not alter the
induction. An induction over the protocol tree therefore proves
the majorization~\eqref{eq:partialsum-main}--\eqref{eq:vector-majorization} for every record-retaining protocol without unrecorded discards, for arbitrary,
possibly mixed, inputs. Theorem~\ref{thm:main}, and its extension to protocols containing unrecorded discards, are derived in Sec.~\ref{sm:mixed}.

\section{Mixed states and recorded discards}
\label{sm:mixed}

We call a Gaussian protocol \emph{record-retaining} when its complete classical record is kept: measurement outcomes are never erased, and every discarded mode is first measured in its occupation basis with the outcome retained (a \emph{recorded} discard). An \emph{unrecorded} discard is a partial trace without measurement. Theorem~\ref{thm:main} of the main text is the pure-state case of the following general statement.

\begin{lemma}[Mixed-state Williamson majorization]
\label{lem:sm-mixed}
Let $\rho$ be an arbitrary $N$-mode state and let $\{(p_s,\rho_s)\}_s$ be the ensemble produced by any record-retaining Gaussian protocol. Then Eqs.~\eqref{eq:partialsum-main} and~\eqref{eq:vector-majorization} hold verbatim, with $\rho$ and $\rho_s$ in place of $\psi$ and $\phi_s$.
\end{lemma}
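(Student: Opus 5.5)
The proof is an induction over the protocol tree, with Lemma~\ref{lem:sm-onemode} in its record-retained form, Eq.~\eqref{eq:sm-partialsum-retained}, as the only nontrivial step. I first list the elementary operations that a record-retaining protocol can use: Gaussian unitaries, appending pure Gaussian ancillas, one-mode occupation measurements with the outcome kept, recorded discards, and classical feedforward. A multimode occupation measurement is sequential, so it decomposes into one-mode measurements. A recorded discard is a one-mode measurement whose outcome is kept, followed by removal of that mode. The retained record, $\ketbra{s}{s}$ on that mode, has definite parity and a unit Williamson value. So removing the mode and then re-padding with a unit entry leaves the padded spectrum $\widetilde\rvec$ unchanged, by Eq.~\eqref{eq:sm-concatenation}.

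The first task is to check that each elementary step obeys the claimed partial-sum inequality for arbitrary, possibly mixed, input states.
\begin{enumerate}[(i)]
\item Gaussian unitaries act as $\Gam\mapsto O\Gam O^T$ with $O\in SO(2N)$. This leaves every Williamson value, and hence every $S_\ell$, unchanged.
\item Appending a pure Gaussian ancilla concatenates unit entries onto the spectrum. This agrees with the unit padding already used to compare spectra of different lengths, so $S_\ell$ is unchanged.
\item A one-mode occupation measurement satisfies Lemma~\ref{lem:sm-onemode}. That lemma was already proven for arbitrary mixed $\rho$, because the localization step produces mixed reduced states anyway. It gives $S_\ell(\Gam_\rho)\le\sum_s p_s S_\ell(\Gam_{\rho_s})$ directly.
\end{enumerate}

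Next I compose these steps. The induction is on the depth of the protocol tree. Suppose that after $k$ steps the ensemble $\{(q_t,\rho_t)\}$ satisfies $S_\ell(\Gam_\rho)\le\sum_t q_t S_\ell(\Gam_{\rho_t})$ for every $\ell$. At step $k+1$, feedforward chooses which elementary operation to apply on each branch $t$; this choice may depend on $t$, but each branch receives a fixed operation. Applying (i)--(iii) conditionally to $\rho_t$ gives $S_\ell(\Gam_{\rho_t})\le\sum_{s}p_{s|t}S_\ell(\Gam_{\rho_{t,s}})$. Averaging over $t$ and chaining with the inductive hypothesis gives the claim for the refined ensemble with weights $q_tp_{s|t}$. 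Throughout, all spectra are padded to a common length, namely the maximum number of modes reached along any branch. I will check that padding commutes with every step: it never lowers $S_\ell$, and a padded entry behaves exactly like an untouched Gaussian ancilla mode.

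The final step is the equivalence with the vector form~\eqref{eq:vector-majorization}. Each branch spectrum is ordered decreasingly, so the averaged vector $y_j=\sum_s p_s\widetilde r_j(\Gam_{\rho_s})$ is also decreasing. Its $\ell$-th partial sum is therefore exactly $\sum_s p_s S_\ell(\Gam_{\rho_s})$, as in Sec.~\ref{sm:convex}. The proof rests on the one-mode lemma. The remaining subtle point, which I expect to be the main obstacle, is bookkeeping. I must confirm that removing a mode after recording its outcome is compatible with padding, which means the recorded state's unit Williamson value is what gets replaced by a padded unit entry. I must also confirm that the concatenation identity applies, which holds because the record has definite parity even when $\rho_s$ does not.
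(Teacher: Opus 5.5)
Your proposal is correct and follows the paper's own route: the paper proves this lemma by the protocol-tree induction of Sec.~\ref{sm:lift}, with the mixed-state one-mode Lemma~\ref{lem:sm-onemode}, in its record-retained form, as the only nontrivial step. Your extra bookkeeping for recorded discards---dropping the definite-parity, unit-Williamson-value record mode and re-padding with a unit entry---and your check that unit padding preserves the partial-sum inequalities spell out details the paper leaves implicit, without changing the argument.
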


\begin{proof}
This is precisely the statement established by the induction of Sec.~\ref{sm:lift}: the one-mode Lemma~\ref{lem:sm-onemode} holds for arbitrary mixed states, and every elementary record-retaining operation preserves the partial sum inequalities.
\end{proof}

The restriction to record-retaining protocols is clear from the following example. Tracing out one mode of the two-mode pure Gaussian state $(|00\rangle+|11\rangle)/\sqrt2$ leaves a maximally mixed mode with $r=0$, so $S_1$ drops from $1$ to $0$: an unrecorded discard can create mixed Gaussian marginals on which covariance-based spectral deficits are no longer faithful, cf.~Sec.~\ref{sm:conventions}. For mixed output ensembles the classical record is therefore essential.

For pure state outputs, by contrast, unrecorded discards cause no loss of generality, and Theorem~\ref{thm:main} holds for arbitrary Gaussian protocols, as stated. Refine each unrecorded discard by measuring the discarded modes in the occupation basis while retaining the outcomes; the refined protocol is record-retaining, so Lemma~\ref{lem:sm-mixed} applies to the refined ensemble, and for a pure input every refined branch is pure. If a coarse-grained output branch is itself pure, all nonzero refined states contributing to it must coincide up to a phase---otherwise their mixture would have rank larger than one---so forgetting the refined record changes neither the output state nor the average of any spectral functional. The same collinearity argument covers classical coarse-graining of the record with pure outputs. Hence Theorem~\ref{thm:main}, and with it every pure-output statement of the main text, holds for arbitrary Gaussian protocols, including unrecorded discards.

\section{Proof of Theorem~\ref{thm:admissible}}
\label{sm:thm2}

\subsection{(i) Strong monotonicity if and only if \texorpdfstring{$f$}{f} is convex}

\emph{Sufficiency.} For convex nondecreasing $f$,
Eq.~\eqref{eq:sm-deficit-monotone} gives the elementary one-measurement
inequality, and the protocol lift of Sec.~\ref{sm:lift} extends it to
arbitrary Gaussian protocols with pure outputs, with unrecorded discards handled by the refinement of Sec.~\ref{sm:mixed}: $\Phi_f$ is a strong
monotone.

\emph{Necessity.} We exhibit, for every failure of convexity, a Gaussian
protocol that increases $\Phi_f$ on average. Fix $x,y\in[0,1]$ and
$p\in(0,1)$, $q=1-p$. Choose real single-qubit states
$v_s=\cos\theta_s|0\rangle+\sin\theta_s|1\rangle$ with
\begin{equation}
  \langle v_0|Z|v_0\rangle=\cos2\theta_0=x,
  \qquad
  \langle v_1|Z|v_1\rangle=\cos2\theta_1=-y,
  \label{eq:sm-angles}
\end{equation}
and consider the two-mode pure state
\begin{equation}
  |\psi_{x,y,p}\rangle
  =\sqrt{p}\,|0\rangle|v_0\rangle+\sqrt{q}\,|1\rangle|v_1\rangle.
  \label{eq:sm-family}
\end{equation}
With the Jordan--Wigner Majoranas $\gamma_1=X\otimes\id$,
$\gamma_2=Y\otimes\id$, $\gamma_3=Z\otimes X$, $\gamma_4=Z\otimes Y$, and
writing $a=x$, $b=-y$, a direct evaluation of the covariance gives
\begin{equation}
  \Gam_\psi=
  \begin{pmatrix}
  0&p-q&0&2\sqrt{pq}\,\sin(\theta_1{-}\theta_0)\\
  q-p&0&2\sqrt{pq}\,\sin(\theta_0{+}\theta_1)&0\\
  0&-2\sqrt{pq}\,\sin(\theta_0{+}\theta_1)&0&pa+qb\\
  -2\sqrt{pq}\,\sin(\theta_1{-}\theta_0)&0&-(pa+qb)&0
  \end{pmatrix}.
  \label{eq:sm-family-covariance}
\end{equation}
Using $\sin^2(\theta_1{-}\theta_0)+\sin^2(\theta_0{+}\theta_1)=1-ab$ and
$\sin(\theta_1{-}\theta_0)\sin(\theta_0{+}\theta_1)=(a-b)/2$, one finds
\begin{equation}
  r_1^2+r_2^2=(p-q)^2+(pa+qb)^2+4pq(1-ab)=1+(pa-qb)^2,
  \qquad
  r_1r_2=|{\rm Pf}(\Gam_\psi)|=|pa-qb|,
  \label{eq:sm-family-invariants}
\end{equation}
so the Williamson spectrum is exactly
\begin{equation}
  \rvec(\Gam_{\psi_{x,y,p}})=\bigl(1,\;px+qy\bigr).
  \label{eq:sm-family-spectrum}
\end{equation}
Measuring the occupation of the first mode yields outcome $s=0$ with
probability $p$ and $s=1$ with probability $q$, with pure post-measurement
states $|s\rangle|v_s\rangle$ whose spectra are $(1,x)$ and $(1,y)$,
respectively. Strong monotonicity of $\Phi_f$ applied to this protocol
therefore demands
\begin{equation}
  f(1)-f(px+qy)\ \ge\ p\bigl[f(1)-f(x)\bigr]+q\bigl[f(1)-f(y)\bigr],
  \qquad\text{i.e.}\qquad
  f(px+qy)\le pf(x)+qf(y).
  \label{eq:sm-convexity-forced}
\end{equation}
Ranging over all $x,y\in[0,1]$ and $p\in(0,1)$, this is precisely
convexity of $f$; any nonconvex nondecreasing $f$ is violated by some
member of the family. The witnesses~\eqref{eq:sm-family} carry parity coherence; within the parity-superselected sector only the sufficiency direction is used in this work, and it holds there verbatim.\hfill$\square$

\subsection{(ii) Additivity}

If at least one factor of $\psi\otimes\phi$ has definite parity, the
covariance is block diagonal, Eq.~\eqref{eq:sm-concatenation}, the
Williamson multisets concatenate, and
\begin{equation}
  \Phi_f(\psi\otimes\phi)=\Phi_f(\psi)+\Phi_f(\phi)
  \label{eq:sm-additive}
\end{equation}
for every $f$, since $\Phi_f$ is a sum over the spectrum. In the
parity-superselected setting this applies to all physical
states.\hfill$\square$

\subsection{(iii) Faithfulness}

Each term of $\Phi_f=\sum_j[f(1)-f(r_j)]$ is nonnegative for nondecreasing
$f$. If $f(r)<f(1)$ for every $r<1$, then $\Phi_f(\psi)=0$ forces
$r_j=1$ for all $j$, which by Eq.~\eqref{eq:sm-faithfulness} holds exactly
for pure Gaussian states. Conversely, if $f(r_*)=f(1)$ for some $r_*<1$,
the product state $|0\rangle|v_0\rangle$ of the family
\eqref{eq:sm-family} with $x=r_*$ has spectrum $(1,r_*)$: it is pure,
non-Gaussian, and satisfies $\Phi_f=0$, so $\Phi_f$ is not
faithful.\hfill$\square$

\subsection{(iv) Asymptotic continuity}

Define the uniform modulus of continuity
$\omega_f(\delta)=\sup\{|f(u)-f(v)|:u,v\in[0,1],|u-v|\le\delta\}$; by
compactness of $[0,1]$, continuity of $f$ implies
$\omega_f(\delta)\to0$ as $\delta\to0$. For states $\rho,\sigma$ on the
same $N$-mode system with $\epsilon=\|\rho-\sigma\|_1$, covariance
stability gives $\|\Gam_\rho-\Gam_\sigma\|_{\rm op}\le\epsilon$~\cite{bittel2025optimal},
and Weyl perturbation of singular values then yields
$|r_j(\rho)-r_j(\sigma)|\le\epsilon$ for every $j$. Hence
\begin{equation}
  |\Phi_f(\rho)-\Phi_f(\sigma)|
  \le\sum_{j=1}^N\bigl|f(r_j(\rho))-f(r_j(\sigma))\bigr|
  \le N\,\omega_f(\epsilon),
  \label{eq:sm-continuity}
\end{equation}
so that for sequences with $\epsilon_N\to0$ the deficit densities
converge, $|\Phi_f(\rho_N)-\Phi_f(\sigma_N)|/N\le\omega_f(\epsilon_N)\to0$:
$\Phi_f$ is asymptotically continuous~\cite{ChitambarGour2019}.

\section{Alternative proof of the monotonicity of the Gaussian nullity under post-selection}
\label{sm:nullity}

We give an alternative, self-contained proof that the Gaussian nullity
$\nuG(\psi)=\#\{j:r_j<1\}$ of a pure state cannot increase in \emph{any}
nonzero branch of a one-mode occupation measurement---a statement stronger
than strong monotonicity. The proof identifies Williamson values of unity
with linear Majorana annihilators.

\begin{lemma}[Linear annihilators]
\label{lem:sm-annihilators}
For $z\in\mathbb C^{2N}$ set $\gamma(z)=\sum_az_a\gamma_a$ and
$\mathcal A_\psi=\{z:\gamma(z)|\psi\rangle=0\}$. For every normalized pure
state,
\begin{equation}
  \mathcal A_\psi=\ker(\id+i\Gam_\psi),
  \qquad
  \dim_{\mathbb C}\mathcal A_\psi=\#\{j:r_j=1\},
  \qquad
  \nuG(\psi)=N-\dim_{\mathbb C}\mathcal A_\psi.
  \label{eq:sm-annihilator-count}
\end{equation}
\end{lemma}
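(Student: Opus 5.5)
The argument for Lemma~\ref{lem:sm-annihilators} rests on one computation: the Gram form of the linear Majorana operators in $\psi$. For $z\in\mathbb C^{2N}$ I will expand $\|\gamma(z)|\psi\rangle\|^2=\langle\psi|\gamma(z)^\dagger\gamma(z)|\psi\rangle$ with $\gamma(z)^\dagger=\sum_a\bar z_a\gamma_a$. I will split $\gamma_a\gamma_b=\tfrac12\{\gamma_a,\gamma_b\}+\tfrac12[\gamma_a,\gamma_b]$ and use $\{\gamma_a,\gamma_b\}=2\delta_{ab}$ together with the definition \eqref{eq:covariance}, giving $\tfrac12\langle[\gamma_a,\gamma_b]\rangle=i\Gam_{ab}$. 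The result is
\begin{equation}
  \|\gamma(z)|\psi\rangle\|^2=z^\dagger(\id+i\Gam_\psi)z ,
\end{equation}
which is exactly the positivity computation behind Eq.~\eqref{eq:sm-physicality}. The matrix $K:=\id+i\Gam_\psi$ is Hermitian and positive semidefinite. For such a matrix, $z^\dagger Kz=0$ holds if and only if $K^{1/2}z=0$, which holds if and only if $Kz=0$. So $\gamma(z)|\psi\rangle=0$ exactly when $z\in\ker K$, which proves the first identity $\mathcal A_\psi=\ker(\id+i\Gam_\psi)$.

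For the dimension count I will use the canonical form \eqref{eq:sm-canonical}. An orthogonal $O$ brings $\Gam$ to $\bigoplus_j r_j\epsilon$ with $\epsilon=\left(\begin{smallmatrix}0&1\\-1&0\end{smallmatrix}\right)$. Since $O$ is real orthogonal, it conjugates $K$ into $\bigoplus_j(\id_2+ir_j\epsilon)$ without changing kernel dimensions. Each $2\times2$ block has eigenvalues $1\pm r_j$, because $i\epsilon$ is Hermitian with eigenvalues $\pm1$. Since $0\le r_j\le1$, a block is singular exactly when $r_j=1$, and then its kernel is one-dimensional. Summing over blocks gives $\dim_{\mathbb C}\ker K=\#\{j:r_j=1\}$, and the third identity $\nuG(\psi)=N-\dim_{\mathbb C}\mathcal A_\psi$ follows by subtracting from $N$.

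There is no real obstacle. The only care needed is in the expansion step: keeping track of the convention sign (so that the matrix is $+i\Gam$ rather than $-i\Gam$) and making sure conjugation sits on the left vector. A sign slip there would only replace $\ker(\id+i\Gam)$ by $\ker(\id-i\Gam)$, which has the same dimension, so the counting statement is robust in any case. The result holds for any normalized state; purity is not used in this lemma. Purity will matter only in the subsequent nullity argument, where annihilators of $\psi$ are tracked through a measurement branch.
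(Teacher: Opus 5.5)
Your proposal is correct and follows the paper's proof: the Gram identity $\|\gamma(z)|\psi\rangle\|^2=z^\dagger(\id+i\Gam_\psi)z$ gives the kernel characterization, and the canonical-form $2\times2$ blocks give the dimension count. The differences are minor: you read off singularity from the block eigenvalues $1\pm r_j$ where the paper uses the determinant $1-r_j^2$, and you make explicit the step $z^\dagger Kz=0\Rightarrow Kz=0$ for positive semidefinite $K$, which the paper leaves implicit.
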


\begin{proof}
From the covariance definition,
$\langle\psi|\gamma_a\gamma_b|\psi\rangle=\delta_{ab}+i(\Gam_\psi)_{ab}$,
so $\|\gamma(z)|\psi\rangle\|^2=z^\dagger(\id+i\Gam_\psi)z$ and
$\mathcal A_\psi=\ker(\id+i\Gam_\psi)$. A Williamson plane contributes the
block $\bigl(\begin{smallmatrix}1&ir_j\\-ir_j&1\end{smallmatrix}\bigr)$
with determinant $1-r_j^2$, which has a one-dimensional complex kernel
exactly when $r_j=1$.
\end{proof}

\begin{lemma}[Branch counting]
\label{lem:sm-branch}
Let $|\psi\rangle=\sqrt{p_0}|0\rangle|\phi_0\rangle
+\sqrt{p_1}|1\rangle|\phi_1\rangle$ be an arbitrary pure state, with no
parity restriction, and let $|\psi_s\rangle=|s\rangle|\phi_s\rangle$ be
the branches with $p_s>0$. Then $\nuG(\psi_s)\le\nuG(\psi)$.
\end{lemma}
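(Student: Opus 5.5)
The plan is to work entirely with the annihilator characterization of Lemma~\ref{lem:sm-annihilators}. Since $\nuG=N-\dim_{\mathbb C}\mathcal A$ and both $\psi$ and $\psi_s$ live on $N$ modes, the claim $\nuG(\psi_s)\le\nuG(\psi)$ is equivalent to
\begin{equation}
  \dim_{\mathbb C}\mathcal A_{\psi_s}\ \ge\ \dim_{\mathbb C}\mathcal A_\psi .
\end{equation}
I will establish this by constructing, from $\mathcal A_\psi$, enough linearly independent annihilators of $|\psi_s\rangle=|s\rangle|\phi_s\rangle$. I use the Jordan--Wigner split of Sec.~\ref{sm:affine}: $\gamma_1=X\otimes\id$, $\gamma_2=Y\otimes\id$ and $\gamma_{a+2}=Z\otimes\mu_a$. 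For $z=(z_A,z_B)$ I write $\gamma(z)=\gamma_A(z_A)+\gamma_B(z_B)$.

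The first step records the commutation structure. Each $\gamma_B$ term commutes with the occupation projectors $\Pi_s$, while $\gamma_1$ and $\gamma_2$ flip the measured mode, so $\Pi_s\gamma_{1,2}=\gamma_{1,2}\Pi_{1-s}$. Take $z\in\mathcal A_\psi$ and apply $\Pi_s$ to $\gamma(z)|\psi\rangle=0$. The result is a vector of the form $|s\rangle\otimes[\ldots]$, and it gives
\begin{equation}
  c_s(z_A)\sqrt{p_{1-s}}\,|\phi_{1-s}\rangle+(-1)^s\sqrt{p_s}\,\mu(z_B)|\phi_s\rangle=0 ,
\end{equation}
where $c_s$ is the linear functional defined by $\gamma_A(z_A)|1-s\rangle=c_s(z_A)|s\rangle$. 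On the subspace $K=\{z\in\mathcal A_\psi:c_s(z_A)=0\}$ (or on all of $\mathcal A_\psi$ if $p_{1-s}=0$), the vector $z_B$ satisfies $\mu(z_B)|\phi_s\rangle=0$. Hence $(0,z_B)\in\mathcal A_{\psi_s}$, because $\gamma_B$ carries only the diagonal $Z$ string. This defines a linear map $K\to\mathcal A_{\psi_s}$, $z\mapsto(0,z_B)$, and $\dim K\ge\dim\mathcal A_\psi-1$.

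The second step supplies the missing dimension from the measured mode itself. The state $|s\rangle$ is annihilated by $a_s$ or $a_s^\dagger$, which is a linear combination $w_s$ of $\gamma_1$ and $\gamma_2$. Thus $(w_s,0)\in\mathcal A_{\psi_s}$. This vector has vanishing $B$-part and nonzero $A$-part, so it is linearly independent of the image of $K$, every element of which has zero $A$-part. If the map on $K$ is injective, we obtain $\dim\mathcal A_{\psi_s}\ge(\dim\mathcal A_\psi-1)+1$, which is the claim.

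The remaining point, and the step I expect to need the most care, is injectivity. Its kernel consists of those $z\in\mathcal A_\psi$ with $z_B=0$, that is, pure-$A$ annihilators $\gamma_A(z_A)|\psi\rangle=0$. A nonzero $z_A$ of this kind must be a multiple of the annihilator of a definite occupation of the measured mode, since the null vectors of the $2\times2$ block $\id+i\Gam|_A$ exist only when $|\Gam_{12}|=1$. In that case $\psi$ is already $|s'\rangle|\phi_{s'}\rangle$, and $p_s>0$ forces $s'=s$. Then $\psi_s=\psi$, so the inequality is an equality and nothing is to prove. Otherwise the kernel is trivial and the counting above goes through. The degenerate case $p_{1-s}=0$ is this same situation. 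No parity assumption is used anywhere, which matches the statement.
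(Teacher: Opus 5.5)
Your proof is correct and follows essentially the same route as the paper. Both arguments restrict $\mathcal A_\psi$ to the kernel of a one-dimensional coefficient functional ($c_s$ for you, $\alpha$ in the paper), map that kernel injectively into the annihilators of the branch via the $B$-part, and add the measured-mode annihilator $f$ or $f^\dagger$ to recover the lost dimension. The one difference is injectivity, where your detour through $\id+i\Gam|_A$ and the product-state case split is unnecessary: on the kernel, $z_B=0$ leaves only a multiple of the operator sending $|s\rangle\to|1-s\rangle$, whose action $\propto\sqrt{p_s}\,|1-s\rangle|\phi_s\rangle$ vanishes only if its coefficient does, since $p_s>0$.
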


\begin{proof}
Order the measured mode first and use Eq.~\eqref{eq:sm-strings}; with
$f=(\gamma_1+i\gamma_2)/2=|0\rangle\!\langle1|\otimes\id$, every linear
Majorana operator decomposes uniquely as $L=\alpha f+\beta f^\dagger
+Z\otimes W$, with $W$ linear in the intrinsic Majoranas of $B$. Acting on
$\psi$,
\begin{equation}
  L|\psi\rangle
  =|0\rangle\bigl(\alpha\sqrt{p_1}|\phi_1\rangle
  +\sqrt{p_0}W|\phi_0\rangle\bigr)
  +|1\rangle\bigl(\beta\sqrt{p_0}|\phi_0\rangle
  -\sqrt{p_1}W|\phi_1\rangle\bigr),
  \label{eq:sm-L-action}
\end{equation}
and $L\in\mathcal A_\psi$ forces both parentheses to vanish. Let
$d=\dim\mathcal A_\psi$ and suppose $p_0>0$. The coefficient map
$L\mapsto\alpha$ on $\mathcal A_\psi$ has image of dimension at most one,
so its kernel $\mathcal K_0=\{L\in\mathcal A_\psi:\alpha=0\}$ has
$\dim\mathcal K_0\ge d-1$ by rank--nullity. For $L\in\mathcal K_0$, the
first parenthesis of Eq.~\eqref{eq:sm-L-action} gives
$W|\phi_0\rangle=0$, i.e., $W\in\mathcal A_{\phi_0}$; the map
$\mathcal K_0\to\mathcal A_{\phi_0}$, $L\mapsto W$, is injective, because
$W=0$ leaves $L=\beta f^\dagger$ and the second parenthesis then forces
$\beta=0$. Hence $\dim\mathcal A_{\phi_0}\ge d-1$. For the product branch
$|\psi_0\rangle=|0\rangle|\phi_0\rangle$ one checks directly that
$L|\psi_0\rangle=\beta|1\rangle|\phi_0\rangle+|0\rangle W|\phi_0\rangle$,
so $\mathcal A_{\psi_0}=\operatorname{span}\{f\}\oplus
(Z\otimes\mathcal A_{\phi_0})$, where $Z\otimes\mathcal A_{\phi_0}$
denotes the operators $Z\otimes W$ with $W\in\mathcal A_{\phi_0}$, and
\begin{equation}
  \dim\mathcal A_{\psi_0}=1+\dim\mathcal A_{\phi_0}\ge d.
  \label{eq:sm-branch-dim}
\end{equation}
The case $s=1$ is identical with $\beta$ and $f^\dagger$ in place of
$\alpha$ and $f$. By Lemma~\ref{lem:sm-annihilators},
$\nuG(\psi_s)=N-\dim\mathcal A_{\psi_s}\le N-d=\nuG(\psi)$.
\end{proof}

Monotonicity under postselection implies average monotonicity,
$\sum_sp_s\nuG(\psi_s)\le\nuG(\psi)$; Gaussian unitaries preserve
$\nuG$, definite-parity ancillas append saturated modes and leave it
unchanged, and the protocol induction of Sec.~\ref{sm:lift}---including
the discard refinement of Sec.~\ref{sm:mixed} for pure outputs---then shows that $\nuG$ is a
strong monotone.

\section{Proof of Theorem~\ref{thm:catalytic-bound}}
\label{sm:catalysis}

Using the unit-padding convention, we may take $\psi$ and $\phi$ to have
the same number $n$ of modes; let the catalyst have $m$ modes. Denote the
real Majorana spaces of the system and catalyst by
$V_S\simeq\mathbb R^{2n}$ and $V_C\simeq\mathbb R^{2m}$. Let
$W=v_\omega^\perp$ when $v_\omega\ne0$, and otherwise let $W$ be any
codimension-one subspace of $V_C$. In either case, $\dim W=2m-1$.

Let $B_W\in\mathbb R^{2m\times(2m-1)}$ have orthonormal columns spanning
$W$, and define the isometry
\begin{equation}
  R=\begin{pmatrix}
      \id_{2n}&0\\
      0&B_W
    \end{pmatrix},
  \qquad R^TR=\id_{2(n+m)-1}.
  \label{eq:sm-catalyst-isometry}
\end{equation}
Since $B_W^Tv_\omega=0$, direct substitution into the product block form
in Eq.~\eqref{eq:product-cross-block} gives
\begin{align}
  C_\rho
  &:=R^T\Gam_{\rho\otimes\omega}R \notag\\
  &=\begin{pmatrix}
      \Gam_\rho & u_\rho v_\omega^TB_W\\
      -B_W^Tv_\omega u_\rho^T & B_W^T\Gam_\omega B_W
    \end{pmatrix} \notag\\
  &=\Gam_\rho\oplus\Gam_{\omega,W},
  \qquad
  \Gam_{\omega,W}:=B_W^T\Gam_\omega B_W.
  \label{eq:sm-catalyst-compression}
\end{align}
Thus, $iC_\rho$ is a codimension-one
Hermitian principal compression of $i\Gam_{\rho\otimes\omega}$.

Because $\Gam_{\omega,W}$ is an odd-dimensional real antisymmetric
matrix, its singular values consist of $m-1$ equal pairs and one
unpaired zero. Let $\bm d_\omega=(d_1,\ldots,d_{m-1})$ contain one value
from each pair, in decreasing order, and set
\begin{equation}
  z_\rho:=\widetilde \rvec(\Gam_\rho)\sqcup\bm d_\omega,
  \qquad N:=n+m,
  \label{eq:sm-catalyst-z}
\end{equation}
where the union is rearranged in decreasing order. The eigenvalues of
$i\Gam_{\rho\otimes\omega}$ and $iC_\rho$, respectively, are
\begin{align}
  &\widetilde r_1,\ldots,\widetilde r_N,-\widetilde r_N,\ldots,-\widetilde r_1, \notag\\
  &(z_\rho)_1,\ldots,(z_\rho)_{N-1},0,
    -(z_\rho)_{N-1},\ldots,-(z_\rho)_1,
  \label{eq:sm-catalyst-spectra}
\end{align}
with $\widetilde r_j= \widetilde r_j(\Gam_{\rho\otimes\omega})$. Cauchy interlacing theorem~\cite{Bhatia1997} for the
Hermitian principal compression of $i\Gam_{\rho\otimes\omega}$ then yields
\begin{equation}
  \widetilde r_j(\Gam_{\rho\otimes\omega})
  \ge (z_\rho)_j
  \ge \widetilde r_{j+1}(\Gam_{\rho\otimes\omega}),
  \qquad j=1,\ldots,N-1.
  \label{eq:sm-catalyst-interlacing}
\end{equation}

Write
$p=\widetilde \rvec(\Gam_{\psi\otimes\omega})$ and
$q=\widetilde \rvec(\Gam_{\phi\otimes\omega})$. The necessary condition for the
catalyzed deterministic conversion is $p\prec_w q$. From the first half
of Eq.~\eqref{eq:sm-catalyst-interlacing} for the input and the second
half for the output,
\begin{equation}
  (z_\psi)_j\le p_j,
  \qquad
  q_j\le(z_\phi)_{j-1}\quad(j=2,\ldots,N).
  \label{eq:sm-catalyst-two-sides}
\end{equation}
Consequently, for $k=1,\ldots,N-1$,
\begin{equation}
  S_k(z_\psi)
  \le S_k(p)
  \le S_k(q)
  \le 1+S_{k-1}(z_\phi),
  \label{eq:sm-catalyst-partialsum}
\end{equation}
where the last inequality uses $q_1\le1$ and
$S_0:=0$. These inequalities, together with the trivial final partial sum 
inequality obtained from the case $k=N-1$, are precisely
\begin{equation}
  z_\psi\sqcup(0)\prec_w(1)\sqcup z_\phi.
  \label{eq:sm-catalyst-common}
\end{equation}
Substituting Eq.~\eqref{eq:sm-catalyst-z} gives
\begin{equation}
  \widetilde \rvec(\Gam_\psi)\sqcup\bm d_\omega\sqcup(0)
  \prec_w
  (1)\sqcup\widetilde \rvec(\Gam_\phi)\sqcup\bm d_\omega.
  \label{eq:sm-catalyst-before-cancel}
\end{equation}
Finally, $x\prec_w y$ is equivalent to
$\sum_j f(x_j)\le\sum_j f(y_j)$ for every convex nondecreasing
$f$~\cite{MarshallOlkinArnold2011}. The identical term
$\sum_a f(d_a)$ cancels from the two sides of
Eq.~\eqref{eq:sm-catalyst-before-cancel}, leaving
\begin{equation}
  \widetilde \rvec(\Gam_\psi)\sqcup(0)
  \prec_w
  (1)\sqcup\widetilde \rvec(\Gam_\phi),
  \label{eq:sm-catalyst-final}
\end{equation}
which is the claim.
\hfill$\square$

\section{Asymptotic conversion bounds and irreversibility}
\label{sm:asymptotic}

Throughout this section the input and target states are assumed to have
definite fermionic parity, as demanded by parity superselection; this is
the only place where a parity assumption enters, through the additivity
statement of Theorem~\ref{thm:admissible}(ii), which guarantees 
$\Phi_f(\psi^{\otimes n})=n\Phi_f(\psi)$.

\subsection{Proof of Corollary~\ref{cor:rate}}

Call $f$ admissible as in Theorem~\ref{thm:admissible}: continuous,
nondecreasing, convex on $[0,1]$, with $f(r)<f(1)$ for $r<1$.

\begin{lemma}[Rate bound]
\label{lem:sm-rate}
For every admissible $f$ with $\Phi_f(\phi)>0$,
\begin{equation}
  R(\psi\to\phi)\le\frac{\Phi_f(\psi)}{\Phi_f(\phi)} .
  \label{eq:sm-rate}
\end{equation}
\end{lemma}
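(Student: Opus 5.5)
The plan is the standard argument that a monotone which is additive, strongly monotone and asymptotically continuous bounds the asymptotic rate, with each ingredient supplied by Theorem~\ref{thm:admissible}. Fix an admissible $f$ with $\Phi_f(\phi)>0$ and a sequence of deterministic Gaussian protocols $\Lambda_n$ with $\Lambda_n(\psi^{\otimes n})=\widetilde\phi_n$ pure and $\|\widetilde\phi_n-\phi^{\otimes m_n}\|_1\le\epsilon_n\to0$. Take a subsequence along which $m_n/n\to R$; the case $R=0$ is trivial.

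\emph{Step 1 (monotonicity and additivity).} By Theorem~\ref{thm:admissible}(i), $f$ is convex and nondecreasing, so $\Phi_f$ is a strong pure-state monotone. For a deterministic protocol this gives $\Phi_f(\psi^{\otimes n})\ge\Phi_f(\widetilde\phi_n)$, after unit-padding both sides to a common mode number; padding does not change $\Phi_f$. Since $\psi$ has definite parity, Theorem~\ref{thm:admissible}(ii), applied inductively, gives $\Phi_f(\psi^{\otimes n})=n\Phi_f(\psi)$. In the same way $\Phi_f(\phi^{\otimes m_n})=m_n\Phi_f(\phi)$.

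\emph{Step 2 (continuity).} Compare $\Phi_f(\widetilde\phi_n)$ with $\Phi_f(\phi^{\otimes m_n})$ using Eq.~\eqref{eq:sm-continuity}. The two states live on the same $K_n=m_nN_\phi$ modes, where $N_\phi$ is the mode number of $\phi$; any extra output modes are padded with Gaussian ancilla on the target side. This yields
\[
|\Phi_f(\widetilde\phi_n)-m_n\Phi_f(\phi)|\le m_nN_\phi\,\omega_f(\epsilon_n).
\]
Combining with Step 1 gives
\[
n\Phi_f(\psi)\ge m_n\bigl[\Phi_f(\phi)-N_\phi\,\omega_f(\epsilon_n)\bigr].
\]
Divide by $n$ and let $n\to\infty$. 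Because $f$ is continuous on the compact interval $[0,1]$, $\omega_f(\epsilon_n)\to0$, so $\Phi_f(\psi)\ge R\,\Phi_f(\phi)$. Taking the supremum over protocol sequences and subsequences gives Eq.~\eqref{eq:sm-rate}.

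\emph{Main obstacle: mode bookkeeping in Step 2.} The continuity estimate scales with the number of modes, not with the number of copies. It is therefore essential that the output register has exactly $m_nN_\phi$ modes, or that every extra output mode is accounted for. If $\widetilde\phi_n$ carries additional modes, such as recorded measurement registers or leftover ancillas, then the target should be taken as $\phi^{\otimes m_n}$ tensored with a pure Gaussian state on those modes. Otherwise the $\epsilon$-closeness is not well defined.

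To avoid an $\omega_f$ prefactor that grows faster than $m_n$, the plan is to invoke the Weyl bound only on the covariance of the relevant $m_nN_\phi$ modes. Any extra modes should be discarded by a recorded discard, which leaves $\Phi_f$ non-increasing by strong monotonicity (Sec.~\ref{sm:mixed}), before the comparison is made. Finiteness of $R$ is not a problem: if $m_n/n$ were unbounded, the displayed inequality would already contradict $\Phi_f(\phi)>0$.
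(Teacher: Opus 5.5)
Your proposal is correct and follows the paper's proof essentially step for step. Strong monotonicity plus parity-based additivity give $n\Phi_f(\psi)\ge\Phi_f(\widetilde\phi_n)$, the Weyl-based continuity estimate on the $m_nN_\phi$ output modes gives $\Phi_f(\widetilde\phi_n)\ge m_n[\Phi_f(\phi)-N_\phi\,\omega_f(\epsilon_n)]$, and the limit then yields the bound. Your extra discussion of surplus output modes goes beyond the paper, which simply takes the output to live on the same $m_nN_\phi$ modes as the target, as the definition of the rate implicitly assumes.
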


\begin{proof}
Let the protocol map $\psi^{\otimes n}$ to a pure output
$\widetilde\phi_n$ with
$\|\widetilde\phi_n-\phi^{\otimes m_n}\|_1\le\epsilon_n\to0$, as in
Eq.~\eqref{eq:rate-def} of the main text. Strong monotonicity and
additivity give
\begin{equation}
  n\,\Phi_f(\psi)=\Phi_f(\psi^{\otimes n})\ge\Phi_f(\widetilde\phi_n).
  \label{eq:sm-rate-step1}
\end{equation}
The ideal target lives on $m_nN_\phi$ modes, with $N_\phi$ the mode number
of $\phi$, so asymptotic continuity, Eq.~\eqref{eq:sm-continuity}, yields
\begin{equation}
  \Phi_f(\widetilde\phi_n)
  \ge m_n\Phi_f(\phi)-m_nN_\phi\,\omega_f(\epsilon_n).
  \label{eq:sm-rate-step2}
\end{equation}
Combining, for all $n$ large enough that
$N_\phi\omega_f(\epsilon_n)<\Phi_f(\phi)$---which $\epsilon_n\to0$
guarantees eventually---one has
$m_n/n\le\Phi_f(\psi)/[\Phi_f(\phi)-N_\phi\omega_f(\epsilon_n)]$; since
only the limit superior matters, these $n$ suffice, and
$\omega_f(\epsilon_n)\to0$ proves Eq.~\eqref{eq:sm-rate}.
\end{proof}

Optimizing over all admissible $f$ proves
Corollary~\ref{cor:rate}. 

\subsection{Finite optimization of the asymptotic bound}
\label{sm:finite-opt}

Throughout this subsection $f_c(r):=\max\{r,c\}$ for $c\in[0,1)$, and
$W_c:=\Phi_{f_c}$, so that
$W_c(\rho)=\sum_j\min\{1-r_j(\Gam_\rho),1-c\}$. Each $f_c$ is continuous,
nondecreasing and convex, and $f_c(r)=\max\{r,c\}<1=f_c(1)$ for $r<1$;
hence every $f_c$ is admissible.

By standard results on convex functions and
Lebesgue--Stieltjes measures~\cite{RockafellarConvex}, every admissible
$f$ admits the representation
\begin{equation}
  f(r)=f(0)+b\,r+\int_{(0,1)}(r-c)_+\,d\mu(c),
  \qquad r\in[0,1],
  \label{eq:sm-hinge}
\end{equation}
where $b\geq0$ and $\mu$ is
a locally finite nonnegative Borel measure on $(0,1)$ satisfying
$\int_{(0,1)}(1-c)\,d\mu(c)<\infty$. Substituting Eq.~\eqref{eq:sm-hinge} into $f(1)-f(r)$ gives
\begin{equation}
f(1)-f(r)
=
b(1-r)
+
\int_{(0,1)}
\bigl[(1-c)-(r-c)_+\bigr]\,d\mu(c).
\end{equation}
Since
$(1-c)-(r-c)_+=\min\{1-r,1-c\}$, summing over the Williamson values of
$\rho$ yields
\begin{equation}
\Phi_f(\rho)
=
bW_0(\rho)+\int_{(0,1)}W_c(\rho)\,d\mu(c)
=
\int_{[0,1)}W_c(\rho)\,d\nu(c),
\label{eq:sm-family-decomp}
\end{equation}
where $\nu=b\,\delta_0+\mu$.

\begin{lemma}[Reduction to the threshold set]
\label{lem:threshold}
Let $\phi$ be non-Gaussian. Then $c\mapsto W_c(\psi)/W_c(\phi)$ attains its
infimum over $[0,1)$ at a point of
$\mathcal C_{\psi,\phi}
=\{0\}
\cup\{r_j(\Gam_\psi):r_j(\Gam_\psi)<1\}
\cup\{r_j(\Gam_\phi):r_j(\Gam_\phi)<1\}$,
a set of at most $2N+1$ points.
\end{lemma}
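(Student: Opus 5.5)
The plan is to show that, for each fixed state, $c\mapsto W_c(\rho)$ is piecewise affine in $c$ on $[0,1)$ with breakpoints only at its own nonunit Williamson values. A ratio of two affine functions is monotone on each interval, so the minimum must sit at an endpoint. The endpoints will be the points of $\mathcal C_{\psi,\phi}$, together with the right end $c\to1^-$, which I will have to treat separately.

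\emph{Step 1 (piecewise affinity).} Write $W_c(\rho)=\sum_{j:r_j<1}\min\{1-r_j,1-c\}$; unit values contribute zero. List the points of $\mathcal C_{\psi,\phi}$ as $0=c_0<c_1<\dots<c_K<1$ and set $c_{K+1}:=1$. On each closed interval $[c_k,c_{k+1}]$ with $k\le K$, no nonunit $r_j$ of either state lies strictly inside. Hence each summand is either constant ($=1-r_j$ if $r_j\le c_k$) or equal to $1-c$ (if $r_j\ge c_{k+1}$). So on that interval $W_c(\psi)=\alpha_k+\beta_k(1-c)$, with $\alpha_k,\beta_k\ge0$ and $\beta_k=\#\{j:c_{k+1}\le r_j(\Gam_\psi)<1\}$, and likewise $W_c(\phi)=\alpha'_k+\beta'_k(1-c)$.

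\emph{Step 2 (positivity of the denominator).} Since $\phi$ is non-Gaussian and pure, it has some $r_j(\Gam_\phi)<1$, so $W_c(\phi)\ge\min\{1-r_j,1-c\}>0$ for every $c<1$. The ratio is therefore well defined and continuous on $[0,1)$.

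\emph{Step 3 (monotonicity on each piece).} Substitute $t=1-c$. On the piece, $g(t)=(\alpha+\beta t)/(\alpha'+\beta' t)$ has derivative with constant sign $\operatorname{sgn}(\beta\alpha'-\alpha\beta')$. So $g$ is monotone on the piece, and its infimum over the piece is attained at $c_k$ or is approached at $c_{k+1}$. For the interior pieces, $c_{k+1}\in\mathcal C_{\psi,\phi}$ and continuity gives attainment there.

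\emph{Step 4 (last piece $[c_K,1)$, the main obstacle).} Here the right endpoint $1$ is excluded, so I must show that the value at $c_K$ is no worse than the limit. On this piece every nonunit value of both states is $\le c_K$. Hence $\beta_K=\beta'_K=0$ and both numerator and denominator are constant, so the ratio is constant and equals its value at $c_K$. (In the other case the denominator might vanish; the non-Gaussianity of $\phi$ rules that out, because $\alpha'_K=W_0(\phi)>0$.) Combining the pieces, the infimum over $[0,1)$ equals the minimum over the finite set $\mathcal C_{\psi,\phi}$. That set has at most $2N+1$ points, counting $0$ and at most $N$ values from each spectrum. Proposition~\ref{prop:finite_opt} then follows from Corollary~\ref{cor:rate} restricted to the admissible generators $f_c$.
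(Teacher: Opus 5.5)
Your overall route is the paper's: $W_c$ is piecewise affine with breakpoints in $\mathcal C_{\psi,\phi}$, a ratio of affine functions is monotone on each piece, and the last piece $[c_K,1)$ is handled separately. However, Step~1 has the two cases of the minimum reversed. For $c\in[c_k,c_{k+1}]$ and $r_j\le c_k\le c$ one has $1-r_j\ge 1-c$, so the summand is $1-c$, not $1-r_j$. For $r_j\ge c_{k+1}\ge c$ the summand is the constant $1-r_j$.

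The correct form is the paper's
\[
W_c(\rho)=\sum_{r_j>c}(1-r_j)+(1-c)\,\#\{j:r_j\le c\},
\]
that is, $\alpha_k=\sum_{c_{k+1}\le r_j<1}(1-r_j)$ and $\beta_k=\#\{j:r_j(\Gam_\psi)\le c_k\}$. A quick check: for the pure state $|0\rangle|+\rangle$, with spectrum $(1,0)$, one gets $W_c=1-c$, whereas your formula would make it constant. Steps~2 and~3 survive this error, because they only use affinity on each piece and positivity of the denominator for $c<1$.

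Step~4, which you flag as the main obstacle, does not survive as written. On $[c_K,1)$ the correct coefficients are $\alpha_K=\alpha'_K=0$, $\beta_K=\nuG(\psi)$ and $\beta'_K=\nuG(\phi)$, so $W_c(\rho)=(1-c)\,\nuG(\rho)$. Numerator and denominator therefore both tend to zero as $c\to1^-$. They are not constant, and your claim that $\alpha'_K=W_0(\phi)>0$ keeps the denominator away from zero is false. The real subtlety of the last piece is precisely this $0/0$ behaviour.

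The repair is that both vanish at the same linear rate. The ratio therefore equals $\nuG(\psi)/\nuG(\phi)$ throughout $[c_K,1)$. It is well defined because a pure non-Gaussian $\phi$ has $\nuG(\phi)\ge1$; this, not $W_0(\phi)>0$, is what guarantees $W_c(\phi)>0$ for $c<1$ on this piece. The constant value is attained at $c_K$, exactly as in the paper's Eq.~\eqref{eq:sm-constant-tail}. With the case split corrected, your argument coincides with the paper's proof.
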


\begin{proof}
List the distinct elements of $\mathcal C_{\psi,\phi}$ as
$0=c_0<c_1<\cdots<c_m<1$ and set $c_{m+1}=1$. Write
$n_\rho(c)=\#\{j:r_j(\Gam_\rho)\leq c\}$ and
$A_\rho(c)=\sum_{r_j(\Gam_\rho)>c}[1-r_j(\Gam_\rho)]$, so that
$W_c(\rho)=A_\rho(c)+(1-c)n_\rho(c)$.

 On each interval $[c_i,c_{i+1})$, the quantities $A_\rho$ and $n_\rho$
are constant. Hence
\begin{equation}
g(c)
=
\frac{A_\psi+(1-c)n_\psi}
     {A_\phi+(1-c)n_\phi},
\qquad
g'(c)
=
\frac{n_\phi A_\psi-n_\psi A_\phi}
     {[A_\phi+(1-c)n_\phi]^2}.
\end{equation}
Since $\phi$ is non-Gaussian, $W_c(\phi)>0$ for every $c<1$, so the
denominator does not vanish. The derivative therefore has constant sign.
Although $A_\rho$ and $n_\rho$ may change at $c_{i+1}$, continuity of
$W_c(\rho)$ ensures that the limiting value of the ratio agrees with its
value at $c_{i+1}$. Its minimum on this
interval is thus attained at one of the endpoints.

It remains to consider the final interval $[c_m,1)$. For every
$c\in[c_m,1)$, all nonunit Williamson values of either state satisfy
$r_j\leq c$, so that $A_\rho(c)=0$ and
$n_\rho(c)=\nuG(\rho)$. Therefore
$W_c(\rho)=(1-c)\nuG(\rho)$ and
\begin{equation}
  \frac{W_c(\psi)}{W_c(\phi)}=\frac{\nuG(\psi)}{\nuG(\phi)}
  \qquad\text{for all }c\in[c_m,1).
  \label{eq:sm-constant-tail}
\end{equation}
The ratio is thus constant on the final interval and its value is already
attained at $c_m\in\mathcal C_{\psi,\phi}$. The infimum over $[0,1)$ is
therefore a minimum over $\mathcal C_{\psi,\phi}$, which contains at most
the point $0$ and at most $N$ nonunit Williamson values from each state.
\end{proof}

\begin{proof}[Proof of Proposition~\ref{prop:finite_opt}]
Let $f$ be admissible with $\Phi_f(\phi)>0$, and set
$\nu=b\,\delta_0+\mu$. By Eq.~\eqref{eq:sm-family-decomp}, for
$\rho\in\{\psi,\phi\}$ one has
$\Phi_f(\rho)=\int_{[0,1)}W_c(\rho)\,d\nu(c)$.

Since $\phi$ is non-Gaussian, one has
$W_c(\phi)>0$ for every $c<1$. We may therefore define the measure
$dP_f(c):=W_c(\phi)\,d\nu(c)/\Phi_f(\phi)$. Its total mass is
$\int_{[0,1)}dP_f(c)
=\int_{[0,1)}W_c(\phi)\,d\nu(c)/\Phi_f(\phi)=1$, so $P_f$ is a
probability measure.

Using
$W_c(\psi)
=[W_c(\psi)/W_c(\phi)]W_c(\phi)$, we obtain
$\Phi_f(\psi)/\Phi_f(\phi)
=\int_{[0,1)}[W_c(\psi)/W_c(\phi)]\,dP_f(c)$. Thus the ratio
$\Phi_f(\psi)/\Phi_f(\phi)$ is a probability-weighted average of the
pointwise ratios $W_c(\psi)/W_c(\phi)$. Such an average cannot be smaller
than the infimum of the quantities being averaged. Hence
\begin{equation}
  \frac{\Phi_f(\psi)}{\Phi_f(\phi)}
  \ \ge\ \inf_{c\in[0,1)}\frac{W_c(\psi)}{W_c(\phi)}
  \ =\ \min_{c\in\mathcal C_{\psi,\phi}}\frac{W_c(\psi)}{W_c(\phi)},
\end{equation}
the last equality following from Lemma~\ref{lem:threshold}. This gives
Eq.~\eqref{eq:finite-Williamson-bound}.
\end{proof}

\subsection{Irreversibility: proof of Theorem~\ref{thm:irrev}}

Let $\omega$ be a pure state on $N$ modes with $\Gam_\omega=0$ and $\chi$
any non-Gaussian pure state. In the main text, we have established
\begin{equation}
  R(\omega\to\chi)\,R(\chi\to\omega)
  \le\frac{W_0(\chi)}{\nuG(\chi)}
  \le1.
  \label{eq:sm-roundtrip}
\end{equation}
 The last
inequality is strict whenever $\chi$ carries a Williamson value in the
open interval $(0,1)$, which rules out reversible interconversion for
every such pair. Moreover, the ratio can be made arbitrarily small within
the definite-parity sector: the four-mode even-parity states
\begin{equation}
  |\chi_t\rangle=\cos t\,|0000\rangle+\sin t\,|1111\rangle,
  \qquad 0<t<\frac{\pi}{4},
  \label{eq:sm-chi-family}
\end{equation}
have all four Williamson values equal to $\cos2t<1$, so they are
non-Gaussian by Eq.~\eqref{eq:sm-faithfulness}, with $\nuG(\chi_t)=4$ and
$W_0(\chi_t)=4(1-\cos2t)$, whence
$W_0(\chi_t)/\nuG(\chi_t)=1-\cos2t\to0$ as $t\to0$. The product of the
two conversion rates therefore admits no positive lower bound, proving
Theorem~\ref{thm:irrev}.\hfill$\square$

\end{document}